\begin{document}
\newcommand{\fr}[2]{\frac{\;#1\;}{\;#2\;}}
\newtheorem{theorem}{Theorem}[section]
\newtheorem{lemma}{Lemma}[section]
\newtheorem{proposition}{Proposition}[section]
\newtheorem{corollary}{Corollary}[section]
\newtheorem{remark}{Remark}[section]
\newtheorem{definition}{Definition}[section]
\newtheorem{notation}{Notation}[section]
\newtheorem{example}{Example}[section]
\numberwithin{equation}{section}
\newcommand{\Aut}{\mathrm{Aut}\,}
\newcommand{\fun}{\mathrm{fun}\,}
\newcommand{\supp}{\mathrm{supp}\,}
\newcommand{\rk}{\mathrm{rk}\,}
\newcommand{\rank}{\mathrm{rank}\,}
\newcommand{\Mat}{\mathrm{Mat}\,}
\newcommand{\col}{\mathrm{col}\,}
\newcommand{\length}{\mathrm{length}\,}
\newcommand{\leftlen}{\mathrm{leftlen}\,}
\newcommand{\rightlen}{\mathrm{rightlen}\,}
\newcommand{\rightann}{\mathrm{rightann}\,}
\newcommand{\leftann}{\mathrm{leftann}\,}
\newcommand{\dom}{\mathrm{dom}\,}
\newcommand{\len}{\mathrm{len}\,}
\newcommand{\CSupp}{\mathrm{CSupp}\,}
\newcommand{\wt}{\mathrm{wt}\,}
\newcommand{\Rank}{\mathrm{Rank}\,}
\newcommand{\tr}{\mathrm{tr}\,}
\newcommand{\Rk}{\mathrm{Rk}\,}
\newcommand{\lcm}{\mathrm{lcm}\,}

\newcommand{\Hom}{\mathrm{Hom}\,}
\newcommand{\End}{\mathrm{End}\,}
\title{A Galois Connection Approach to Wei-Type Duality Theorems$^\ast$}
\author{Yang Xu$^1$ \,\,\,\,\,\, Haibin Kan$^2$\,\,\,\,\,\,Guangyue Han$^3$}
\maketitle

\renewcommand{\thefootnote}{\fnsymbol{footnote}}

% 单位、地址、基金
\footnotetext{\hspace*{-12mm} \begin{tabular}{@{}r@{}p{13.4cm}@{}}
$^\ast$ & This work has been submitted to the IEEE for possible publication. Copyright may be transferred without notice, after which this version may no longer be accessible.\\
$^1$ & Shanghai Key Laboratory of Intelligent Information Processing, School of Computer Science, Fudan University,
Shanghai 200433, China.\\
&Department of Mathematics, Faculty of Science, The University of Hong Kong, Pokfulam Road, Hong Kong, China. {E-mail:12110180008@fudan.edu.cn} \\
$^2$ & Shanghai Key Laboratory of Intelligent Information Processing, School of Computer Science, Fudan University,
Shanghai 200433, China. {E-mail:hbkan@fudan.edu.cn} \\
$^3$ & Department of Mathematics, Faculty of Science, The University of Hong Kong, Pokfulam Road, Hong Kong, China. {E-mail:ghan@hku.hk} \\

\end{tabular}}

\vskip 3mm

 {\hspace*{-6mm}\bf Abstract---}\! In $1991$, Wei proved a duality theorem that established an interesting connection between the generalized Hamming weights of a linear code and those of its dual code. Wei's duality theorem has since been extensively studied from different perspectives and extended to other settings. In this paper, we re-examine Wei's duality theorem and its various extensions, henceforth referred to as Wei-type duality theorems, from a new Galois connection perspective. Our approach is based on the observation that the generalized Hamming weights and the dimension/length profiles of a linear code form a Galois connection. The central result of this paper is a general Wei-type duality theorem for two Galois connections between finite subsets of $\mathbb{Z}$, from which all the known Wei-type duality theorems can be recovered. As corollaries of our central result, we prove new Wei-type duality theorems for $w$-demi-matroids defined over finite sets and $w$-demi-polymatroids defined over modules with a composition series, which further allows us to unify and generalize all the known Wei-type duality theorems established for codes endowed with various metrics.

\section{Introduction}

Throughout the paper, we let $\mathds{Z}$ and $\mathds{Z}^{+}$ denote the set of all the integers and all the positive integers, respectively. Furthermore, we let $\mathds{N}=\mathds{Z}^{+}\cup\{0\}$. For any $a,b\in\mathds{Z}$, we use $[a,b]$ to denote the set of all the integers between $a$ and $b$, i.e.,
$$[a,b]=\{i\mid i\in \mathds{Z},~a\leqslant i\leqslant b\}.$$
Note that if $a\geqslant b+1$, then $[a,b]=\emptyset$.

Motivated by applications from cryptography, in 1991, Wei proposed and studied generalized Hamming weights (GHWs) of linear codes. To be more specific, we let $\mathds{F}$ be a field and let $C$ be an $(m,k)$ linear code over $\mathds{F}$, i.e., $C$ is an $\mathds{F}$-subspace of $\mathds{F}^{m}$ with $\dim_{\mathds{F}}(C)=k$. Following [40, Section II], for any $r\in[0,k]$, \textit{the $r$-th generalized Hamming weight of $C$}, denoted by $\mathbf{d}_{r}(C)$, is defined as
\begin{equation}\mathbf{d}_{r}(C)=\min\{|\chi(D)|\mid \text{$D$ is an $(m,r)$ subcode of $C$}\},\end{equation}
where for any $D\subseteq \mathds{F}^{m}$,
\begin{equation}\chi(D)=\{i\mid i\in[1,m],~\alpha_{(i)}\neq0~\text{for some $\alpha\in D$}\}\end{equation}
denotes the set of not-always-zero bit positions of $D$. Note that the definition of GHWs applies to the $(m, m-k)$ linear code $C^{\bot}$ (the dual code of $C$) as well. It has been shown in [40, Appendix] that the GHWs of $C^{\bot}$ characterize both the performance of $C$ on the wire-tap channel of type II and the performance of $C$ as a $t$-resilient function (see \cite{10,34}). GHWs have been widely used to list decoding from erasures and to gauge the security performances of linear codes for secret sharing, secure network coding or distributed data storage. Relevant work for linear codes can be found in \cite{5,16,19,28}, some of which have been further extended to rank metric codes \cite{27,31}.

Wei has proved a duality theorem that establishes an interesting connection between the GHWs of $C$ and those of its dual code $C^{\bot}$. To be more specific, it has been proven in [40, Theorem 3] that the following two sets
\begin{equation} \label{Wei-Result}
\{\mathbf{d}_{r}(C)\mid r\in[1,k]\},~\{m+1-\mathbf{d}_{r}(C^{\bot})\mid r\in[1,m-k]\},
\end{equation}
form a partition of $[1,m]$, and consequently, the GHWs of $C$ and those of $C^{\bot}$ determine each other.

A closely related study can be found in Forney's $1994$ paper~\cite{16}. More specifically, following [16, Section 3], for any $l\in[0,m]$, the $l$-th dimension/length profile (DLP) of $C$, denoted by $\mathbf{K}_{l}(C)$, is defined as
\begin{equation}
\mathbf{K}_{l}(C)=\max\{\dim_{\mathbb{F}}(C\cap\delta(J))\mid J\subseteq[1,m],~|J|=l\},
\end{equation}
where for any $J\subseteq [1,m]$,
\begin{equation}\delta(J) =\{\alpha\mid \alpha \in \mathbb{F}^m~s.t.~\forall~i\in[1,m]-J,~\alpha_{(i)}=0\}\end{equation}
denotes the set of all the elements of $\mathbb{F}^m$ whose positions outside of $J$ are all zeros. Forney has shown that
\begin{equation} \label{Forney-Result}
\text{$\mathbf{K}_{l}(C^{\bot})=\mathbf{K}_{m-l}(C)+l-k$ for all $l\in[0,m]$},
\end{equation}
from which Wei's duality theorem can be recovered (see Theorems $2$, $3$ and $4$ of [16] for more details).

Wei's duality theorem has since been generalized and extended in a number of directions. For codes over finite rings, Ashikhmin proves \cite{2} a Wei-type duality theorem for linear codes over Galois rings. Horimoto and Shiromoto prove \cite{23} a Wei-type duality theorem for linear codes over finite chain rings. For codes with a poset metric (see \cite{9}, where ``poset'' is short for the term ``partially ordered set''), Wei-type duality theorems have been proven by Barg and Purkayastha in \cite{3} and by Moura and Firer in \cite{32}. For rank metric codes (see \cite{13,17,20,37}), Wei-type duality theorems have been proven by Ravagnani in \cite{36}, by Ducoat in \cite{15} and by Mart\'{\i}nez-Pe\~{n}as and Matsumoto in \cite{31}, each in terms of a similar yet different version of generalized rank weights. A Wei-type duality theorem for sum-rank metric codes has been established by Mart\'{\i}nez-Pe\~{n}as in \cite{30}.

Wei-type duality theorems have also been proven for some combinatorial notions. In \cite{6}, Britz, Johnsen, Mayhew and Shiromoto prove two Wei-type duality theorems for demi-matroids, and as consequences of these results, they further derive Wei-type theorems for matroids, graphs and transversals. In \cite{18}, Ghorpade and Johnsen prove Wei-type duality theorems for demi-polymatroids, a generalization of the $q$-analogue of a matroid defined over vector spaces (see \cite{21,26,38}), and as a consequence, they obtain a Wei-type duality theorem for rank metric code flags. A similar approach to \cite{18} for demi-polymatroids has been independently proposed by Britz, Mammomiti and Shiromoto in \cite{7}. Recently, Panja, Pratihar and Hajatiana Randrianarisoa have proved \cite{35} a Wei-type duality theorem for sum-matroids, which is another generalization of the $q$-analogue of a matroid.

In this paper, we will re-examine Wei's original duality theorem and more generally Wei-type duality theorems from a Galois connection perspective. Our starting point is the observation that the GHWs and DLPs of a linear code form a Galois connection, for which Wei's duality theorem holds; and moreover, Galois connections also arise naturally in similar or more general settings, where various Wei-type duality theorems hold.

In Section 2, we show that the GHWs and DLPs of a linear code form a Galois connection, and then we prove our main result Theorem 2.2, which is a Wei-type duality theorem for two Galois connections between finite subsets of $\mathds{Z}$. This result is of central importance in the sense that it implies all the known Wei-type duality theorems, and moreover, it can be applied to derive new Wei-type duality theorems, as detailed in later sections. In Section 3, as corollaries of Theorem 2.2, we establish Theorems 3.1 and 3.2, two bridging theorems that can facilitate the application of Theorem 2.2. Indeed, Theorems 3.1 and 3.2 can be used to recover known or derive new Wei-type duality theorems via simple substitution.

In Section 4, we prove Wei-type duality theorems for generalized weights and profiles of $w$-demi-matroids defined over finite sets (Theorems 4.1 and 4.2). These results generalize the corresponding results for demi-matroids in \cite{6}. In Section 5, we prove a Wei-type duality theorem for generalized weights and profiles of $w$-demi-polymatroids defined over modules with a composition series (Theorem 5.1). Our result generalizes the corresponding results for $(q,m)$-demi-polymatroids defined over vector spaces in [7, 18].

Sections 6 and 7 are devoted to generalized weights, profiles and Wei-type duality theorems for codes endowed with various metrics, more specifically, Gabidulin-Roth rank metric (\cite{15,17,27,37}), poset metric (\cite{9,32}), Delsarte rank metric (\cite{13,20,22,31,36}) and generalized Hamming weights with respect to rank of modules (\cite{23}). Following the framework of \cite{39}, we consider codes over modules with a composition series. We show that generalized weights and profiles of such codes can be rephrased in terms of the associated $w$-demi-matroids or $w$-demi-polymatroids; and furthermore, dual $w$-demi-matroids or $w$-demi-polymatroids are always associated to dual codes. Based on these observations, we prove Wei-type duality theorems for codes endowed with different metrics through a unified approach in Theorems 7.1--7.4, which respectively generalizes the corresponding existing Wei-type duality theorems.

\section{Main Result}

\setlength{\parindent}{2em}

\subsection{Basics on Galois connections}

\setlength{\parindent}{2em}
In this subsection, we collect some basic facts on Galois connections between finite subsets of $\mathds{Z}$ with respect to the order $\leqslant$. Here we note that the notion of Galois connection can be defined more generally for posets; see, e.g., [12, Definition 7.23] and [4, Page 124].

Throughout this subsection, we let $P$ and $Q$ be nonempty finite subsets of $\mathds{Z}$, and we begin by recalling the definition of Galois connection between $P$ and $Q$.

\setlength{\parindent}{0em}
\begin{definition}
{Given $\varphi:P\longrightarrow Q$ and $\psi:Q\longrightarrow P$, $(\varphi,\psi)$ is said to be a Galois connection between $P$ and $Q$ if the following two conditions hold:

{\bf{(1)}}\,\,Both $\varphi$ and $\psi$ preserve the order $\leqslant$;

{\bf{(2)}}\,\,For any $(a,b)\in P\times Q$, we have $a\leqslant\psi(b)\Longleftrightarrow\varphi(a)\leqslant b$.
}
\end{definition}

\setlength{\parindent}{2em}
The following basic facts on Galois connections will be used frequently.

\setlength{\parindent}{0em}
\begin{lemma}
{Let $(\varphi,\psi)$ be a Galois connection between $P$ and $Q$. Then, it holds that:

{\bf{(1)}}\,\,For any $\lambda\in P$, $\varphi(\lambda)=\min\{b\mid b\in Q,~\lambda\leqslant\psi(b)\}$;

{\bf{(2)}}\,\,For any $\mu\in Q$, $\psi(\mu)=\max\{a\mid a\in P,~\varphi(a)\leqslant\mu\}$;

{\bf{(3)}}\,\,Let $d_0=\min(Q)$. Then, $\varphi^{-1}[\{d_0\}]=\{a\mid a\in P,~a\leqslant\psi(d_0)\}$;

{\bf{(4)}}\,\,Let $d\in Q$ where $d\neq\min(Q)$, and let $v=\max\{b\mid b\in Q,~b\leqslant d-1\}$. \hspace*{5mm}\,\,Then, for any $a\in P$, we have $d=\varphi(a)\Longleftrightarrow\psi(v)+1\leqslant a\leqslant \psi(d)$.
}
\end{lemma}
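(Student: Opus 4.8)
The plan is to establish the four items essentially in the order listed, since each can be used in proving the next, and to rely throughout on the two defining properties of a Galois connection: monotonicity of $\varphi,\psi$ and the adjunction $a\leqslant\psi(b)\Longleftrightarrow\varphi(a)\leqslant b$. The first preliminary observation I would record is the pair of unit/counit inequalities $a\leqslant\psi(\varphi(a))$ for all $a\in P$ and $\varphi(\psi(b))\leqslant b$ for all $b\in Q$; these follow immediately from the adjunction by plugging in $b=\varphi(a)$ (resp. $a=\psi(b)$) and using reflexivity. With these in hand, item (1) is short: for any $\lambda\in P$, the element $\varphi(\lambda)$ lies in the set $\{b\in Q\mid\lambda\leqslant\psi(b)\}$ by the unit inequality, and if $b$ is any element of that set then the adjunction gives $\varphi(\lambda)\leqslant b$; hence $\varphi(\lambda)$ is the minimum. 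Item (2) is proved dually, using the counit inequality $\varphi(\psi(\mu))\leqslant\mu$ to show $\psi(\mu)$ belongs to $\{a\in P\mid\varphi(a)\leqslant\mu\}$, and the adjunction again to show it dominates every such $a$.

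For item (3), with $d_0=\min(Q)$: if $a\in P$ satisfies $\varphi(a)=d_0$ then $\varphi(a)\leqslant d_0$, so by the adjunction $a\leqslant\psi(d_0)$; conversely if $a\leqslant\psi(d_0)$ then $\varphi(a)\leqslant d_0$ by the adjunction, but $d_0=\min(Q)$ forces $\varphi(a)=d_0$. This gives $\varphi^{-1}[\{d_0\}]=\{a\in P\mid a\leqslant\psi(d_0)\}$ directly. Item (4) is the substantive one and is where I expect the only real care to be needed. Fix $d\in Q$ with $d\neq\min(Q)$ and set $v=\max\{b\in Q\mid b\leqslant d-1\}$, which exists because $Q$ is finite and $d$ is not its minimum; note $v<d$ and, crucially, that there is no element of $Q$ strictly between $v$ and $d$. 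Now for $a\in P$: the condition $\varphi(a)=d$ is equivalent to the conjunction $\varphi(a)\leqslant d$ and $\varphi(a)>v$ — the forward direction is clear, and for the reverse, $\varphi(a)\in Q$ together with $v<\varphi(a)\leqslant d$ and the gap property forces $\varphi(a)=d$. By the adjunction, $\varphi(a)\leqslant d\Longleftrightarrow a\leqslant\psi(d)$. For the other half I would translate $\varphi(a)>v$, i.e. $\varphi(a)\not\leqslant v$ (these coincide since $Q\subseteq\mathds{Z}$ is totally ordered), into $a\not\leqslant\psi(v)$ via the contrapositive of the adjunction, i.e. $a>\psi(v)$, i.e. $\psi(v)+1\leqslant a$ (again using that $P\subseteq\mathds{Z}$). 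Combining, $\varphi(a)=d\Longleftrightarrow\psi(v)+1\leqslant a\leqslant\psi(d)$, as claimed.

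The main obstacle, such as it is, lies in the two places where I pass between $\leqslant$ and $<$ (or between $x\not\leqslant y$ and $y<x$): these steps quietly use that $P$ and $Q$ are subsets of $\mathds{Z}$ and hence totally ordered, and that over $\mathds{Z}$ one has $a>\psi(v)\Longleftrightarrow a\geqslant\psi(v)+1$. In a general-poset formulation item (4) would need to be restated, but in the present setting these are legitimate and I would just flag them explicitly. One should also double-check the edge case where $\psi(v)+1>\psi(d)$, i.e. the interval $[\psi(v)+1,\psi(d)]$ is empty: this simply says no $a\in P$ has $\varphi(a)=d$, which is consistent — it happens exactly when $d$ is not in the image of $\varphi$ — so no separate argument is required. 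The whole proof is a sequence of one- or two-line deductions from the definition; the only thing to get right is the bookkeeping in item (4).
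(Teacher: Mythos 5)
Your proof is correct and is essentially the argument the paper has in mind: the paper simply cites [12, Proposition 7.31] for (1) and (2) and declares (3) and (4) a routine verification, whereas you carry out the standard unit/counit and adjunction manipulations in full, including the correct handling of the gap between $v$ and $d$ and the passage from $\varphi(a)\not\leqslant v$ to $\psi(v)+1\leqslant a$ using that $P,Q\subseteq\mathds{Z}$. No gaps.
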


\begin{proof}
(1) and (2) are special cases of [12, Proposition 7.31]. Moreover, (3) and (4) follow from (1) and (2) via a routine verification.
\end{proof}

\begin{remark}
We will show in Section 2.2 that the GHWs and DLPs of a linear code form a Galois connection. As a result, some relevant results on linear codes (e.g., [16, Section III], [19, Lemma 1], [40, Corollary A]) can be reformulated in terms of Galois connections and thereby can be proven by using Lemma 2.1. Moreover, as detailed in later sections, Galois connections naturally arise in other settings, and some relevant results on rank metric codes (e.g., [31, Proposition 14], [36, Theorem 42]) or on $(q,m)$-demi-polymatroids (e.g., [18, Lemma 12]) also follow from Lemma 2.1 after appropriate reformulation.
\end{remark}

\setlength{\parindent}{2em}
We end this subsection with the following lemma, whose proof is straightforward and thus omitted.

\setlength{\parindent}{0em}
\begin{lemma}
{For a (possibly infinite) set $X$, let $f:X\longrightarrow P$, $g:X\longrightarrow Q$ such that $\max(P)\in f[X]$, $\min(Q)\in g[X]$. Define $\varphi:P\longrightarrow Q$ as
$$\varphi(a)=\min\{g(u)\mid u\in X,~a\leqslant f(u)\},$$
and define $\psi:Q\longrightarrow P$ as
$$\psi(b)=\max\{f(u)\mid u\in X,~g(u)\leqslant b\}.$$
Then, $(\varphi,\psi)$ is a Galois connection between $P$ and $Q$.
}
\end{lemma}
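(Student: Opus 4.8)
The plan is to verify directly the two conditions in Definition 2.1, after first checking that $\varphi$ and $\psi$ are well-defined maps. For well-definedness, fix $a\in P$; since $\max(P)\in f[X]$, there is some $u\in X$ with $f(u)=\max(P)\geqslant a$, so the set $\{u\in X\mid a\leqslant f(u)\}$ is nonempty, and hence $\{g(u)\mid u\in X,~a\leqslant f(u)\}$ is a nonempty subset of the finite set $Q$ and possesses a minimum, which lies in $Q$. Dually, for $b\in Q$, the hypothesis $\min(Q)\in g[X]$ guarantees that $\{u\in X\mid g(u)\leqslant b\}$ is nonempty, so $\{f(u)\mid u\in X,~g(u)\leqslant b\}$ is a nonempty subset of the finite set $P$ with a maximum in $P$. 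Thus $\varphi:P\longrightarrow Q$ and $\psi:Q\longrightarrow P$ make sense. This is the only place where the finiteness of $P,Q$ together with the two hypotheses $\max(P)\in f[X]$, $\min(Q)\in g[X]$ are needed, and I regard it as the only point requiring care.

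Next I would check monotonicity (condition (1)). If $a_{1}\leqslant a_{2}$ in $P$, then $\{u\in X\mid a_{2}\leqslant f(u)\}\subseteq\{u\in X\mid a_{1}\leqslant f(u)\}$, and taking the minimum of $g$ over a smaller set yields a value that is at least as large, so $\varphi(a_{1})\leqslant\varphi(a_{2})$. Symmetrically, if $b_{1}\leqslant b_{2}$ in $Q$, then $\{u\in X\mid g(u)\leqslant b_{1}\}\subseteq\{u\in X\mid g(u)\leqslant b_{2}\}$, and taking the maximum of $f$ over a smaller set yields a value that is at most as large, so $\psi(b_{1})\leqslant\psi(b_{2})$. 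Hence both $\varphi$ and $\psi$ preserve $\leqslant$.

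Finally I would establish the adjunction (condition (2)) by proving the two implications separately, in each case extracting a witness from the defining min/max. Assume $\varphi(a)\leqslant b$; by definition of $\varphi(a)$ as a minimum there is $u^{\ast}\in X$ with $a\leqslant f(u^{\ast})$ and $g(u^{\ast})=\varphi(a)\leqslant b$, so $u^{\ast}$ lies among the elements over which $\psi(b)$ is taken as a maximum, giving $\psi(b)\geqslant f(u^{\ast})\geqslant a$. Conversely, assume $a\leqslant\psi(b)$; by definition of $\psi(b)$ as a maximum there is $u^{\ast\ast}\in X$ with $g(u^{\ast\ast})\leqslant b$ and $f(u^{\ast\ast})=\psi(b)\geqslant a$, so $u^{\ast\ast}$ lies among the elements over which $\varphi(a)$ is taken as a minimum, giving $\varphi(a)\leqslant g(u^{\ast\ast})\leqslant b$. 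Together with the well-definedness and monotonicity above, this shows $(\varphi,\psi)$ is a Galois connection between $P$ and $Q$. There is no substantive obstacle here: the argument is a routine unwinding of the definitions, and the only thing one must not forget is invoking the hypotheses $\max(P)\in f[X]$ and $\min(Q)\in g[X]$ to keep the relevant index sets nonempty.
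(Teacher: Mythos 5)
Your proof is correct and is exactly the routine verification the paper has in mind; the paper in fact omits the proof of this lemma as "straightforward," and your argument (well-definedness from the two image hypotheses plus finiteness of $P$ and $Q$, monotonicity from inclusion of index sets, and the adjunction via witnesses attaining the min/max) fills it in with no gaps.
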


\subsection{Galois connections arising from GHWs and DLPs}

\setlength{\parindent}{2em}
In this subsection, we show that the GHWs and DLPs of a linear code form a Galois connection, which is a key observation underpinning our treatment of Wei-type duality theorems. Recall that in Section 1, for an $(m,k)$ linear code $C$ over a field $\mathbb{F}$, $\mathbf{d}_{r}(C)$ denotes its $r$-th GHW and $\mathbf{K}_{l}(C)$ its $l$-th DLP.

\setlength{\parindent}{0em}
\begin{theorem} \label{observation}
Define $\varphi:[0,k]\longrightarrow[0,m]$ as $\varphi(r)=\mathbf{d}_{r}(C)$, and define $\psi:[0,m]\longrightarrow[0,k]$ as $\psi(l)=\mathbf{K}_{l}(C)$. Then, $(\varphi,\psi)$ is a Galois connection between $[0,k]$ and $[0,m]$.
\end{theorem}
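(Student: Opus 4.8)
The plan is to verify the two defining conditions of a Galois connection directly, but the most efficient route is to invoke Lemma 2.3, which manufactures a Galois connection out of a pair of maps on a common index set. Here the natural choice is $X=\{D\mid D\text{ is a subcode of }C\}$, the set of all $\mathbb{F}$-subspaces of $C$, with $f:X\longrightarrow[0,k]$ given by $f(D)=\dim_{\mathbb{F}}(D)$ and $g:X\longrightarrow[0,m]$ given by $g(D)=|\chi(D)|$. Then $\max([0,k])=k=\dim_{\mathbb{F}}(C)=f(C)\in f[X]$ and $\min([0,m])=0=|\chi(\{0\})|=g(\{0\})\in g[X]$, so the hypotheses of Lemma 2.3 are met.

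The heart of the argument is then to check that the maps $\varphi,\psi$ produced by Lemma 2.3 coincide with $r\mapsto\mathbf{d}_r(C)$ and $l\mapsto\mathbf{K}_l(C)$. For $\varphi$, Lemma 2.3 gives $\varphi(r)=\min\{|\chi(D)|\mid D\subseteq C,~r\leqslant\dim_{\mathbb{F}}(D)\}$, whereas the definition of the GHW is $\mathbf{d}_r(C)=\min\{|\chi(D)|\mid D\subseteq C,~\dim_{\mathbb{F}}(D)=r\}$. These agree because any subcode $D$ with $\dim_{\mathbb{F}}(D)\geqslant r$ contains an $r$-dimensional subcode $D'\subseteq D$, and then $\chi(D')\subseteq\chi(D)$, so restricting the minimum to exactly-$r$-dimensional subcodes does not change its value. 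For $\psi$, Lemma 2.3 gives $\psi(l)=\max\{\dim_{\mathbb{F}}(D)\mid D\subseteq C,~|\chi(D)|\leqslant l\}$; I would match this to $\mathbf{K}_l(C)=\max\{\dim_{\mathbb{F}}(C\cap\delta(J))\mid J\subseteq[1,m],~|J|=l\}$ by the standard correspondence between subcodes supported on few coordinates and shortenings $C\cap\delta(J)$: given $D\subseteq C$ with $|\chi(D)|\leqslant l$, choose $J\supseteq\chi(D)$ with $|J|=l$, whence $D\subseteq C\cap\delta(J)$ and $\dim_{\mathbb{F}}(D)\leqslant\dim_{\mathbb{F}}(C\cap\delta(J))$; conversely, for any $J$ with $|J|=l$ the subcode $D=C\cap\delta(J)$ satisfies $\chi(D)\subseteq J$, so $|\chi(D)|\leqslant l$ and $\dim_{\mathbb{F}}(C\cap\delta(J))$ is attained in the first maximum. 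Hence the two maxima are equal.

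The main obstacle — really the only nonroutine point — is the bookkeeping in these two identifications, in particular making sure that enlarging a support set $\chi(D)$ to one of size exactly $l$ is always possible (which needs $|\chi(D)|\leqslant l\leqslant m$, true by hypothesis) and that passing to a subspace can only shrink the support (immediate from the definition of $\chi$). Once $\varphi$ and $\psi$ are identified with the GHW and DLP maps, Lemma 2.3 delivers both the order-preservation and the adjunction property $r\leqslant\mathbf{K}_l(C)\Longleftrightarrow\mathbf{d}_r(C)\leqslant l$ for free, completing the proof. I would also remark in passing that order-preservation of $\varphi$ and $\psi$ can alternatively be seen directly (larger $r$ forces at least as large a minimal support; larger $l$ allows at least as large a shortening), which serves as a sanity check.
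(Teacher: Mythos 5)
Your proof is correct, and it rests on the same key device as the paper's: the construction lemma that turns a pair of maps $f:X\to P$, $g:X\to Q$ into a Galois connection (this is Lemma 2.2 in the paper, not Lemma 2.3 as you cite it --- Lemma 2.3 is the statement about $\eta(l)=\psi(m-l)+wl-k$). The genuine difference is your choice of the index set $X$. The paper takes $X=2^{[1,m]}$ with $f(J)=\dim_{\mathbb F}(C\cap\delta(J))$ and $g(J)=|J|$; with that choice the identification $\psi(l)=\mathbf K_l(C)$ is immediate (one only replaces $|J|=l$ by $|J|\leqslant l$ using monotonicity of $J\mapsto\dim(C\cap\delta(J))$), while the identification $\varphi(r)=\mathbf d_r(C)$ is outsourced to [40, Theorem 2], which re-expresses the GHWs in terms of shortenings. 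You instead take $X$ to be the set of subcodes of $C$ with $f(D)=\dim_{\mathbb F}(D)$ and $g(D)=|\chi(D)|$; then the GHW identification is the easy one (replace $\dim D=r$ by $\dim D\geqslant r$ via passing to an $r$-dimensional subspace, using $\chi(D')\subseteq\chi(D)$), and the work shifts to the DLP identification, where your support/shortening correspondence ($D\subseteq C\cap\delta(J)$ for any $J\supseteq\chi(D)$ with $|J|=l$, and conversely $\chi(C\cap\delta(J))\subseteq J$) is exactly the content of Wei's reformulation, carried out by hand. The net effect is that your argument is self-contained where the paper cites [40], at the cost of a slightly longer verification; both routes check the hypotheses $k\in f[X]$ and $0\in g[X]$ correctly, and both deliver the adjunction $r\leqslant\mathbf K_l(C)\Longleftrightarrow\mathbf d_r(C)\leqslant l$ for free from the lemma.
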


\begin{proof}
By [40, Theorem 2], for any $r\in[0,k]$, we have
$$\varphi(r)=\mathbf{d}_{r}(C)=\min\{|J|\mid J\subseteq[1,m],~r\leqslant\dim_{\mathds{F}}(C\cap\delta(J))\},$$
where for any $J\subseteq [1,m]$, $\delta(J)$ is defined as in (1.5). Moreover, it is straightforward to verify that for any $l\in[0,m]$, it holds that
$$\psi(l)=\mathbf{K}_{l}(C)=\max\{\dim_{\mathds{F}}(C\cap\delta(J))\mid J\subseteq[1,m],~|J|\leqslant l\}.$$
Now applying Lemma 2.2 with $f:2^{[1,m]}\longrightarrow[0,k]$ set to be $f(J)=\dim_{\mathds{F}}(C\cap\delta(J))$, and $g:2^{[1,m]}\longrightarrow[0,m]$ set to be $g(J)=|J|$, we conclude that $(\varphi,\psi)$ is a Galois connection between $[0,k]$ and $[0,m]$.
\end{proof}

\subsection{The central theorem}

\setlength{\parindent}{2em}
Throughout this subsection, we let $(k,m)\in\mathds{N}^{2}$, $w\in\mathds{Z}^{+}$.

\setlength{\parindent}{0em}
\begin{lemma}
{Let $(\varphi,\psi)$ be a Galois connection between $[0,k]$ and $[0,m]$ such that $\psi(0)=0$ and $\psi(l)-\psi(l-1)\leqslant w$ for any $l\in[1,m]$. Define $\eta:[0,m]\longrightarrow \mathds{Z}$ as $\eta(l)=\psi(m-l)+wl-k$. Then, we have:

{\bf{(1)}}\,\,$\eta(0)=0$, $\eta(m)=wm-k$;

{\bf{(2)}}\,\,For any $l\in[1,m]$, $0\leqslant\eta(l)-\eta(l-1)\leqslant w$;

{\bf{(3)}}\,\,There exists $\tau:[0,wm-k]\longrightarrow[0,m]$ such that $(\tau,\eta)$ is a Galois connection between $[0,wm-k]$ and $[0,m]$. Moreover, for any $u\in[0,k]$, $v\in[0,wm-k]$ with $\varphi(u)+\tau(v)=m+1$, it holds true that $u\not\equiv v+k~(\bmod~w)$.
}
\end{lemma}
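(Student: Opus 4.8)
The plan is to verify the three claims in order, using Lemma 2.1 and Lemma 2.2 as the main tools. Throughout, I will repeatedly use the description of $\psi$ from Lemma 2.1(2): since $(\varphi,\psi)$ is a Galois connection between $[0,k]$ and $[0,m]$, for each $\mu\in[0,m]$ we have $\psi(\mu)=\max\{a\in[0,k]\mid \varphi(a)\leqslant\mu\}$, and dually $\varphi(\lambda)=\min\{b\in[0,m]\mid \lambda\leqslant\psi(b)\}$. I will also need the elementary fact that a Galois connection $(\varphi,\psi)$ forces $\psi$ to be nondecreasing with $\psi(m)=\max(\text{image of }\psi)$; combined with the hypothesis $\psi(l)-\psi(l-1)\leqslant w$, a telescoping argument gives $\psi(m)\leqslant wm$, though in fact we will want $\psi(m)=k$ (which holds because $\varphi(k)\leqslant m$ always, forcing $\psi(m)\geqslant k$, and $\psi$ takes values in $[0,k]$). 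That identity $\psi(m)=k$ is what makes part (1) work.

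For part (1): $\eta(0)=\psi(m)+0-k=k-k=0$ and $\eta(m)=\psi(0)+wm-k=0+wm-k=wm-k$, using $\psi(0)=0$ and $\psi(m)=k$. For part (2): $\eta(l)-\eta(l-1)=\bigl(\psi(m-l)-\psi(m-l+1)\bigr)+w$. Since $\psi$ is nondecreasing the first term is $\leqslant 0$, giving $\eta(l)-\eta(l-1)\leqslant w$; since $\psi(m-l+1)-\psi(m-l)\leqslant w$ by hypothesis, the first term is $\geqslant -w$, giving $\eta(l)-\eta(l-1)\geqslant 0$. In particular $\eta$ is nondecreasing from $0$ to $wm-k$, hence preserves $\leqslant$ and is surjective onto its endpoints' range in the weak sense needed below.

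For part (3): to produce $\tau$ with $(\tau,\eta)$ a Galois connection between $[0,wm-k]$ and $[0,m]$, I will invoke Lemma 2.2 with $X=[0,m]$, taking $g=\eta:[0,m]\to[0,wm-k]$ (legitimate since $\eta$ is nondecreasing with $\eta(0)=0=\min$, $\eta(m)=wm-k=\max$, so $\min(Q)\in g[X]$ and also $\max(P')\in g[X]$ where here the roles are: $P'=[0,wm-k]$ is the ``$Q$'' of Lemma 2.2 read one way) — more precisely, I will set up Lemma 2.2 so that the constructed pair of maps is exactly $(\tau,\eta)$ with $\tau(v)=\max\{l\in[0,m]\mid \eta(l)\leqslant v\}$, $\eta(l)=\min\{v\mid l\leqslant\tau^{\sharp}(v)\}$ — and then check these two maps coincide with the desired $\tau$ and the given $\eta$ using that $\eta$ is already monotone and hits both endpoints, so it equals $v\mapsto\min\{\,\cdot\,\}$ of its own ``upper adjoint.'' The cleanest route is: define $\tau(v)=\max\{l\in[0,m]\mid\eta(l)\leqslant v\}$ directly, then verify conditions (1) and (2) of Definition 2.1 by hand from monotonicity of $\eta$ (condition (2), $v\leqslant\tau(l')\Leftrightarrow\ldots$, is the standard adjunction identity and follows because $\eta$ is order-preserving and surjective onto an interval). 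For the final congruence assertion: suppose $u\in[0,k]$, $v\in[0,wm-k]$ satisfy $\varphi(u)+\tau(v)=m+1$. Writing $j=\varphi(u)$, so $\tau(v)=m+1-j$, I will use Lemma 2.1(4) applied to the Galois connection $(\varphi,\psi)$ at the value $j$ to pin down $u$ to the range $\psi(v')+1\leqslant u\leqslant\psi(j)$ where $v'=\max\{b\leqslant j-1\}$ — i.e.\ $v'=j-1$ since $Q=[0,m]$ is a full interval — so $\psi(j-1)+1\leqslant u\leqslant\psi(j)$; and dually Lemma 2.1(4) applied to $(\tau,\eta)$ at the value $v$ (when $v\neq 0$) to locate $\tau(v)$, or more usefully, use the adjunction to get $\eta(m+1-j)\leqslant v$ but $\eta(m-j)\geqslant v+1$ is false in general — instead $\eta(m-j+1)>v$ when $\tau(v)=m-j+1<m$, giving $\psi(j-1)+w(m-j+1)-k>v$, i.e.\ $v<\psi(j-1)+w(m-j+1)-k$, while $\eta(m-j+1)\leqslant$? no: $\tau(v)=m-j+1$ means $\eta(m-j+1)\leqslant v<\eta(m-j+2)$, i.e.\ $\psi(j-1)+w(m-j+1)-k\leqslant v\leqslant\psi(j-2)+w(m-j+2)-k-1$. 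Combining $\psi(j-1)+1\leqslant u\leqslant\psi(j)$ with $v\geqslant\psi(j-1)+w(m-j+1)-k$, one computes $v+k-u\geqslant w(m-j+1)-1-(\psi(j)-\psi(j-1))\geqslant w(m-j+1)-1-w=w(m-j)-1$, and from the upper bounds $v+k-u\leqslant \psi(j-2)+w(m-j+2)-1-1-\psi(j-1)-1$... I will arrange these two-sided estimates so that $v+k-u$ is squeezed into an interval of length $<w$ that contains no multiple of $w$ — concretely, showing $w(m-j)\leqslant v+k-u$ is impossible simultaneously with $v+k-u\leqslant w(m-j+1)-1$ unless... The honest statement: the interval forces $v+k-u\pmod w$ to lie strictly between $0$ and $w$, hence $u\not\equiv v+k\pmod w$. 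The main obstacle, and the step deserving the most care, is exactly this last arithmetic: correctly handling the boundary cases $j\in\{0,m+1\}$ (equivalently $\tau(v)\in\{0,m\}$, equivalently $v\in\{0,wm-k\}$) where Lemma 2.1(4) does not directly apply and one must argue separately that no valid $(u,v)$ exists or that the congruence holds trivially, and then assembling the two pairs of inequalities coming from Lemma 2.1(4) on each side into a clean non-divisibility conclusion.
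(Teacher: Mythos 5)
Your parts (1) and (2) are fine (the paper leaves them as straightforward computation), and your handling of the $(\varphi,\psi)$ side of the congruence --- pinning $u$ into $\psi(\varphi(u)-1)+1\leqslant u\leqslant\psi(\varphi(u))$ via Lemma 2.1(4), with $\varphi(u)\in[1,m]$ automatic because $\varphi(u)+\tau(v)=m+1$ forces both summands into $[1,m]$ --- is exactly right, and also disposes of the boundary cases you were worried about. The genuine gap is your choice of $\tau$. You define $\tau(v)=\max\{l\in[0,m]\mid\eta(l)\leqslant v\}$, the \emph{upper} adjoint of $\eta$; but Definition 2.1 requires the first component of a Galois connection to satisfy $a\leqslant\eta(b)\Longleftrightarrow\tau(a)\leqslant b$, and by Lemma 2.1(1) the only admissible choice is the lower adjoint $\tau(v)=\min\{b\in[0,m]\mid v\leqslant\eta(b)\}$, which is what the paper takes. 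The two differ whenever $\eta$ is not injective. Concretely, with $m=2$, $w=1$, $k=1$, $\psi=(0,1,1)$ one gets $\eta=(0,1,1)$, so $\tau_{\max}(1)=2$ while $\tau_{\min}(1)=1$; your $\tau_{\max}$ violates condition (2) of Definition 2.1 (take $a=1$, $b=0$), and the ``moreover'' clause is outright false for it, since $\varphi(1)+\tau_{\max}(1)=3=m+1$ while every congruence mod $w=1$ holds. Relatedly, your invocation of Lemma 2.2 with $g=\eta$ produces a Galois connection between $[0,m]$ and $[0,wm-k]$ with $\eta$ as the lower adjoint --- the wrong orientation; to get $(\tau,\eta)$ you should take $X=[0,m]$, $f=\eta$, $g=\mathrm{id}$.

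This is also exactly why your closing arithmetic refuses to close: from $\tau_{\max}(v)=m+1-\varphi(u)$ you only get $\eta(m+1-\varphi(u))\leqslant v<\eta(m+2-\varphi(u))$, hence $v+k-u\geqslant w(m-\varphi(u))$, which fails to exclude the multiple $w(m-\varphi(u))$ --- the ``$-1$'' slack you computed is a symptom, not a technicality. With the correct $\tau$, Lemma 2.1(4) applied to $(\tau,\eta)$ at $d=\tau(v)=m+1-\varphi(u)$ gives $\eta(m-\varphi(u))+1\leqslant v\leqslant\eta(m+1-\varphi(u))$, i.e. $\psi(\varphi(u))+w(m-\varphi(u))-k+1\leqslant v\leqslant\psi(\varphi(u)-1)+w(m+1-\varphi(u))-k$; combining with $\psi(\varphi(u)-1)\leqslant u-1$ and $u\leqslant\psi(\varphi(u))$ squeezes $v+k-u-w(m-\varphi(u))$ into $[1,w-1]$, which is the paper's argument and does finish the proof.
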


\begin{proof}
(1) and (2) follow from straightforward computation, and so we only prove (3). Define $\tau:[0,wm-k]\longrightarrow[0,m]$ as
$$\tau(a)=\min\{b\mid b\in[0,m],a\leqslant\eta(b)\}.$$
Then, it can be readily verified that $\tau$ is well defined, and $(\tau,\eta)$ is a Galois connection between $[0,wm-k]$ and $[0,m]$. Now for $u\in[0,k]$, $v\in[0,wm-k]$ with $\varphi(u)+\tau(v)=m+1$, note that $\tau(v)=m+1-\varphi(u)$, $\varphi(u)\in[1,m]$, we apply (4) of Lemma 2.1 to $(\tau,\eta)$ and $(v,m+1-\varphi(u))$ and reach
$$\eta(m-\varphi(u))+1\leqslant v \leqslant \eta(m+1-\varphi(u)),$$
which, together with the definition of $\eta$, further implies that
$$\psi(\varphi(u))+w(m-\varphi(u))-k+1\leqslant v\leqslant\psi(\varphi(u)-1)+w(m+1-\varphi(u))-k.$$
Again by Lemma 2.1, we have $u\leqslant\psi(\varphi(u))$ and $\psi(\varphi(u)-1)\leqslant u-1$. It then follows that
$$u+w(m-\varphi(u))-k+1\leqslant v\leqslant u-1+w(m+1-\varphi(u))-k,$$
which further implies that
\begin{equation}1\leqslant v+k-u-w(m-\varphi(u))\leqslant w-1.\end{equation}
By (2.1), we conclude that $u\not\equiv v+k~(\bmod~w)$, completing the proof.
\end{proof}

\setlength{\parindent}{2em}
\begin{remark}
Lemma 2.3 is largely inspired by [36, Theorem 37], and similar results for rank metric codes and for $(q,m)$-demi-polymatroids have also been established in [31, Lemma 66] and [18, Theorem 15], respectively.
\end{remark}

\setlength{\parindent}{2em}
The following proposition ought to be known, however we are not able to locate a reference and hence a proof is included for completeness.

\setlength{\parindent}{0em}
\begin{proposition}
{Suppose that $(\varphi,\psi)$ is a Galois connection between $[0,k]$ and $[0,m]$. Then, the following three statements are equivalent to each other:

{\bf{(1)}}\,\,$\psi(l)-\psi(l-1)\leqslant w$ for all $l\in[1,m]$;

{\bf{(2)}}\,\,$|\varphi^{-1}[\{l\}]|\leqslant w$ for all $l\in[1,m]$;

{\bf{(3)}}\,\,$\varphi(r)+1\leqslant \max\{\varphi(r+w),1\}$ for all $r\in[0,k-w]$.

Moreover, if $\psi(0)=0$, then $\varphi(a)\in[1,m]$ for all $a\in[1,k]$, and (1)--(3) are equivalent to the following

{\bf{(4)}}\,\,$\varphi(r)+1\leqslant \varphi(r+w)$ for all $r\in[0,k-w]$.
}
\end{proposition}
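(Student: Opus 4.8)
The plan is to prove the cyclic chain of implications $(1)\Rightarrow(2)\Rightarrow(3)\Rightarrow(1)$, all of which are equivalences of purely order-theoretic properties of the Galois connection $(\varphi,\psi)$, and then to handle the addendum about $\psi(0)=0$ separately. Throughout I would lean heavily on the two ``adjointness'' identities from Lemma 2.1(1)--(2), namely $\varphi(\lambda)=\min\{b\in[0,m]\mid \lambda\leqslant\psi(b)\}$ and $\psi(\mu)=\max\{a\in[0,k]\mid \varphi(a)\leqslant\mu\}$, together with the standard unit $a\leqslant\psi(\varphi(a))$ and counit $\varphi(\psi(b))\leqslant b$ relations that follow from condition (2) of Definition 2.1.

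For $(1)\Leftrightarrow(2)$, the key observation is that $\varphi^{-1}[\{l\}]$ is an interval of integers in $[0,k]$: by Lemma 2.1(4) (or directly by monotonicity of $\varphi$), if $l\neq\min$-value issues are set aside, the fibre equals $[\psi(v)+1,\psi(l)]$ where $v=\max\{b\in[0,m]\mid b\leqslant l-1\}=l-1$ since $[0,m]$ is an integer interval, so $|\varphi^{-1}[\{l\}]|=\psi(l)-\psi(l-1)$ whenever the fibre is nonempty (and is $0\leqslant w$ otherwise). One has to be slightly careful that $\varphi^{-1}[\{l\}]$ might be empty for some $l\in[1,m]$, in which case both sides behave harmlessly; the precise bookkeeping is that $\sum_{l}|\varphi^{-1}[\{l\}]|$ telescopes correctly because $\psi$ is nondecreasing. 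For $(2)\Leftrightarrow(3)$, I would argue that $\varphi(r)+1\leqslant\max\{\varphi(r+w),1\}$ says exactly that no value in $[1,m]$ is attained by $\varphi$ on more than $w$ consecutive integers: if $\varphi(r)=\varphi(r+w)=l\geqslant 1$ then the $w+1$ integers $r,r+1,\dots,r+w$ all lie in the fibre $\varphi^{-1}[\{l\}]$ (using monotonicity to sandwich the intermediate values), contradicting (2); conversely if some fibre $\varphi^{-1}[\{l\}]$ had size $\geqslant w+1$ it would contain $w+1$ consecutive integers $r,\dots,r+w$, and then $\varphi(r)=\varphi(r+w)=l\geqslant 1$ forces $\max\{\varphi(r+w),1\}=l=\varphi(r)<\varphi(r)+1$, violating (3). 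The $\max$ with $1$ is just there to absorb the degenerate case $\varphi(r+w)=0$, which can only happen when $\varphi(r)=0$ as well.

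For the addendum, assuming $\psi(0)=0$: first, $\varphi(a)\geqslant 1$ for $a\geqslant 1$ follows from Lemma 2.1(1), since $a\leqslant\psi(0)=0$ is false, so $0$ is not in the set $\{b\mid a\leqslant\psi(b)\}$ whose minimum is $\varphi(a)$. Once we know $\varphi$ never takes the value $0$ on $[1,k]$, and $\varphi(0)=0$ (which itself follows from $\psi(0)=0$ and monotonicity, as $\varphi(0)\leqslant\varphi(\psi(0))\leqslant 0$), the term $\max\{\varphi(r+w),1\}$ in (3) simplifies: if $r\geqslant 1$ then $r+w\geqslant 1$ so $\varphi(r+w)\geqslant 1$ and the $\max$ is just $\varphi(r+w)$; if $r=0$ then (3) reads $\varphi(0)+1=1\leqslant\max\{\varphi(w),1\}$, which is automatic, and (4) reads $1\leqslant\varphi(w)$, which holds since $w\geqslant 1$. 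Hence (3) and (4) say the same thing, giving $(3)\Leftrightarrow(4)$ under the extra hypothesis.

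I do not expect a serious obstacle here — everything reduces to careful manipulation of the adjunction identities and monotonicity — but the step most prone to slips is the handling of \emph{empty fibres} and the \emph{boundary value} $l=\min(Q)=0$ throughout $(1)\Leftrightarrow(2)$: Lemma 2.1(4) is stated only for $d\neq\min(Q)$, and Lemma 2.1(3) separately describes $\varphi^{-1}[\{0\}]$, so I would treat $l=0$ via Lemma 2.1(3) and $l\in[1,m]$ via Lemma 2.1(4), and I would state once and for all that $\varphi$ nondecreasing plus $\varphi^{-1}[\{l\}]\neq\emptyset$ implies $\varphi^{-1}[\{l\}]$ is the integer interval $[\psi(l-1)+1,\psi(l)]$, which is the computational heart of the argument. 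The only other point requiring a line of care is verifying the claims ``$\varphi(0)=0$'' and ``$\varphi(a)\in[1,m]$'' really do use $\psi(0)=0$ and are not free, so that the logical structure of the ``Moreover'' clause is honest.
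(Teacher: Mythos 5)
Your proposal is correct, and it reaches the conclusion by a genuinely different route for the part involving statement (3). For $(1)\Leftrightarrow(2)$ you and the paper do the same thing: Lemma 2.1(4) identifies the fibre $\varphi^{-1}[\{l\}]$ with the integer interval $[\psi(l-1)+1,\psi(l)]$, so $|\varphi^{-1}[\{l\}]|=\psi(l)-\psi(l-1)$ (your worry about empty fibres is harmless but unnecessary, since both sides are then $0$). The divergence is in how (3) is tied in: the paper proves $(1)\Leftrightarrow(3)$ directly from the adjunction, using the unit inequality $r+w\leqslant\psi(\varphi(r+w))$ together with hypothesis (1) to get $r\leqslant\psi(\varphi(r+w)-1)$ and hence $\varphi(r)\leqslant\varphi(r+w)-1$ for one direction, and the counit inequality $\varphi(\psi(l))\leqslant l$ evaluated at $r=\psi(l)-w$ for the other. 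You instead prove $(2)\Leftrightarrow(3)$ by a purely combinatorial reading: the negation of (3) is exactly the existence of $w+1$ consecutive integers $r,\dots,r+w$ on which $\varphi$ is constant with value in $[1,m]$, and since monotonicity forces each nonempty fibre to be an interval, this is equivalent to some fibre over $[1,m]$ having at least $w+1$ elements. Both arguments are sound; the paper's is a short computation with the adjunction identities, while yours needs only the monotonicity of $\varphi$ for that step and makes the combinatorial meaning of (3) more transparent. Your treatment of the ``Moreover'' clause (deducing $\varphi(a)\geqslant 1$ for $a\geqslant 1$ from $\psi(0)=0$ via Lemma 2.1(1), then observing that $r+w\geqslant 1$ makes the $\max$ in (3) vacuous) matches the paper's appeal to Lemma 2.1(3) in substance.
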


\begin{proof}
By (4) of Lemma 2.1, we have $|\varphi^{-1}[\{l\}]|=\psi(l)-\psi(l-1)$ for all $l\in[1,m]$, which immediately implies $(1)\Longleftrightarrow(2)$. Next, we prove $(1)\Longleftrightarrow(3)$.

$(1)\Longrightarrow(3)$ Consider $r\in[0,k-w]$. If $\varphi(r+w)\leqslant0$, then $\varphi(r)+1\leqslant\varphi(r+w)+1\leqslant1$, which implies (3), as desired. Therefore in the following, we assume that $1\leqslant\varphi(r+w)$. By (1), we have $\psi(\varphi(r+w))-\psi(\varphi(r+w)-1)\leqslant w$. By (1) of Lemma 2.1, we deduce that $r+w\leqslant\psi(\varphi(r+w))$, which further implies that $r\leqslant\psi(\varphi(r+w)-1)$. Now, applying Definition 2.1 to $(r,\varphi(r+w)-1)$, we have $\varphi(r)\leqslant\varphi(r+w)-1$, which again implies (3).

$(3)\Longrightarrow(1)$ Consider $l\in[1,m]$. If $\psi(l)\leqslant w$, then by $0\leqslant\psi(l-1)$, we deduce that $\psi(l)-\psi(l-1)\leqslant w$, which implies (1), as desired. Therefore in the following, we assume that $\psi(l)\geqslant w$. It then follows that $\psi(l)-w\in[0,k-w]$, which, together with (3), implies that $\varphi(\psi(l)-w)+1\leqslant\max\{\varphi(\psi(l)),1\}$. By (2) of Lemma 2.1, we have $\varphi(\psi(l))\leqslant l$, which, together with $l\geqslant1$, implies that $\varphi(\psi(l)-w)+1\leqslant l$, and hence $\varphi(\psi(l)-w)\leqslant l-1$. Applying Definition 2.1 to $(\psi(l)-w,l-1)$, we conclude that $\psi(l)-w\leqslant\psi(l-1)$, which again implies (1).

\hspace*{6mm}The remainder of the proposition follows from (3) of Lemma 2.1 and the proven fact that $(1)\Longleftrightarrow(3)$.
\end{proof}

\setlength{\parindent}{2em}
The proof of the following lemma is straightforward and thus omitted.

\setlength{\parindent}{0em}
\begin{lemma}
{Assume that $wm\geqslant k$. For any $\gamma\in\mathbb{Z}$, define the sets $\mathcal{U}_{(\gamma)}$ and $\mathcal{V}_{(\gamma)}$ as
$$\mathcal{U}_{(\gamma)}=\{u\mid u\in[1,k],u\equiv \gamma+k~(\bmod ~w)\},$$
$$\mathcal{V}_{(\gamma)}=\{v\mid v\in[1,wm-k],v\equiv \gamma~(\bmod ~w)\}.$$
Then, for any $\gamma\in\mathbb{Z}$, we have $|\mathcal{U}_{(\gamma)}|+|\mathcal{V}_{(\gamma)}|=m$.
}
\end{lemma}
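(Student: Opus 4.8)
The plan is to count, for each fixed residue class $\gamma$ modulo $w$, how many elements of $[1,k]$ lie in $\mathcal{U}_{(\gamma)}$ and how many elements of $[1,wm-k]$ lie in $\mathcal{V}_{(\gamma)}$, and then verify that these two counts always sum to $m$. The key point is that counting integers in an interval $[1,N]$ congruent to a fixed residue modulo $w$ is an elementary floor-function computation, and the two intervals $[1,k]$ and $[1,wm-k]$ have lengths that add up to $wm$, a multiple of $w$; this is what forces the answer to be exactly $m=wm/w$ regardless of $\gamma$.

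First I would record the elementary fact that for any $N\in\mathds{N}$ and any integer $c$, the number of integers $u\in[1,N]$ with $u\equiv c\pmod{w}$ equals $\lfloor (N - c')/w\rfloor + 1$ where $c'\in[1,w]$ is the unique representative of $c$ modulo $w$ in that range (or, more symmetrically, one can write it as $\lfloor N/w\rfloor$ or $\lceil N/w\rceil$ depending on where $c'$ falls relative to $N \bmod w$). For $\mathcal{U}_{(\gamma)}$ we apply this with $N=k$ and $c=\gamma+k$; note $\gamma+k\equiv\gamma\pmod{w}$ is not quite right — rather $u\equiv\gamma+k$, so the relevant residue is that of $\gamma+k$. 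For $\mathcal{V}_{(\gamma)}$ we apply it with $N=wm-k$ and $c=\gamma$, whose residue is that of $\gamma$. The crucial observation is that $u\equiv\gamma+k\pmod w$ is equivalent to $k-u\equiv-\gamma\pmod w$, i.e. $(k-u)$ runs over the residue class $-\gamma$; hence the map $u\mapsto k-u$ is a bijection from $\mathcal{U}_{(\gamma)}$ to $\{t\mid t\in[0,k-1],\ t\equiv -\gamma\pmod w\}$, and similarly one can shift $\mathcal{V}_{(\gamma)}$.

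The cleanest way to finish is to exhibit a direct bijection rather than compare floor functions. I would argue as follows: the set $\{x\mid x\in[0,wm-1],\ x\equiv -\gamma\pmod w\}$ has exactly $m$ elements, since $[0,wm-1]$ is a union of $m$ consecutive complete blocks of length $w$, each contributing exactly one element of any fixed residue class. Now split this set at the point $k$: the part with $x\in[0,k-1]$ is in bijection with $\mathcal{U}_{(\gamma)}$ via $x\mapsto k-x$ (using $k-x\in[1,k]$ and $k-x\equiv \gamma+k\pmod w$ iff $x\equiv-\gamma$, wait — one should double-check: $k-x\equiv\gamma+k$ iff $-x\equiv\gamma$ iff $x\equiv-\gamma$, correct), and the part with $x\in[k,wm-1]$ is in bijection with $\mathcal{V}_{(\gamma)}$ via $x\mapsto x-k+1$ (mapping onto $[1,wm-k]$, and $x-k+1\equiv\gamma$ iff $x\equiv\gamma+k-1$ — this residue bookkeeping needs care, so I would instead use the shift $x\mapsto x-k$ onto $[0,wm-k-1]$ and then handle the off-by-one, or simply define $\mathcal{V}$'s bijection from $\{x\in[k+1,wm]:x\equiv\ldots\}$). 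Summing the two pieces gives $|\mathcal{U}_{(\gamma)}|+|\mathcal{V}_{(\gamma)}|=m$.

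The main obstacle is purely bookkeeping: getting the endpoints of the intervals and the residue shifts exactly right so that the two bijections partition a set of size $m$ with no overlap and no omission, particularly the off-by-one at the splitting point $x=k$ and the fact that $\mathcal{U}$ and $\mathcal{V}$ both start at $1$ rather than $0$. The hypothesis $wm\geqslant k$ is needed precisely to ensure $[1,wm-k]$ is a genuine (possibly empty) interval so that $\mathcal{V}_{(\gamma)}$ makes sense and the splitting point $k$ actually lies in $[0,wm]$. Once the indexing is pinned down the argument is a one-line count, which is why the paper declares the proof straightforward and omits it.
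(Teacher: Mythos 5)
The paper omits the proof of this lemma entirely (it is declared ``straightforward''), so there is no official argument to compare against; your strategy of exhibiting a two-part bijection from a single residue class of size $m$ is a perfectly reasonable way to make the omitted count explicit, and your first bijection is correct: $x\mapsto k-x$ does carry $\{x\in[0,k-1]:x\equiv-\gamma\pmod w\}$ bijectively onto $\mathcal{U}_{(\gamma)}$, since $k-x\in[1,k]$ and $k-x\equiv\gamma+k\pmod w$ exactly when $x\equiv-\gamma\pmod w$.

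The second half, however, is left genuinely unresolved: neither of the maps you float works for your chosen ambient set. The shift $x\mapsto x-k+1$ sends $x\equiv-\gamma$ to residue $-\gamma-k+1$, and $x\mapsto x-k$ sends it to residue $-\gamma-k$; neither equals $\gamma$ in general, so neither lands in $\mathcal{V}_{(\gamma)}$. The repair is to use a reflection rather than a shift on the second piece: $x\mapsto wm-x$ carries $\{x\in[k,wm-1]:x\equiv-\gamma\pmod w\}$ bijectively onto $[1,wm-k]$ with $wm-x\equiv-x\equiv\gamma\pmod w$, i.e.\ onto $\mathcal{V}_{(\gamma)}$, and then the two pieces partition a set of exactly $m$ elements (here $wm\geqslant k$ guarantees the splitting point $k$ lies in $[0,wm]$). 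Alternatively, all bookkeeping disappears if you choose the ambient set $T=\{x\in[1,wm]:x\equiv\gamma+k\pmod w\}$, which has $m$ elements: then $T\cap[1,k]$ \emph{is} $\mathcal{U}_{(\gamma)}$ with no map needed, and $x\mapsto x-k$ carries $T\cap[k+1,wm]$ bijectively onto $\mathcal{V}_{(\gamma)}$ because $x-k\equiv\gamma\pmod w$ and $x-k\in[1,wm-k]$. With either fix the argument is complete.
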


\setlength{\parindent}{2em}
We are now ready to state and prove the main result of this paper.

\setlength{\parindent}{0em}
\begin{theorem} \label{central}
{Let $(\varphi,\psi)$ be a Galois connection between $[0,k]$ and $[0,m]$ such that $\psi(0)=0$ and $\psi(l)-\psi(l-1)\leqslant w$ for all $l\in[1,m]$. Let $(\tau,\eta)$ be a Galois connection between $[0,wm-k]$ and $[0,m]$. For any $\gamma\in\mathds{Z}$, define the sets $\mathcal{A}_{(\gamma)}$ and $\mathcal{B}_{(\gamma)}$ as
$$\mathcal{A}_{(\gamma)}=\{\varphi(u)\mid u\in[1,k],~u\equiv\gamma+k~(\bmod~w)\},$$
$$\mathcal{B}_{(\gamma)}=\{m+1-\tau(v)\mid v\in[1,wm-k],~v\equiv\gamma~(\bmod~w)\}.$$
Then, the following four statements are equivalent to each other:

{\bf{(1)}}\,\,$\eta(l)=\psi(m-l)+wl-k$ for all $l\in[0,m]$;

{\bf{(2)}}\,\,$\eta(0)=0$, $\eta(l)-\eta(l-1)\leqslant w$ for all $l\in[1,m]$, and for any $(u,v)\in[1,k]\times[1,wm-k]$, $\varphi(u)+\tau(v)=m+1\Longrightarrow u\not\equiv~v+k~(\bmod~w)$;

{\bf{(3)}}\,\,For any $\gamma\in\mathds{Z}$, $\mathcal{A}_{(\gamma)}\cap \mathcal{B}_{(\gamma)}=\emptyset$, $\mathcal{A}_{(\gamma)}\cup \mathcal{B}_{(\gamma)}=[1,m]$;

{\bf{(4)}}\,\,For any $\gamma\in\mathds{Z}$, $\mathcal{A}_{(\gamma)}\cup \mathcal{B}_{(\gamma)}=[1,m]$.
}
\end{theorem}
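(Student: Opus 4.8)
The plan is to prove the cycle of implications $(1)\Rightarrow(2)\Rightarrow(3)\Rightarrow(4)\Rightarrow(1)$, leaning on Lemmas 2.3 and 2.4 and Proposition 2.1 to do most of the work.

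First, $(1)\Rightarrow(2)$ is essentially Lemma 2.3. Assuming (1), the function $\eta$ is literally the function studied in Lemma 2.3, so parts (1) and (2) of that lemma give $\eta(0)=0$ and $0\leqslant\eta(l)-\eta(l-1)\leqslant w$. For the congruence implication, I would first argue that the Galois connection $(\tau,\eta)$ given in the theorem coincides with the one constructed in the proof of Lemma 2.3(3): indeed, once $\eta$ is fixed, $\tau$ is forced by Lemma 2.1(2), since $\tau(a)=\max\{b : \eta(b)\leqslant\text{(something)}\}$-type formulas pin it down (more precisely, a Galois connection determines each of its two maps from the other). Then the final sentence of Lemma 2.3(3) is exactly the desired implication.

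Next, $(2)\Rightarrow(3)$ is the combinatorial heart of the argument. From $\eta(0)=0$ and $\eta(l)-\eta(l-1)\leqslant w$, Proposition 2.1 applied to the Galois connection $(\tau,\eta)$ between $[0,wm-k]$ and $[0,m]$ tells us that $|\tau^{-1}[\{l\}]|\leqslant w$ for all $l\in[1,m]$ and that $\tau(v)\in[1,m]$ for all $v\in[1,wm-k]$; likewise Proposition 2.1 applied to $(\varphi,\psi)$ gives $|\varphi^{-1}[\{l\}]|\leqslant w$ and $\varphi(u)\in[1,m]$ for $u\in[1,k]$. Now fix $\gamma\in\mathds{Z}$. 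Using the notation of Lemma 2.4, $\mathcal{A}_{(\gamma)}=\varphi[\mathcal{U}_{(\gamma)}]$ and $\mathcal{B}_{(\gamma)}=\{m+1-\tau(v): v\in\mathcal{V}_{(\gamma)}\}$. The key point is that $\varphi$ restricted to $\mathcal{U}_{(\gamma)}$ is injective: if $u_1<u_2$ lie in $\mathcal{U}_{(\gamma)}$ then $u_2-u_1$ is a positive multiple of $w$, so $u_1\leqslant u_2-w$, and Proposition 2.1(4) (valid since $\psi(0)=0$) gives $\varphi(u_1)<\varphi(u_2)$. Hence $|\mathcal{A}_{(\gamma)}|=|\mathcal{U}_{(\gamma)}|$, and similarly the map $v\mapsto\tau(v)$ is injective on $\mathcal{V}_{(\gamma)}$, so $|\mathcal{B}_{(\gamma)}|=|\mathcal{V}_{(\gamma)}|$. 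By Lemma 2.4, $|\mathcal{A}_{(\gamma)}|+|\mathcal{B}_{(\gamma)}|=m$. Since both sets lie in $[1,m]$, it suffices to show $\mathcal{A}_{(\gamma)}\cap\mathcal{B}_{(\gamma)}=\emptyset$. Suppose $\varphi(u)=m+1-\tau(v)$ with $u\in\mathcal{U}_{(\gamma)}$, $v\in\mathcal{V}_{(\gamma)}$; then $u\equiv\gamma+k$ and $v\equiv\gamma\pmod w$, so $v+k\equiv\gamma+k\equiv u\pmod w$, contradicting the implication in (2). Therefore $\mathcal{A}_{(\gamma)}\cap\mathcal{B}_{(\gamma)}=\emptyset$, and the counting forces $\mathcal{A}_{(\gamma)}\cup\mathcal{B}_{(\gamma)}=[1,m]$, giving (3).

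Then $(3)\Rightarrow(4)$ is trivial. Finally, for $(4)\Rightarrow(1)$, I would argue by a counting/squeezing argument. Fix any $l\in[0,m]$ and set $\gamma=l-k$ say; actually the cleanest route is to show that (4) forces each $\mathcal{B}_{(\gamma)}$ to be exactly the complement of $\mathcal{A}_{(\gamma)}$ and of the right size, then read off $\tau$, hence $\eta$, hence (1). In more detail: $\mathcal{A}_{(\gamma)}\subseteq[1,m]$ always has $|\mathcal{A}_{(\gamma)}|=|\mathcal{U}_{(\gamma)}|$ by the injectivity argument above (which only used $\psi(0)=0$ and the step-bound on $\psi$, both hypotheses of the theorem, not (1)); and $|\mathcal{B}_{(\gamma)}|\leqslant|\mathcal{V}_{(\gamma)}|$ trivially. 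Since $|\mathcal{U}_{(\gamma)}|+|\mathcal{V}_{(\gamma)}|=m=|[1,m]|$ by Lemma 2.4 and $\mathcal{A}_{(\gamma)}\cup\mathcal{B}_{(\gamma)}=[1,m]$, we get $|\mathcal{B}_{(\gamma)}|=|\mathcal{V}_{(\gamma)}|$, so $v\mapsto m+1-\tau(v)$ is injective on $\mathcal{V}_{(\gamma)}$ and $\mathcal{A}_{(\gamma)}\cap\mathcal{B}_{(\gamma)}=\emptyset$. Letting $\gamma$ range over a complete residue system mod $w$, this recovers statement (3), and in particular the disjointness gives back the congruence implication; combined with the injectivity of $\tau$ on each residue class, one reconstructs the values $\tau(1),\dots,\tau(wm-k)$ as the "staircase" determined by $\varphi$, from which $\eta(l)=\max\{v:\tau(v)\leqslant l\}$-type inversion (Lemma 2.1(2)) yields $\eta(l)=\psi(m-l)+wl-k$. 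The cleanest way to package this last reconstruction is to note that $(1)\Leftrightarrow(2)$ has already been established (via Lemma 2.3 and the uniqueness of the partner map in a Galois connection), so it is enough to deduce (2) from (4); and (4) together with the size identities above forces $\eta(0)=0$, the step bound $\eta(l)-\eta(l-1)\leqslant w$ (equivalently $|\tau^{-1}[\{l\}]|\leqslant w$, via Proposition 2.1), and the congruence implication, which is precisely (2).

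The main obstacle I anticipate is the bookkeeping in $(4)\Rightarrow(1)$: one must be careful that the injectivity of $\varphi$ on each residue class $\mathcal{U}_{(\gamma)}$ and the size count $|\mathcal{U}_{(\gamma)}|+|\mathcal{V}_{(\gamma)}|=m$ hold under the standing hypotheses alone (without assuming (1)), and then squeeze $|\mathcal{B}_{(\gamma)}|$ between the trivial upper bound $|\mathcal{V}_{(\gamma)}|$ and the lower bound forced by the covering condition (4). Once the cardinalities are matched, disjointness and the reconstruction of $\eta$ are routine, but getting the counting argument airtight — in particular handling the residue classes with possibly empty $\mathcal{U}_{(\gamma)}$ or $\mathcal{V}_{(\gamma)}$, and confirming $\min(Q)=0$ hypotheses needed to invoke Proposition 2.1(4) and Lemma 2.1(3) — is where the care is needed.
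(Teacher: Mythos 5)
Your steps $(1)\Rightarrow(2)$, $(2)\Rightarrow(3)$, $(3)\Rightarrow(4)$ and the counting argument recovering $(4)\Rightarrow(3)$ all match the paper's proof: the uniqueness of the adjoint in a Galois connection to identify $\tau$ with the map built in Lemma~2.3, the injectivity of $\varphi$ and $\tau$ on the residue classes $\mathcal{U}_{(\gamma)}$, $\mathcal{V}_{(\gamma)}$ via Proposition~2.1, and the cardinality identity of Lemma~2.4 are exactly the ingredients used there. The problem is the closing step. Your preferred packaging of $(4)\Rightarrow(1)$ is to reduce to $(2)$ and then invoke ``$(1)\Leftrightarrow(2)$ has already been established''; but only $(1)\Rightarrow(2)$ was established (via Lemma~2.3). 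The converse $(2)\Rightarrow(1)$ is precisely the hard content of the theorem: condition $(2)$ only imposes constraints on $\eta$ ($\eta(0)=0$, the step bound, and a congruence condition tying $\tau$ to $\varphi$), and it is not a consequence of ``uniqueness of the partner map'' that these constraints single out $\eta(l)=\psi(m-l)+wl-k$. As written, this route is circular.

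Your alternative sketch --- reconstructing $\tau(1),\dots,\tau(wm-k)$ as the staircase determined by $\varphi$ and then inverting --- is the right idea, but it is missing the step that actually closes the loop. Knowing that $\{m+1-\tau(v)\mid v\in\mathcal{V}_{(\gamma)}\}$ equals the complement of $\mathcal{A}_{(\gamma)}$, together with monotonicity and injectivity of $\tau$ on $\mathcal{V}_{(\gamma)}$, shows that $\tau$ is \emph{uniquely determined} by $\varphi$ under $(3)$; it does not by itself show that the unique solution is the one corresponding to $\eta(l)=\psi(m-l)+wl-k$. The paper supplies the missing piece by introducing the reference Galois connection $(\xi,\zeta)$ from Lemma~2.3 with $\zeta(l)=\psi(m-l)+wl-k$, noting that $(\xi,\zeta)$ satisfies $(1)$ and hence (by the already-proved implications) the same partition property, so that $\mathcal{L}_{(\gamma)}=\{m+1-\xi(v)\mid v\in\mathcal{V}_{(\gamma)}\}$ coincides with $\mathcal{B}_{(\gamma)}$; the order-reversal and strict monotonicity of $\xi$ on each residue class then force $\tau=\xi$ pointwise, and Lemma~2.1 gives $\eta=\zeta$. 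You need to make this existence-plus-uniqueness comparison explicit; without it the argument asserts the conclusion rather than deriving it.
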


\begin{proof}
As detailed below, the proof consists of the following $5$ steps. First of all, for any $\gamma\in\mathbb{Z}$, we define $\mathcal{U}_{(\gamma)}=\{u\mid u\in[1,k],u\equiv \gamma+k~(\bmod ~w)\}$, $\mathcal{V}_{(\gamma)}=\{v\mid v\in[1,wm-k],v\equiv \gamma~(\bmod ~w)\}$. By Lemma 2.4, for any $\gamma\in\mathbb{Z}$, it holds that $|\mathcal{U}_{(\gamma)}|+|\mathcal{V}_{(\gamma)}|=m$.

$(1)\Longrightarrow(2)$\,\,This follows from Lemma 2.3.

$(2)\Longrightarrow(3)$\,\,Fix  $\gamma\in\mathds{Z}$. For any $u\in \mathcal{U}_{(\gamma)}$ and $v\in \mathcal{V}_{(\gamma)}$, noticing that $u\equiv~v+k~(\bmod~w)$, by (2), we deduce that $\varphi(u)\neq m+1-\tau(v)$. It then follows that $\mathcal{A}_{(\gamma)} \cap \mathcal{B}_{(\gamma)}=\emptyset$. By Proposition 2.1, we deduce that $\varphi(a)\in[1,m]$ for all $a\in[1,k]$, and $\varphi(r)+1\leqslant \varphi(r+w)$ for all $r\in[0,k-w]$, which further imply that $\mathcal{A}_{(\gamma)}\subseteq[1,m]$ and $|\mathcal{A}_{(\gamma)}|=|\mathcal{U}_{(\gamma)}|$. Applying Proposition 2.1 to $(\tau,\eta)$ with $k$ replaced by $wm-k$, we deduce that $\tau(a)\in[1,m]$ for all $a\in[1,wm-k]$, and $\tau(r)+1\leqslant \tau(r+w)$ for all $r\in[0,w(m-1)-k]$, which further imply that $\mathcal{B}_{(\gamma)}\subseteq[1,m]$ and $|\mathcal{B}_{(\gamma)}|=|\mathcal{V}_{(\gamma)}|$. It then follows that $|\mathcal{A}_{(\gamma)}|+|\mathcal{B}_{(\gamma)}|=|\mathcal{U}_{(\gamma)}|+|\mathcal{V}_{(\gamma)}|=m$, which, together with $\mathcal{A}_{(\gamma)} \cap \mathcal{B}_{(\gamma)}=\emptyset$, implies that $|\mathcal{A}_{(\gamma)} \cup \mathcal{B}_{(\gamma)}|=m$. Finally, using the fact that $\mathcal{A}_{(\gamma)} \cup \mathcal{B}_{(\gamma)} \subseteq [1,m]$, we deduce that $\mathcal{A}_{(\gamma)} \cup \mathcal{B}_{(\gamma)}=[1,m]$, as desired.

$(3)\Longrightarrow(4)$\,\,This is trivial.

$(4)\Longrightarrow(3)$\,\,Fix $\gamma\in\mathds{Z}$. From the facts that $|\mathcal{A}_{(\gamma)}|\leqslant|\mathcal{U}_{(\gamma)}|$, $|\mathcal{B}_{(\gamma)}|\leqslant|\mathcal{V}_{(\gamma)}|$, $|\mathcal{U}_{(\gamma)}|+|\mathcal{V}_{(\gamma)}|=m$ and $\mathcal{A}_{(\gamma)}\cup \mathcal{B}_{(\gamma)}=[1,m]$, we deduce that $|\mathcal{A}_{(\gamma)}|+|\mathcal{B}_{(\gamma)}|\leqslant |\mathcal{A}_{(\gamma)}\cup \mathcal{B}_{(\gamma)}|$, which implies that $\mathcal{A}_{(\gamma)}\cap \mathcal{B}_{(\gamma)}=\emptyset$, and (3) immediately follows.

$(3)\Longrightarrow(1)$\,\,By Lemma 2.3, there exists a Galois connection $(\xi,\zeta)$ between $[0,wm-k]$ and $[0,m]$ such that $\zeta(l)=\psi(m-l)+wl-k$ for all $l\in[0,m]$. Applying Proposition 2.1 to $(\xi,\zeta)$ with $k$ replaced by $wm-k$, together with Lemma 2.3, we deduce that
\begin{equation}\text{$\xi(r)+1\leqslant \xi(r+w)$ for all $r\in[0,w(m-1)-k]$}.\end{equation}
Now for a fixed yet arbitrary $\gamma\in\mathds{Z}$, define $\mathcal{L}_{(\gamma)}=\{m+1-\xi(v)\mid v\in \mathcal{V}_{(\gamma)}\}$. Then, using a parallel argument as in the step of $(2) \Longrightarrow (3)$ (with $(\tau,\eta)$ replaced by $(\xi,\zeta)$), we deduce that $\mathcal{A}_{(\gamma)}\cap \mathcal{L}_{(\gamma)}=\emptyset$, $\mathcal{A}_{(\gamma)}\cup \mathcal{L}_{(\gamma)}=[1,m]$, which, together with $\mathcal{A}_{(\gamma)}\cap \mathcal{B}_{(\gamma)}=\emptyset$ and $\mathcal{A}_{(\gamma)}\cup \mathcal{B}_{(\gamma)}=[1,m]$, imply that $\mathcal{L}_{(\gamma)}=\mathcal{B}_{(\gamma)}$. It immediately follows that
$$
f[\mathcal{V}_{(\gamma)}]=\mathcal{B}_{(\gamma)}=\mathcal{L}_{(\gamma)}=g[\mathcal{V}_{(\gamma)}],
$$
where $f:\mathcal{V}_{(\gamma)}\longrightarrow\mathbb{Z}$ is defined as $f(v)=m+1-\tau(v)$ and $g:\mathcal{V}_{(\gamma)}\longrightarrow\mathbb{Z}$ is defined as $g(v)=m+1-\xi(v)$. Since both $\tau$ and $\xi$ preserves the order $\leqslant$, for any $c,d\in \mathcal{V}_{(\gamma)}$ with $c\leqslant d$, it holds true that $f(c)\geqslant f(d)$, $g(c)\geqslant g(d)$. Noting that by (2.2), $g$ is injective, we infer that $f=g$, which implies that $\tau(v)=\xi(v)$ for any $v\in[1,wm-k]$ with $v\equiv \gamma~(\bmod ~w)$. It then follows from the arbitrariness of $\gamma$ that $\tau(v)=\xi(v)$ for all $v\in[1,wm-k]$. By (3) of Lemma 2.1, we have $\tau(0)=\xi(0)=0$, which further implies that $\tau=\xi$. Since $(\tau,\eta)$ and $(\xi,\zeta)$ are Galois connections between $[0,wm-k]$ and $[0,m]$, we apply (2) of Lemma 2.1 to reach $\eta=\zeta$, and (1) immediately follows.
\end{proof}

\begin{remark}
With the help of Theorem \ref{observation}, Theorem \ref{central} can be used to show the equivalence between Wei's original duality theorem in (1.3) and Forney's result in (1.6). More specifically, set $w=1$; and as in Theorem \ref{observation}, set $\varphi, \psi$ to be the GHWs and DLPs of $C$, respectively; and moreover, set $\tau, \eta$ to be those of $C^{\bot}$, respectively. Then, by Theorem \ref{observation}, $(\varphi,\psi)$ is a Galois connection between $[0,k]$ and $[0,m]$, and $(\tau,\eta)$ is a Galois connection between $[0,m-k]$ and $[0,m]$. It can then be verified that (1) of Theorem \ref{central} boils down to (1.6), and (3) of Theorem 2.2 boils down to (1.3), thereby recovering the equivalence between Wei's duality theorem and Forney's result.

\setlength{\parindent}{2em}
Wei's duality theorem has been extensively studied and extended to other codes or even settings beyond coding theory. To the best of our knowledge, all Wei-type duality theorems, previously known in the literature or newly established in this paper, are special cases of Theorem \ref{central} with the two Galois connections appropriately set. In this sense, Theorem \ref{central} reveals the essence of Wei-type duality theorems from a Galois connection perspective.
\end{remark}

\section{Bridging theorems}

\setlength{\parindent}{2em}
In this section, to facilitate the application of Theorem 2.2, we prove two bridging theorems that can be used to recover known or derive new Wei-type duality theorems via simple substitution.

\subsection{The first bridging theorem}
\setlength{\parindent}{2em}
Throughout this subsection, we fix the following notations:

\setlength{\parindent}{0em}
\begin{itemize}
  \item $Y$ is a nonempty set, $m\in\mathds{N}$, and $g:Y\longrightarrow [0,m]$ is a surjective map.
  \item $w\in\mathds{Z}^{+}$, $k\in\mathds{N}$, and $f:Y\longrightarrow [0,k]$ is a map satisfying the following four conditions:
\begin{equation}\forall~y\in Y,~g(y)=0\Longrightarrow f(y)=0;\end{equation}
\begin{equation}\forall~y\in Y,~wg(y)-f(y)\leqslant wm-k;\end{equation}
\begin{equation}\hspace*{-6mm}\forall~u\in Y~s.t.~g(u)\leqslant m-1,~\exists~v\in Y~s.t.~g(v)=g(u)+1,~f(u)\leqslant f(v);\end{equation}
\begin{equation}\hspace*{-6mm}\forall~v\in Y~s.t.~g(v)\geqslant1,~\exists~u\in Y~s.t.~g(u)=g(v)-1,~f(v)-f(u)\leqslant w.\end{equation}
  \item $X$ is a nonempty set, and $\sigma:X\longrightarrow Y$ is a surjective map.
\end{itemize}

\setlength{\parindent}{2em}
Now we define $\mu:X\longrightarrow[0,m]$ as
\begin{equation}\mu(t)=m-g(\sigma(t)),\end{equation}
and define $h:X\longrightarrow\mathds{Z}$ as
\begin{equation}h(t)=f(\sigma(t))+w \mu(t)-k.\end{equation}

The following lemma, whose proof is straightforward and thus omitted, lists some basic properties of the functions defined as above.

\setlength{\parindent}{0em}
\begin{lemma}
{
{\bf{(1)}}\,\,For any $y\in Y$ with $g(y)=m$, it holds that $f(y)=k$.

{\bf{(2)}}\,\,For any $v\in Y$, we have $f(v)\leqslant wg(v)$. In particular, we have $k\leqslant wm$.

{\bf{(3)}}\,\,$\mu:X\longrightarrow[0,m]$ is a surjective map.

{\bf{(4)}}\,\,$h[X]\subseteq[0,wm-k]$. Furthermore, for any $t\in X$ with $\mu(t)=m$, it holds that $h(t)=wm-k$.

{\bf{(5)}}\,\,For any $c\in X$ with $\mu(c)\leqslant m-1$, there exists $d\in X$ such that $\mu(d)=\mu(c)+1$, $h(c)\leqslant h(d)$.
}
\end{lemma}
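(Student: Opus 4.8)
The plan is to verify the five items in essentially the order listed, since (2) is the workhorse on which (4) depends; each item reduces to a short manipulation of conditions (3.1)--(3.4) together with the surjectivity of $g$ and of $\sigma$, and condition (3.3) plays no role in this particular lemma.

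First I would dispatch (1): feeding a $y$ with $g(y)=m$ into (3.2) gives $wm-f(y)\leqslant wm-k$, hence $f(y)\geqslant k$, and combined with $f(y)\leqslant k$ this forces $f(y)=k$. Next, for (2) I would run an induction on the value $g(v)\in[0,m]$: the base case $g(v)=0$ is exactly (3.1), which gives $f(v)=0=w\cdot 0$; for the step with $g(v)=j\geqslant 1$, apply (3.4) to $v$ to obtain $u$ with $g(u)=j-1$ and $f(v)-f(u)\leqslant w$, so the induction hypothesis yields $f(v)\leqslant w(j-1)+w=wg(v)$. The clause $k\leqslant wm$ then follows by choosing $v$ with $g(v)=m$ (surjectivity of $g$), invoking (1) to get $f(v)=k$, and applying the inequality just proved.

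Item (3) is immediate, since $\mu=m-(g\circ\sigma)$ is $m$ minus a composite of surjections onto $[0,m]$, hence onto $[0,m]$. For (4) I would first rewrite $h(t)=\bigl(f(\sigma(t))-w\,g(\sigma(t))\bigr)+(wm-k)$ using $\mu(t)=m-g(\sigma(t))$; then (2) applied to $\sigma(t)$ gives $f(\sigma(t))-w\,g(\sigma(t))\leqslant 0$, hence $h(t)\leqslant wm-k$, while (3.2) applied to $\sigma(t)$ gives $f(\sigma(t))-w\,g(\sigma(t))\geqslant k-wm$, hence $h(t)\geqslant 0$; the ``furthermore'' clause follows because $\mu(t)=m$ forces $g(\sigma(t))=0$, whence $f(\sigma(t))=0$ by (3.1). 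Finally, for (5), given $c$ with $\mu(c)\leqslant m-1$ I would apply (3.4) to $\sigma(c)$ (whose $g$-value is $m-\mu(c)\geqslant 1$) to get $u'\in Y$ with $g(u')=g(\sigma(c))-1$ and $f(\sigma(c))-f(u')\leqslant w$, lift $u'$ to some $d\in X$ with $\sigma(d)=u'$ by surjectivity of $\sigma$, and check directly that $\mu(d)=\mu(c)+1$ and $h(d)-h(c)=\bigl(f(\sigma(d))-f(\sigma(c))\bigr)+w\geqslant 0$.

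I do not anticipate a genuine obstacle; the only place requiring a little care is (4), where the single quantity $f(\sigma(t))-w\,g(\sigma(t))$ is squeezed between $k-wm$ (from (3.2)) and $0$ (from (2)), so one must match each inequality to the correct side of the bound $0\leqslant h(t)\leqslant wm-k$.
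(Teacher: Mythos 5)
Your proof is correct, and since the paper omits the proof of this lemma as ``straightforward,'' your argument is precisely the routine verification the authors had in mind: (1) from (3.2), (2) by induction down the $g$-fibers via (3.1) and (3.4), and (4)--(5) by rewriting $h$ and transporting (3.2), (2) and (3.4) through the surjection $\sigma$. Your observation that (3.3) is not needed here is also accurate --- it only enters later, in Proposition 3.1.
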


\setlength{\parindent}{2em}
Now, with respect to $(g,f)$, we define $\varphi:[0,k]\longrightarrow[0,m]$ as
\begin{equation}\varphi(a)=\min\{g(u)\mid u\in Y,~a\leqslant f(u)\},\end{equation}
and define $\psi:[0,m]\longrightarrow [0,k]$ as
\begin{equation}\psi(b)=\max\{f(u)\mid u\in Y,~g(u)\leqslant b\}.\end{equation}
Similarly, with respect to $(\mu,h)$, we define $\tau:[0,wm-k]\longrightarrow [0,m]$ as
\begin{equation}\tau(a)=\min\{\mu(t)\mid t\in X,~a\leqslant h(t)\},\end{equation}
and define $\eta:[0,m]\longrightarrow [0,wm-k]$ as
\begin{equation}\eta(b)=\max\{h(t)\mid t\in X,~\mu(t)\leqslant b\}.\end{equation}

\setlength{\parindent}{0em}
\begin{proposition}
{$\varphi$, $\psi$, $\tau$, $\eta$ are well defined. Moreover, we have:

{\bf{(1)}}\,\,$(\varphi,\psi)$ is a Galois connection between $[0,k]$ and $[0,m]$;

{\bf{(2)}}\,\,For any $b\in [0,m]$, $\psi(b)=\max\{f(u)\mid u\in Y,~g(u)=b\}$;

{\bf{(3)}}\,\,$\psi(0)=0$, and for any $l\in [1,m]$, we have $\psi(l)-\psi(l-1)\leqslant w$;

{\bf{(4)}}\,\,$\varphi(a)\in[1,m]$ for all $a\in[1,k]$, and $\varphi(r)+1\leqslant \varphi(r+w)$ for all $r\in[0,k-w]$;

{\bf{(5)}}\,\,$(\tau,\eta)$ is a Galois connection between $[0,wm-k]$ and $[0,m]$;

{\bf{(6)}}\,\,For any $b\in [0,m]$, $\eta(b)=\max\{h(t)\mid t\in X,~\mu(t)=b\}$.
}
\end{proposition}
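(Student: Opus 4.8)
The plan is to reduce everything to Lemma 2.2, Proposition 2.1 and Lemma 3.1, together with two short ``climbing'' arguments for parts (2) and (6). First I would settle well-definedness and the codomain assertions. For $\varphi$ the set $\{u\in Y\mid a\leqslant f(u)\}$ is nonempty for each $a\in[0,k]$, and for $\tau$ the set $\{t\in X\mid a\leqslant h(t)\}$ is nonempty for each $a\in[0,wm-k]$: by surjectivity of $g$ and Lemma 3.1(1) there is $y\in Y$ with $g(y)=m$, $f(y)=k$, and by Lemma 3.1(3)--(4) there is $t\in X$ with $\mu(t)=m$, $h(t)=wm-k$ (note $wm\geqslant k$ by Lemma 3.1(2), so $[0,wm-k]$ is a genuine interval). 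For $\psi$ and $\eta$, surjectivity of $g$ (resp.\ surjectivity of $\mu$, Lemma 3.1(3)) furnishes an element of $g$-value $0$ (resp.\ $\mu$-value $0$), so $\{u\in Y\mid g(u)\leqslant b\}$ and $\{t\in X\mid\mu(t)\leqslant b\}$ are nonempty for all $b\in[0,m]$. Since $f$, $g$, $\mu$ are valued in $[0,k]$, $[0,m]$, $[0,m]$ and $h[X]\subseteq[0,wm-k]$ by Lemma 3.1(4), all four extrema exist and lie in the claimed intervals.

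For (1) and (5) I would apply Lemma 2.2 twice. For (1), take its index set to be $Y$, with its ``$f$'' the present $f:Y\to[0,k]$ and its ``$g$'' the present $g:Y\to[0,m]$; the hypotheses $\max[0,k]=k\in f[Y]$ and $\min[0,m]=0\in g[Y]$ hold by the previous paragraph, and by comparing formulas the pair Lemma 2.2 produces is exactly $(\varphi,\psi)$. For (5), take its index set to be $X$, with its ``$f$'' being $h:X\to[0,wm-k]$ and its ``$g$'' being $\mu:X\to[0,m]$; the hypotheses $wm-k\in h[X]$ and $0\in\mu[X]$ hold by Lemma 3.1(3)--(4), and the pair produced is exactly $(\tau,\eta)$. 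Hence $(\varphi,\psi)$ and $(\tau,\eta)$ are Galois connections between the stated intervals.

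Parts (2) and (6) are the only spots requiring a direct argument, and they run in parallel. For (2), $\max\{f(u)\mid u\in Y,\ g(u)=b\}\leqslant\psi(b)$ is immediate from the defining formula for $\psi$. For the reverse inequality, choose $u_0\in Y$ with $g(u_0)\leqslant b$ and $f(u_0)=\psi(b)$; as long as $g(u_0)<b$ we have $g(u_0)\leqslant b-1\leqslant m-1$, so (3.3) yields $u_1$ with $g(u_1)=g(u_0)+1$ and $f(u_1)\geqslant f(u_0)$, and iterating reaches some $u\in Y$ with $g(u)=b$ and $f(u)\geqslant\psi(b)$; since $g(u)\leqslant b$ forces $f(u)\leqslant\psi(b)$, we get $f(u)=\psi(b)$. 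Part (6) is identical with $(Y,g,f)$ and (3.3) replaced by $(X,\mu,h)$ and Lemma 3.1(5). Now (3) follows from (2): $\psi(0)=\max\{f(u)\mid g(u)=0\}=0$ by (3.1) (such $u$ exists since $g$ is surjective), and for $l\in[1,m]$ pick $u$ with $g(u)=l$, $f(u)=\psi(l)$, apply (3.4) to obtain $u'$ with $g(u')=l-1$, $f(u)-f(u')\leqslant w$, and note $\psi(l-1)\geqslant f(u')$, whence $\psi(l)-\psi(l-1)\leqslant w$. Finally, (4) is immediate from Proposition 2.1 applied to the Galois connection $(\varphi,\psi)$: by (3), $\psi(0)=0$ and $\psi(l)-\psi(l-1)\leqslant w$ is exactly condition (1) there, so the ``moreover'' clause gives $\varphi(a)\in[1,m]$ for $a\in[1,k]$ and condition (4) there gives $\varphi(r)+1\leqslant\varphi(r+w)$ for $r\in[0,k-w]$.

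The only point demanding a little care is the bookkeeping inside the climbing arguments for (2) and (6) — verifying that (3.3), (3.4) and Lemma 3.1(5) genuinely apply at every intermediate level and dealing with the trivial cases $g(u_0)=b$ and $\mu(t_0)=b$ — but this presents no real obstacle; the rest is a straight appeal to Lemma 2.2, Lemma 3.1 and Proposition 2.1.
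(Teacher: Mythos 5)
Your proposal is correct and follows essentially the same route as the paper's own proof: well-definedness from surjectivity of $g$, $\mu$ and Lemma 3.1, parts (1) and (5) from Lemma 2.2, parts (2) and (6) from (3.3) and Lemma 3.1(5), part (3) from (3.1) and (3.4), and part (4) from Proposition 2.1. The paper merely cites these implications without spelling out the climbing arguments, which you have correctly supplied.
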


\begin{proof}
Since $g:Y\longrightarrow[0,m]$ is surjective and $k\in f[Y]$ by (1) of Lemma 3.1, $\varphi$ and $\psi$ are well defined. By (3) and (4) of Lemma 3.1, we have $wm-k\in h[X]$, and hence $\tau$ and $\eta$ are well defined. To finish the proof, we note that (1) and (5) follow from Lemma 2.2, (2) follows from (3.3), (3) follows from (3.1) and (3.4), (4) follows from (1), (3) and Proposition 2.1, and (6) follows from (5) of Lemma 3.1.
\end{proof}

\setlength{\parindent}{2em}
We are now ready to present and prove our first bridging theorem.

\setlength{\parindent}{0em}
\begin{theorem}
{{\bf{(1)}}\,\,For any $l\in [0,m]$, $\eta(l)=\psi(m-l)+wl-k$.

{\bf{(2)}}\,\,For any $\gamma\in\mathds{Z}$, define $\mathcal{A}_{(\gamma)}=\{\varphi(u)\mid u\in[1,k],~u\equiv\gamma+k~(\bmod~w)\}$, $\mathcal{B}_{(\gamma)}=\{m+1-\tau(v)\mid v\in[1,wm-k],~v\equiv\gamma~(\bmod~w)\}$. Then, we have $\mathcal{A}_{(\gamma)}\cap \mathcal{B}_{(\gamma)}=\emptyset$, $\mathcal{A}_{(\gamma)}\cup \mathcal{B}_{(\gamma)}=[1,m]$.
}
\end{theorem}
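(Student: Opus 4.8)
The plan is to deduce Theorem~3.1 directly from Theorem~\ref{central} by verifying that the hypotheses of that central theorem are met and then checking that one of its equivalent conditions holds. By Proposition~3.1, $(\varphi,\psi)$ is a Galois connection between $[0,k]$ and $[0,m]$ with $\psi(0)=0$ and $\psi(l)-\psi(l-1)\leqslant w$ for all $l\in[1,m]$, and $(\tau,\eta)$ is a Galois connection between $[0,wm-k]$ and $[0,m]$. These are exactly the standing assumptions in Theorem~\ref{central}, so it suffices to establish statement~(1) of that theorem, namely $\eta(l)=\psi(m-l)+wl-k$ for all $l\in[0,m]$; then parts~(1) and~(2) of Theorem~3.1 follow immediately, the latter being condition~(3) of Theorem~\ref{central}.

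So the real work is proving the identity $\eta(l)=\psi(m-l)+wl-k$. First I would rewrite both sides using the explicit descriptions available. By Proposition~3.1(6), $\eta(l)=\max\{h(t)\mid t\in X,~\mu(t)=l\}$; unwinding the definitions (3.5)--(3.6) of $\mu$ and $h$ and using that $\sigma:X\to Y$ is surjective, the condition $\mu(t)=l$ becomes $g(\sigma(t))=m-l$, and $h(t)=f(\sigma(t))+w(m-g(\sigma(t)))-k=f(\sigma(t))+wl-k$. Hence
$$\eta(l)=\max\{f(y)+wl-k\mid y\in Y,~g(y)=m-l\}=\Big(\max\{f(y)\mid y\in Y,~g(y)=m-l\}\Big)+wl-k.$$
On the other hand, Proposition~3.1(2) gives $\psi(m-l)=\max\{f(y)\mid y\in Y,~g(y)=m-l\}$, and comparing the two displays yields $\eta(l)=\psi(m-l)+wl-k$ exactly. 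Here I should be slightly careful that the sets over which these maxima are taken are nonempty: surjectivity of $g$ guarantees there is some $y\in Y$ with $g(y)=m-l$, and surjectivity of $\sigma$ then lifts it to $X$, so both maxima are over nonempty finite-image sets and are well defined.

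With the identity in hand, I would invoke Theorem~\ref{central}: condition~(1) holds, hence condition~(3) holds, which is precisely the assertion of Theorem~3.1(2), and condition~(1) itself is Theorem~3.1(1). The main obstacle, such as it is, is purely bookkeeping: one must make sure that the translation between the ``$\leqslant$'' formulations in (3.8)--(3.11) and the ``$=$'' formulations in Proposition~3.1(2) and~(6) is applied correctly, and that the congruence class of $\gamma$ threads through unchanged when moving between Theorem~\ref{central} and Theorem~3.1 (the definitions of $\mathcal{A}_{(\gamma)}$ and $\mathcal{B}_{(\gamma)}$ here are verbatim those of Theorem~\ref{central}, so no adjustment is needed). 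I do not anticipate any genuine difficulty beyond this, since all the structural inputs—that the four functions are well defined, that they form Galois connections, and that the step-size and base-point conditions hold—have already been packaged in Proposition~3.1.
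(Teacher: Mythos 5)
Your proposal is correct and follows essentially the same route as the paper: compute $\eta(l)$ via Proposition 3.1(6), use surjectivity of $\sigma$ to identify $\{\sigma(t)\mid \mu(t)=l\}$ with $\{u\in Y\mid g(u)=m-l\}$, match against Proposition 3.1(2) to get the identity in part (1), and then feed the hypotheses packaged in Proposition 3.1 into Theorem \ref{central} to obtain part (2) as its condition (3). The only blemish is a trivial off-by-one in your citation of the display numbers (the definitions are (3.7)--(3.10), not (3.8)--(3.11)).
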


\begin{proof}
Consider $l\in [0,m]$. Noticing that $\sigma:X\longrightarrow Y$ is surjective, we conclude that $\{\sigma(t)\mid t\in X,~\mu(t)=l\}=\{u\mid u\in Y,~g(u)=m-l\}$. Thus, by (2) and (6) of Proposition 3.1, we have
\begin{eqnarray*}
\begin{split}
\eta(l)&=\max\{f(\sigma(t))+w\mu(t)-k\mid t\in X,~\mu(t)=l\}\\
&=\max\{f(\sigma(t))+wl-k\mid t\in X,~\mu(t)=l\}\\
&=\max\{f(\sigma(t))\mid t\in X,~\mu(t)=l\}+wl-k\\
&=\max\{f(u)\mid u\in Y,~g(u)=m-l\}+wl-k\\
&=\psi(m-l)+wl-k,
\end{split}
\end{eqnarray*}
which completes the proof of (1). Now with (1) and (1), (3), (5) of Proposition 3.1, (2) immediately follows from Theorem \ref{central}.
\end{proof}

\begin{remark}
To summarize, we start with the tuple $(Y,m,g,w,k,f,X,\sigma)$, from which $(\mu,h)$ is determined as in (3.5) and (3.6). Then, with respect to $(g,f)$ and $(\mu,h)$, we define two Galois connections $(\varphi,\psi)$ and $(\tau,\eta)$ via (3.7)--(3.10), which leads to the Wei-type duality theorem as Theorem 3.1.
\end{remark}

\subsection{The second bridging theorem}
\setlength{\parindent}{2em}
In this subsection, we focus on a special case of Theorem 3.1, and we begin with the following definition.

\setlength{\parindent}{0em}
\begin{definition}
Let $(Y,\curlyeqprec)$ be a poset with the least element $0_{_{Y}}$ and the greatest element $\pi_{_{Y}}$, and let $g:Y\longrightarrow\mathds{N}$. Then, the tuple $((Y,\curlyeqprec),g)$ is said to be an abundance if the following three conditions hold:
\begin{equation}\forall~(x,y)\in Y\times Y,~x\curlyeqprec y\Longrightarrow g(x)\leqslant g(y);\end{equation}
\begin{equation}\forall~u\in Y~s.t.~g(u)\leqslant g(\pi_{_{Y}})-1, \exists~v\in Y~s.t.~(u\curlyeqprec v,~g(v)=g(u)+1);\end{equation}
\begin{equation}\forall~v\in Y~s.t.~g(v)\geqslant1, \exists~u\in Y~s.t.~(u\curlyeqprec v,~g(u)=g(v)-1).\end{equation}
\end{definition}

\begin{remark}
Definition 3.1 is inspired by [24. Proposition 1.1], where it is shown that any finite poset has an abundance of ideals, and we will dwell on this particular case in Section 4.
\end{remark}

\setlength{\parindent}{2em}
Now we fix the following notations:

\setlength{\parindent}{0em}
\begin{itemize}
\item $(Y,\curlyeqprec)$ is a poset with the least element $0_{_{Y}}$ and the greatest element \nolinebreak$\pi_{_{Y}}$.
\item $g:Y\longrightarrow\mathds{N}$ such that $((Y,\curlyeqprec),g)$ is an abundance with $g(\pi_{_{Y}})=m$.
\item $w\in\mathbb{Z}^{+}$, $f:Y\longrightarrow \mathbb{Z}$ such that $f(0_{_{Y}})=0$, $f(\pi_{_{Y}})=k\in\mathbb{N}$, and
$$\mbox{For any $(x,y)\in Y\times Y:x\curlyeqprec y\Longrightarrow0\leqslant f(y)-f(x)\leqslant w(g(y)-g(x))$}.$$
\item $X$ is a set and $\sigma:X\longrightarrow Y$ is a surjective map.
\end{itemize}

\setlength{\parindent}{2em}
By Definition 3.1, it can be readily verified that $g$ is a surjective map from $Y$ to $[0, m]$ with $g(0_{_{Y}})=0$. Furthermore, one can easily check that $f$ is a map from $Y$ to $[0,k]$, and the conditions (3.1)--(3.4) hold true for $g$, $w$, $k$ and $f$. Hence by now, we have the tuple $(Y,m,g,w,k,f,X,\sigma)$ to which Theorem 3.1 can be applied. Therefore as a consequence of Theorem 3.1, we have proved our second bridging theorem, as detailed below.

\setlength{\parindent}{0em}
\begin{theorem}
{Define $\mu:X\longrightarrow[0,m]$ as $\mu(t)=m-g(\sigma(t))$, $h:X\longrightarrow\mathds{Z}$ as $h(t)=f(\sigma(t))+w \mu(t)-k$. Furthermore, with respect to $(g,f)$ and $(\mu,h)$, define $\varphi$, $\psi$, $\tau$, $\eta$ exactly in the way as in (3.7)--(3.10). Then, it holds that:

{\bf{(1)}}\,\,For any $l\in [0,m]$, $\eta(l)=\psi(m-l)+wl-k$;

{\bf{(2)}}\,\,For any $\gamma\in\mathds{Z}$, define $\mathcal{A}_{(\gamma)}=\{\varphi(u)\mid u\in[1,k],~u\equiv\gamma+k~(\bmod~w)\}$, $\mathcal{B}_{(\gamma)}=\{m+1-\tau(v)\mid v\in[1,wm-k],~v\equiv\gamma~(\bmod~w)\}$. Then, we have $\mathcal{A}_{(\gamma)}\cap \mathcal{B}_{(\gamma)}=\emptyset$, $\mathcal{A}_{(\gamma)}\cup \mathcal{B}_{(\gamma)}=[1,m]$.
}
\end{theorem}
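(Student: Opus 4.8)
The plan is to recognise the data $(Y,\curlyeqprec,g,w,f,X,\sigma)$ fixed in this subsection as a legitimate instance of the framework of Section 3.1, and then to read off the two assertions verbatim from Theorem 3.1. This is exactly the strategy the text announces (``Hence by now, we have the tuple $(Y,m,g,w,k,f,X,\sigma)$ to which Theorem 3.1 can be applied''), and it works because the functions $\mu$, $h$, $\varphi$, $\psi$, $\tau$, $\eta$ named in the statement of Theorem 3.2 are defined by literally the same formulas (3.5)--(3.10) used in Section 3.1.

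The first step is to supply the ``easily verified'' facts preceding the theorem, in order. Monotonicity gives $0\leqslant g(y)\leqslant g(\pi_{_{Y}})=m$ for every $y\in Y$, so $g$ takes values in $[0,m]$; iterating the descent axiom (3.13) from $\pi_{_{Y}}$ produces an element realizing each value $0,1,\dots,m$, so $g\colon Y\to[0,m]$ is surjective, and since $0_{_{Y}}\curlyeqprec y$ forces $g(0_{_{Y}})\leqslant g(y)$ for all $y$ while $0$ is attained, we get $g(0_{_{Y}})=0$. Applying the increment bound $0\leqslant f(y)-f(x)\leqslant w(g(y)-g(x))$ to $0_{_{Y}}\curlyeqprec y$ and to $y\curlyeqprec\pi_{_{Y}}$ yields $0\leqslant f(y)\leqslant k$ for all $y$, so $f$ maps into $[0,k]$; the instance $y\curlyeqprec\pi_{_{Y}}$ rearranges to $wg(y)-f(y)\leqslant wm-k$, which is (3.2); the instance $0_{_{Y}}\curlyeqprec y$ with $g(y)=0$ forces $0\leqslant f(y)\leqslant 0$, which is (3.1). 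For (3.3), use the ascent axiom (3.12) to pick $v$ with $u\curlyeqprec v$ and $g(v)=g(u)+1$, and then $f(u)\leqslant f(v)$ from the lower increment bound; for (3.4), use (3.13) to pick $u$ with $u\curlyeqprec v$ and $g(u)=g(v)-1$, and then $f(v)-f(u)\leqslant w(g(v)-g(u))=w$ from the upper bound. Finally $X\neq\emptyset$ since $\sigma$ surjects onto the nonempty set $Y$, and $\sigma$ is surjective by hypothesis. Thus every requirement of Section 3.1 is met.

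The second step is then purely to invoke Theorem 3.1 for the tuple $(Y,m,g,w,k,f,X,\sigma)$: its part (1) is part (1) here, and its part (2) is part (2) here. I do not anticipate a genuine obstacle --- the argument is a chain of one-line deductions from the abundance axioms (3.11)--(3.13) and the two-sided bound on the increments of $f$ --- so the ``hard part'' is just bookkeeping. The only points that merit a moment's care are making sure $g(0_{_{Y}})=0$ is actually deduced from the axioms (as above) rather than assumed, and the degenerate case $m=0$, where $g\equiv 0$ forces $f\equiv 0$ and hence $k=0$, so that $[1,k]$ and $[1,wm-k]$ are both empty and $\mathcal{A}_{(\gamma)}=\mathcal{B}_{(\gamma)}=\emptyset=[1,m]$, consistent with the claim.
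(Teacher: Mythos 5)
Your proposal is correct and follows exactly the route the paper itself takes: reduce Theorem 3.2 to Theorem 3.1 by checking that the abundance axioms (3.11)--(3.13) and the two-sided increment bound on $f$ yield surjectivity of $g$ onto $[0,m]$, $f[Y]\subseteq[0,k]$, and conditions (3.1)--(3.4). You have merely written out the details the paper labels ``readily verified,'' and your verifications (including the $m=0$ degenerate case) are all sound.
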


\setlength{\parindent}{2em}
As an application of Theorem 3.2, in the following example, we derive a Wei-type duality theorem for the $q$-analogue of a matroid ($q$-matroid). We consider $q$-matroids defined over an arbitrary complemented modular lattice of finite length (see [26, Section 10], [12, Definitions 4.4 and 4.13]), which include both matroids and $q$-matroids defined over finite dimensional vector spaces over a field as special cases.

\begin{example}
Let $(Y,\wedge,\vee,\curlyeqprec)$ be a complemented modular lattice of finite length with the least element $0_{_{Y}}$ and the greatest element $\pi_{_{Y}}$. For any $v\in Y$, let $\len(v)$ denote the largest cardinality of a chain in $(Y,\curlyeqprec)$ containing $v$ as its greatest element. Apparently, $\len$ is a map from $Y$ to $\mathds{N}$, and furthermore, it is straightforward to verify that $((Y,\curlyeqprec),\len)$ is an abundance. We let $m=\len(\pi_{_{Y}})$.

Let $\rho:Y\longrightarrow \mathbb{Z}$ such that $((Y,\curlyeqprec),\rho)$ is a $q$-matroid, i.e., the following three conditions hold:

{\bf{(1)}}\,\,For any $v\in Y$, $0\leqslant \rho(v)\leqslant \len(v)$;

{\bf{(2)}}\,\,For any $(x,y)\in Y\times Y$ with $x\curlyeqprec y$, it holds that $\rho(x)\leqslant \rho(y)$;

{\bf{(3)}}\,\,For any $(x,y)\in Y\times Y$, $\rho(x\wedge y)+\rho(x\vee y)\leqslant \rho(x)+\rho(y)$.\\
(1)--(3) imply that $\rho(0_{_{Y}})=0$, and for any $(x,y)\in Y\times Y$ with $x\curlyeqprec y$, it holds true that $0\leqslant \rho(y)-\rho(x)\leqslant \len(y)-\len(x)$. We let $k=\rho(\pi_{_{Y}})$.

Now let $\sigma:Y\longrightarrow Y$ be any bijective map such that
\begin{equation}\forall~(x,y)\in Y\times Y,~\sigma(x)\curlyeqprec\sigma(y)\Longleftrightarrow y\curlyeqprec x.\end{equation}
By (3.14), we deduce that $\len(v)=m-\len(\sigma(v))$ for all $v\in Y$. Define $\theta:Y\longrightarrow \mathbb{Z}$ as $\theta(v)=\rho(\sigma(v))+\len(v)-k$. It can be readily verified that $((Y,\curlyeqprec),\theta)$ is a $q$-matroid, which we will refer to as the dual $q$-matroid of $((Y,\curlyeqprec),\rho)$ with respect to $\sigma$ (c.f. [26, Definition 7.1]).

Now with respect to $(\len,\rho)$ and $(\len,\theta)$, we define $\varphi:[0,k]\longrightarrow[0,m]$, $\psi:[0,m]\longrightarrow [0,k]$, $\tau:[0,m-k]\longrightarrow [0,m]$ and $\eta:[0,m]\longrightarrow [0,m-k]$ as
$$\varphi(a)=\min\{\len(u)\mid u\in Y,~a\leqslant \rho(u)\},$$
$$\psi(b)=\max\{\rho(u)\mid u\in Y,~\len(u)\leqslant b\},$$
$$\tau(a)=\min\{\len(u)\mid u\in Y,~a\leqslant \theta(u)\},$$
$$\eta(b)=\max\{\theta(u)\mid u\in Y,~\len(u)\leqslant b\}.$$
By Theorem 3.2, we conclude that $\eta(l)=\psi(m-l)+l-k$ for all $l\in[0,m]$. Furthermore, $\{\varphi(u)\mid u\in[1,k]\}$ and $\{m+1-\tau(v)\mid v\in[1,m-k]\}$ form a partition of $[1,m]$. The latter result may be regarded as the Wei-type duality theorem for $q$-matroids, which also generalizes the Wei-type duality theorems for matroids [6, Theorem 1], for sum-matroids [35, Theorem 12] and for sum-rank metric codes [30, Theorem 2].
\end{example}

\section{Wei-type duality theorems for $w$-demi-matroids}
\setlength{\parindent}{2em}

We begin with a brief overview of the notion of demi-matroids. Demi-matroids are first introduced by Britz, Johnsen, Mayhew and Shiromoto in \cite{6} as a natural generalization of matroids (see \cite{33}). In \cite{6}, two fundamental Wei-type duality theorems for demi-matroids are established. Later in \cite{5}, Britz, Johnsen and Martin show that demi-matroids naturally arise from code flags of linear codes over division rings. Furthermore, it has been proven that the correspondence between a code flag and its dual is represented by demi-matroid duality in a natural way, and generalized weights and profiles of code flags can be recast as those of their associated demi-matroids (see [5, Sections II, III] for more details). In this section, we prove two Wei-type duality theorems for $w$-demi-matroids, a notion which generalizes the notion of demi-matroids. And in Sections 6, we will show that similar to the demi-matroid case, $w$-demi-matroids arise naturally from code flags over modules.

Throughout this section, we let $E$ be a finite set with $|E|=m$.

\setlength{\parindent}{0em}
\begin{definition}
For any $f:2^{E}\longrightarrow\mathds{Z}$ and $w\in\mathds{Z}^{+}$, $(E,f)$ is said to be a $w$-demi-matroid if the following two conditions hold:

{\bf{(1)}}\,\,$f(\emptyset)=0$;

{\bf{(2)}}\,\,For any $A,B\subseteq E$ with $A\subseteq B$, $0\leqslant f(B)-f(A)\leqslant w(|B|-|A|)$.
\end{definition}

\setlength{\parindent}{2em}
\begin{remark}
The notion of a $w$-demi-matroid generalizes that of the well known demi-matroid \cite{5,6} as a demi-matroid is a de facto $1$-demi-matroid.
\end{remark}

\setlength{\parindent}{2em}
The proof of the following proposition is straightforward and thus omitted.

\setlength{\parindent}{0em}
\begin{proposition}
{For $w\in\mathds{Z}^{+}$ and a $w$-demi-matroid $(E,f)$, define $h:2^{E}\longrightarrow\mathds{Z}$ as
$$h(A)=f(E-A)+w|A|-f(E).$$
Then, $(E,h)$ is a $w$-demi-matroid with $h(E)=wm-f(E)$. Furthermore, for any $A\subseteq E$, it holds that $f(A)=h(E-A)+w|A|-h(E)$.
}
\end{proposition}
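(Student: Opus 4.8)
The plan is to verify the two defining conditions of a $w$-demi-matroid directly for $(E,h)$, then compute $h(E)$, and finally establish the "furthermore" identity by a symmetric substitution. First I would check condition (1): evaluating the defining formula at $A=\emptyset$ gives $h(\emptyset)=f(E-\emptyset)+w|\emptyset|-f(E)=f(E)+0-f(E)=0$, as required.

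Next I would verify condition (2). Fix $A\subseteq B\subseteq E$. Then $E-B\subseteq E-A$, and since $(E,f)$ is a $w$-demi-matroid we have $0\leqslant f(E-A)-f(E-B)\leqslant w(|E-A|-|E-B|)=w(|B|-|A|)$. Now
$$h(B)-h(A)=\bigl(f(E-B)+w|B|-f(E)\bigr)-\bigl(f(E-A)+w|A|-f(E)\bigr)=w(|B|-|A|)-\bigl(f(E-A)-f(E-B)\bigr).$$
Plugging in the two-sided bound on $f(E-A)-f(E-B)$: the upper bound $f(E-A)-f(E-B)\leqslant w(|B|-|A|)$ gives $h(B)-h(A)\geqslant 0$, while the lower bound $f(E-A)-f(E-B)\geqslant 0$ gives $h(B)-h(A)\leqslant w(|B|-|A|)$. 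This is exactly condition (2) for $(E,h)$. I do not anticipate any real obstacle here; the only thing to be careful about is getting the direction of the inclusion $E-B\subseteq E-A$ and the resulting inequalities right, so that the two bounds land on the correct sides.

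For the value $h(E)$: substituting $A=E$ into the definition gives $h(E)=f(E-E)+w|E|-f(E)=f(\emptyset)+wm-f(E)=wm-f(E)$, using $f(\emptyset)=0$ and $|E|=m$. Finally, for the identity $f(A)=h(E-A)+w|A|-h(E)$, I would just expand the right-hand side: $h(E-A)=f(E-(E-A))+w|E-A|-f(E)=f(A)+w(m-|A|)-f(E)$, and $h(E)=wm-f(E)$, so
$$h(E-A)+w|A|-h(E)=\bigl(f(A)+wm-w|A|-f(E)\bigr)+w|A|-\bigl(wm-f(E)\bigr)=f(A),$$
as claimed. (Equivalently, one observes that applying the construction $f\mapsto h$ twice returns $f$, which is the content of this last identity once $h(E)=wm-f(E)$ is known.) All steps are elementary arithmetic manipulations, consistent with the paper's remark that the proof is straightforward.
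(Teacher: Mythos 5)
Your proof is correct: all four verifications (the value at $\emptyset$, the two-sided increment bound via the reversed inclusion $E-B\subseteq E-A$, the value $h(E)$, and the involution identity) are carried out accurately. The paper omits the proof as straightforward, and your direct computation is exactly the intended argument.
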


\setlength{\parindent}{2em}
\begin{remark}
Similar to the demi-matroid case (\cite{5,6,29}), $(E,h)$ defined in Proposition 4.1 can be regarded as the dual $w$-demi-matroid of $(E,f)$. Proposition 4.1 implies that $(E,f)$ is also the dual $w$-demi-matroid of $(E,h)$; in other words, the dual operation for $w$-demi-matroid is an involution (see [6, Page 5]).
\end{remark}

Now we proceed to define generalized weights and profiles for a $w$-demi-matroid. First of all, we consider a special case of Definition 3.1. For $\mathcal{C}\subseteq2^{E}$ such that $\emptyset,E\in\mathcal{C}$, we define $g:\mathcal{C}\longrightarrow\mathds{N}$ as $g(A)=|A|$. Then, by Definition 3.1, $((\mathcal{C},\subseteq),g)$ is an abundance if and only if the following two conditions hold:

\setlength{\parindent}{0em}
{\bf{(1)}}\,\,For any $A\in \mathcal{C}$ with $A\subsetneqq E$, there exists $B\in\mathcal{C}$ such that $A\subseteq B$, $|B|=|A|+1$;

{\bf{(2)}}\,\,For any $B\in \mathcal{C}$ with $B\neq\emptyset$, there exists $A\in\mathcal{C}$ such that $A\subseteq B$, $|A|=|B|-1$.

For convenience, throughout this section, we will simply say that \textit{$\mathcal{C}$ is an abundance} if $((\mathcal{C},\subseteq),g)$ is an abundance. Now consider $\mathcal{D}\subseteq2^{E}$ defined as
$$\mathcal{D}=\{E-A\mid A\in\mathcal{C}\}.$$
Then, it is straightforward to verify that $\emptyset,E\in\mathcal{D}$. Furthermore, $\mathcal{C}$ is an abundance if and only if $\mathcal{D}$ is an abundance.

\setlength{\parindent}{2em}
Now we define generalized weights and profiles for $w$-demi-matroids. Throughout the rest of this section, we let $w$ be a fixed positive integer.

\setlength{\parindent}{0em}
\begin{definition}
{Let $(E,f)$ be a $w$-demi-matroid with $k=f(E)$, and let $\mathcal{C}\subseteq2^{E}$ such that $\emptyset,E\in\mathcal{C}$, $\mathcal{C}$ is an abundance. Then, for any $a\in[0,k]$, the $a$-th generalized weight of $(f,\mathcal{C})$, denoted by $\mathbf{d}_{a}(f,\mathcal{C})$, is defined as
$$\mathbf{d}_{a}(f,\mathcal{C})=\min\{|B|\mid B\in\mathcal{C},~a\leqslant f(B)\},$$
and for any $b\in[0,m]$, the $b$-th profile of $(f,\mathcal{C})$, denoted by $\mathbf{K}_{b}(f,\mathcal{C})$, is defined as
$$\mathbf{K}_{b}(f,\mathcal{C})=\max\{f(B)\mid B\in\mathcal{C},~|B|=b\}.$$
}
\end{definition}

\setlength{\parindent}{2em}
Now we state and prove our first Wei-type duality theorem for $w$-demi-matroids.

\setlength{\parindent}{0em}
\begin{theorem}
{Let $(E,f)$ be a $w$-demi-matroid with $k=f(E)$, and define $h:2^{E}\longrightarrow\mathds{Z}$ as $h(B)=f(E-B)+w|B|-k$. Furthermore, let $\mathcal{C}\subseteq2^{E}$ such that $\emptyset,E\in\mathcal{C}$, $\mathcal{C}$ is an abundance, and let $\mathcal{D}=\{E-A\mid A\in\mathcal{C}\}$. Then, it holds that:

{\bf{(1)}}\,\,For any $l\in [0,m]$, $\mathbf{K}_{l}(h,\mathcal{D})=\mathbf{K}_{m-l}(f,\mathcal{C})+wl-k$;

{\bf{(2)}}\,\,For any $\gamma\in\mathds{Z}$, let $\mathcal{A}_{(\gamma)}=\{\mathbf{d}_{u}(f,\mathcal{C})\mid u\in[1,k],~u\equiv\gamma+k~(\bmod~w)\}$, $\mathcal{B}_{(\gamma)}=\{m+1-\mathbf{d}_{v}(h,\mathcal{D})\mid v\in[1,wm-k],~v\equiv\gamma~(\bmod~w)\}$. Then, we have $\mathcal{A}_{(\gamma)}\cap \mathcal{B}_{(\gamma)}=\emptyset$, $\mathcal{A}_{(\gamma)}\cup \mathcal{B}_{(\gamma)}=[1,m]$.
}
\end{theorem}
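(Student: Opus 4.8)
The plan is to obtain Theorem 4.1 as a direct instance of the second bridging theorem, Theorem 3.2. First I would specialize the data of Theorem 3.2 as follows: let the poset $(Y,\curlyeqprec)$ be $(\mathcal{C},\subseteq)$, with least element $0_{_{Y}}=\emptyset$ and greatest element $\pi_{_{Y}}=E$; let $g\colon\mathcal{C}\longrightarrow\mathds{N}$ be $g(A)=|A|$, so that the hypothesis ``$\mathcal{C}$ is an abundance'' (together with the discussion just before Definition 4.3) says precisely that $((\mathcal{C},\subseteq),g)$ is an abundance, and clearly $g(\pi_{_{Y}})=m$; let $w$ be the fixed positive integer and let $f$ restricted to $\mathcal{C}$ play the role of the map $f$ of Theorem 3.2, noting that $f(\emptyset)=0$, $f(E)=k\in\mathds{N}$, and --- since $(E,f)$ is a $w$-demi-matroid --- $0\leqslant f(y)-f(x)\leqslant w(|y|-|x|)=w(g(y)-g(x))$ for all $x\curlyeqprec y$ in $\mathcal{C}$; finally, let $X=\mathcal{D}$ with $\sigma\colon\mathcal{D}\longrightarrow\mathcal{C}$ the complementation map $\sigma(B)=E-B$, which is a bijection and hence surjective. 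Observe also that $k\leqslant wm$ follows automatically from the Lipschitz condition on $f$ (or from Lemma 3.1(2)), and that by Proposition 4.1, $(E,h)$ is a $w$-demi-matroid with $h(E)=wm-k$ and $\mathcal{D}$ is again an abundance, so all the quantities $\mathbf{d}_{v}(h,\mathcal{D})$, $\mathbf{K}_{l}(h,\mathcal{D})$ appearing in the conclusion are well defined.

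Next I would identify the functions produced by Theorem 3.2 with the generalized weights and profiles of Definition 4.3. With $\sigma(B)=E-B$ one computes $\mu(B)=m-g(\sigma(B))=m-|E-B|=|B|$ and $h(B)=f(\sigma(B))+w\mu(B)-k=f(E-B)+w|B|-k$, i.e.\ the function $h$ of Theorem 3.2 restricted to $\mathcal{D}$ is exactly the $h$ of Theorem 4.1. Then, unwinding (3.7) and (3.9), $\varphi(a)=\min\{|A|\mid A\in\mathcal{C},\,a\leqslant f(A)\}=\mathbf{d}_{a}(f,\mathcal{C})$ and $\tau(a)=\min\{|B|\mid B\in\mathcal{D},\,a\leqslant h(B)\}=\mathbf{d}_{a}(h,\mathcal{D})$; and, using parts (2) and (6) of Proposition 3.1 to pass from the cumulative form ``$|A|\leqslant b$'' to the level-set form ``$|A|=b$'' in (3.8) and (3.10), $\psi(b)=\max\{f(A)\mid A\in\mathcal{C},\,|A|=b\}=\mathbf{K}_{b}(f,\mathcal{C})$ and $\eta(b)=\max\{h(B)\mid B\in\mathcal{D},\,|B|=b\}=\mathbf{K}_{b}(h,\mathcal{D})$.

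With these identifications in hand, part (1) of Theorem 3.2 reads $\mathbf{K}_{l}(h,\mathcal{D})=\mathbf{K}_{m-l}(f,\mathcal{C})+wl-k$ for every $l\in[0,m]$, which is statement (1); and the sets $\mathcal{A}_{(\gamma)}$, $\mathcal{B}_{(\gamma)}$ of Theorem 3.2(2) become, via the above, exactly the sets $\mathcal{A}_{(\gamma)}=\{\mathbf{d}_{u}(f,\mathcal{C})\mid u\in[1,k],\,u\equiv\gamma+k~(\bmod~w)\}$ and $\mathcal{B}_{(\gamma)}=\{m+1-\mathbf{d}_{v}(h,\mathcal{D})\mid v\in[1,wm-k],\,v\equiv\gamma~(\bmod~w)\}$ of statement (2), so Theorem 3.2(2) gives $\mathcal{A}_{(\gamma)}\cap\mathcal{B}_{(\gamma)}=\emptyset$ and $\mathcal{A}_{(\gamma)}\cup\mathcal{B}_{(\gamma)}=[1,m]$ at once.

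I do not expect a genuine obstacle here: the substantive content has already been absorbed into Theorem 3.2, and what remains is careful bookkeeping. The one place that needs a moment's care is the reduction from the cumulative maxima appearing in the definitions (3.8), (3.10) of $\psi$ and $\eta$ to the single-level maxima in Definition 4.3 --- this is exactly the role of Proposition 3.1(2),(6), and it relies on the abundance property of $\mathcal{C}$ (respectively $\mathcal{D}$).
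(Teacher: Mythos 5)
Your proposal is correct and follows essentially the same route as the paper's own proof: both instantiate Theorem 3.2 with $(Y,\curlyeqprec)=(\mathcal{C},\subseteq)$, $g=|\cdot|$, $f\mid_{\mathcal{C}}$, $X=\mathcal{D}$ and $\sigma(B)=E-B$, identify $\varphi,\psi,\tau,\eta$ with the generalized weights and profiles of Definition 4.2, and read off the two conclusions. Your explicit appeal to Proposition 3.1(2),(6) to reconcile the cumulative maxima of (3.8), (3.10) with the level-set maxima of Definition 4.2 is exactly the step the paper handles implicitly.
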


\begin{proof}
Define $g:\mathcal{C}\longrightarrow\mathds{Z}$ as $g(A)=|A|$. Then, $((\mathcal{C},\subseteq),g)$ is an abundance with $g(E)=m$. Consider $f\mid_{\mathcal{C}}:\mathcal{C}\longrightarrow\mathbb{Z}$. Then, we have $f\mid_{\mathcal{C}}(E)=f(E)=k$. Furthermore, since $(E,f)$ is a $w$-demi-matroid, by Definition 4.1, $f\mid_{\mathcal{C}}$ satisfies the following two conditions:

$(i)$\,\,$f\mid_{\mathcal{C}}(\emptyset)=0$;

$(ii)$\,\,For any $A,B\in \mathcal{C}$ with $A\subseteq B$, $0\leqslant f\mid_{\mathcal{C}}(B)-f\mid_{\mathcal{C}}(A)\leqslant w(g(B)-g(A))$.
Define $\sigma:\mathcal{D}\longrightarrow \mathcal{C}$ as $\sigma(B)=E-B$. Obviously, $\sigma$ is a bijective map from $\mathcal{D}$ to $\mathcal{C}$. Now, define $\mu:\mathcal{D}\longrightarrow\mathds{Z}$ as $\mu(B)=m-g(\sigma(B))$, and consider $h\mid_{\mathcal{D}}:\mathcal{D}\longrightarrow\mathbb{Z}$. Then, for any $B\in \mathcal{D}$, it can be readily verified that
$$\text{$\mu(B)=|B|$ and $h\mid_{\mathcal{D}}(B)=f\mid_{\mathcal{C}}(\sigma(B))+w\cdot\mu(B)-k$}.$$
Now, with respect to $(g,f\mid_{\mathcal{C}})$ and $(\mu,h\mid_{\mathcal{D}})$, we define $\varphi$, $\psi$, $\tau$, $\eta$ exactly in the way as in (3.7)--(3.10). Then, by Definition 4.2, for any $a\in[0,k]$, $b\in[0,m]$, we have $\varphi(a)=\mathbf{d}_{a}(f,\mathcal{C})$, $\psi(b)=\mathbf{K}_{b}(f,\mathcal{C})$. Similarly, for any $a\in[0,wm-k]$, $b\in [0,m]$, it holds that $\tau(a)=\mathbf{d}_{a}(h,\mathcal{D})$, $\eta(b)=\mathbf{K}_{b}(h,\mathcal{D})$. Therefore, an application of Theorem 3.2 immediately leads to the desired result.
\end{proof}

\setlength{\parindent}{2em}
From now on, we focus our attention to posets over $E$, and we begin with some notations. For a poset $\mathbf{P}=(E,\preccurlyeq_{\mathbf{P}})$, recall that for any $B\subseteq E$, $B$ is referred to as an \textit{ideal} of $\mathbf{P}$ if for any $v\in B$ and $u\in E$, $u\preccurlyeq_{\mathbf{P}}v$ implies $u\in B$. We let $\mathcal{I}(\mathbf{P})$ denote the set of all the ideals of $\mathbf{P}$. Furthermore, the \textit{dual poset} of $\mathbf{P}$, denoted by $\mathbf{\overline{P}}=(E,\preccurlyeq_{\mathbf{\overline{P}}})$, is defined as
$$\text{$u\preccurlyeq_{\mathbf{\overline{P}}} v\Longleftrightarrow v\preccurlyeq_{\mathbf{P}}u$ for all $(u,v)\in E\times E$}.$$

The relation between posets over $E$ and the aforementioned notion of abundance is detailed in the following proposition.

\setlength{\parindent}{0em}
\begin{proposition}
{\bf{(1)}}\,\,Let $\mathbf{P}=(E,\preccurlyeq_{\mathbf{P}})$ be a poset. Then $\emptyset,E\in\mathcal{I}(\mathbf{P})$, $\mathcal{I}(\mathbf{P})$ is an abundance, and for any $I,J\in\mathcal{I}(\mathbf{P})$, we have $I\cap J\in\mathcal{I}(\mathbf{P})$, $I\cup J\in\mathcal{I}(\mathbf{P})$.

{\bf{(2)}}\,\,Let $\mathcal{C}\subseteq 2^{E}$ such that $\emptyset,E\in\mathcal{C}$, $\mathcal{C}$ is an abundance, and for any $I,J\in\mathcal{C}$, it holds that $I\cap J\in\mathcal{C}$, $I\cup J\in\mathcal{C}$. Then, there exists a poset $\mathbf{Q}=(E,\preccurlyeq_{\mathbf{Q}})$ such that $\mathcal{C}=\mathcal{I}(\mathbf{Q})$.

{\bf{(3)}}\,\,Suppose that $E=\{1,2,3\}$ and $\mathcal{C}=\{\emptyset,\{1\},\{2\},\{1,3\},\{2,3\},\{1,2,3\}\}$. Then, we have $\emptyset,E\in\mathcal{C}$, and $\mathcal{C}$ is an abundance. Furthermore, for any poset $\mathbf{P}=(E,\preccurlyeq_{\mathbf{P}})$, we have $\mathcal{C}\neq\mathcal{I}(\mathbf{P})$.
\end{proposition}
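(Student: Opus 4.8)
The plan is to treat the three parts separately, since they are largely independent. Part (1) asserts that the ideal-lattice $\mathcal{I}(\mathbf{P})$ of a poset $\mathbf{P}=(E,\preccurlyeq_{\mathbf{P}})$ is a sublattice of $2^{E}$ containing $\emptyset$ and $E$, and that it forms an abundance in the sense of the section. First I would observe that $\emptyset$ and $E$ are vacuously ideals, and that the intersection and union of two ideals are again ideals (an element below something in $I\cap J$ is below something in $I$ and below something in $J$, hence in both; similarly for $I\cup J$). For the abundance property, the two conditions to check are: given an ideal $A\subsetneqq E$ there is an ideal $B\supseteq A$ with $|B|=|A|+1$, and given a nonempty ideal $B$ there is an ideal $A\subseteq B$ with $|A|=|B|-1$. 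For the first, pick any element $x\in E\setminus A$ that is minimal in the induced order on $E\setminus A$; then $A\cup\{x\}$ is an ideal, because anything below $x$ is either in $A$ or (by minimality of $x$ in the complement) equal to $x$. For the second, pick $x\in B$ maximal in $\mathbf{P}$ restricted to $B$; then $B\setminus\{x\}$ is an ideal, since removing a maximal element of an ideal cannot violate downward closure. This handles part (1).

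For part (2), I start from a family $\mathcal{C}\subseteq 2^{E}$ that contains $\emptyset$ and $E$, is an abundance, and is closed under $\cap$ and $\cup$, and I must manufacture a poset $\mathbf{Q}$ with $\mathcal{I}(\mathbf{Q})=\mathcal{C}$. The standard construction is to define, for $u,v\in E$, $u\preccurlyeq_{\mathbf{Q}} v$ iff every set in $\mathcal{C}$ containing $v$ also contains $u$; equivalently, $u$ lies in the smallest member of $\mathcal{C}$ containing $v$ (the intersection of all members of $\mathcal{C}$ containing $v$, which is in $\mathcal{C}$ by $\cap$-closure and is nonempty since it contains $v$ and, e.g., $E\in\mathcal{C}$ is one such set). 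This relation is clearly reflexive and transitive; antisymmetry needs a short argument showing that if $u$ and $v$ force each other then no member of $\mathcal{C}$ separates them, which combined with the abundance property (the ``add one element'' chain condition) forces $u=v$. Then I would verify $\mathcal{I}(\mathbf{Q})=\mathcal{C}$: every $C\in\mathcal{C}$ is downward closed by construction, so $\mathcal{C}\subseteq\mathcal{I}(\mathbf{Q})$; conversely, given an ideal $I$ of $\mathbf{Q}$, write $I=\bigcup_{v\in I}\downarrow v$ where $\downarrow v$ is the smallest member of $\mathcal{C}$ containing $v$, and use $\cup$-closure of $\mathcal{C}$ to conclude $I\in\mathcal{C}$. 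The main obstacle here is the antisymmetry/tightness argument: one must genuinely use the abundance hypothesis, because without it a family closed under $\cap$ and $\cup$ (e.g. $\{\emptyset,E\}$ on a two-element $E$) need not be the ideal family of a poset, and the proof must pin down exactly where this enters — I expect it to be the step where, having a chain $\emptyset=C_0\subsetneq C_1\subsetneq\cdots\subsetneq C_m=E$ in $\mathcal{C}$ with $|C_i|=i$ coming from the abundance condition, one shows every singleton-increment set is reached, forcing the order to be ``thin enough.''

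For part (3), the task is purely a finite check: with $E=\{1,2,3\}$ and $\mathcal{C}=\{\emptyset,\{1\},\{2\},\{1,3\},\{2,3\},\{1,2,3\}\}$, I would first confirm $\emptyset,E\in\mathcal{C}$ and run the abundance test by inspection (from $\emptyset$ one can grow to $\{1\}$ or $\{2\}$; from $\{1\}$ to $\{1,3\}$; from $\{2\}$ to $\{2,3\}$; from $\{1,3\}$ and $\{2,3\}$ to $\{1,2,3\}$; and each nonempty set has a member one size smaller inside it). Then, to show $\mathcal{C}\neq\mathcal{I}(\mathbf{P})$ for every poset $\mathbf{P}$ on $E$, I invoke part (1): if it were an ideal family it would be closed under intersection, but $\{1,3\}\cap\{2,3\}=\{3\}\notin\mathcal{C}$. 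This one-line contradiction finishes the statement, and simultaneously illustrates that the $\cap$-closure hypothesis in part (2) is not automatic from the abundance condition alone.
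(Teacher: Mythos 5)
Your proposal is correct and follows essentially the same route as the paper: part (1) is the standard minimal/maximal-element argument (which the paper delegates to cited references), part (2) uses the same construction --- ordering $E$ by containment of the smallest member of $\mathcal{C}$ containing each element, with the abundance condition supplying antisymmetry --- and part (3) derives the contradiction from a failure of lattice-closure. The only cosmetic difference is that you use $\{1,3\}\cap\{2,3\}=\{3\}\notin\mathcal{C}$ where the paper uses $\{1\}\cup\{2\}=\{1,2\}\notin\mathcal{C}$; both are equally valid in view of part (1).
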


\begin{proof}
(1)\,\,By [32, Proposition 7] or [24, Proposition 1.1], we deduce that $\mathcal{I}(\mathbf{P})$ is an abundance. Now the rest is well-known and is also straightforward to verify.

(2)\,\,Since $E\in\mathcal{C}$ and $I\cap J\in\mathcal{C}$ for all $I,J\in\mathcal{C}$, for any $e\in E$, there uniquely exists $H_{(e)}\in \mathcal{C}$ such that $e\in H_{(e)}$, and
$$\text{For any $B\in\mathcal{C}$, $e\in B\Longleftrightarrow H_{(e)}\subseteq B$}.$$
Now define the relation $\preccurlyeq$ on $E$ such that
$$\text{$u\preccurlyeq v\Longleftrightarrow H_{(u)}\subseteq H_{(v)}$ for all $u,v\in E$}.$$
Since $\mathcal{C}$ is an abundance, it is routine to check that $\preccurlyeq$ is indeed a partial order. Furthermore, by $I\cup J\in\mathcal{C}$ for all $I,J\in\mathcal{C}$, one can check that $\mathcal{C}=\mathcal{I}(E,\preccurlyeq)$. We omit the details of the verification.

(3)\,\,It can be readily verified that $\emptyset,E\in\mathcal{C}$ and $\mathcal{C}$ is an abundance. Noticing that $\{1\}\cup\{2\}=\{1,2\}\not\in\mathcal{C}$, by (1), we conclude that $\mathcal{C}\neq\mathcal{I}(\mathbf{P})$ for any poset $\mathbf{P}=(E,\preccurlyeq_{\mathbf{P}})$.
\end{proof}

\setlength{\parindent}{2em}
Based on Proposition 4.2, as a special case of Definition 4.2, we give the following Definition.

\setlength{\parindent}{0em}
\begin{definition}
{Let $(E,f)$ be a $w$-demi-matroid with $k=f(E)$, and suppose that $\mathbf{P}=(E,\preccurlyeq_{\mathbf{P}})$ is a poset. Then, for any $a\in[0,k]$, the $a$-th generalized weight of $(f,\mathbf{P})$, denoted by $\mathbf{d}_{a}(f,\mathbf{P})$, is defined as
$$\mathbf{d}_{a}(f,\mathbf{P})=\mathbf{d}_{a}(f,\mathcal{I}(\mathbf{P}))=\min\{|B|\mid B\in\mathcal{I}(\mathbf{P}),~a\leqslant f(B)\},$$
and for any $b\in[0,m]$, the $b$-th profile of $(f,\mathbf{P})$, denoted by $\mathbf{K}_{b}(f,\mathbf{P})$, is defined as
$$\mathbf{K}_{b}(f,\mathbf{P})=\mathbf{K}_{b}(f,\mathcal{I}(\mathbf{P}))=\max\{f(B)\mid B\in\mathcal{I}(\mathbf{P}),~|B|=b\}.$$
}
\end{definition}

\setlength{\parindent}{2em}
\begin{remark}
{Let $\mathbf{P}=(E,\preccurlyeq_{\mathbf{P}})$ be a poset, and for any $B\subseteq E$, let $\langle B\rangle_{\mathbf{P}}=\{u\mid u\in E,~\exists~v\in B~s.t.~u\preccurlyeq_{\mathbf{P}}v\}$ denote the ideal generated by $B$. Then, for any $a\in[0,k]$, $\mathbf{d}_{a}(f,\mathbf{P})$ can be alternatively defined as
$$\min\{|\langle B\rangle_{\mathbf{P}}|\mid B\subseteq E,~a\leqslant f(B)\},$$
and for any $b\in[0,m]$, $\mathbf{K}_{b}(f,\mathbf{P})$ can be alternatively defined as
$$\max\{f(B)\mid B\subseteq E,~|\langle B\rangle_{\mathbf{P}}|\leqslant b\}.$$
}
\end{remark}

\setlength{\parindent}{2em}
Now we state and prove our second Wei-type duality theorem for $w$-demi-matroids.

\setlength{\parindent}{0em}
\begin{theorem}
{Let $(E,f)$ be a $w$-demi-matroid with $k=f(E)$, and define $h:2^{E}\longrightarrow\mathds{Z}$ as $h(B)=f(E-B)+w|B|-k$. Then, for any poset $\mathbf{P}=(E,\preccurlyeq_{\mathbf{P}})$, we have:

{\bf{(1)}}\,\,For any $l\in [0,m]$, $\mathbf{K}_{l}(h,\mathbf{\overline{P}})=\mathbf{K}_{m-l}(f,\mathbf{P})+wl-k$;

{\bf{(2)}}\,\,For any $\gamma\in\mathds{Z}$, let $\mathcal{A}_{(\gamma)}=\{\mathbf{d}_{u}(f,\mathbf{P})\mid u\in[1,k],~u\equiv\gamma+k~(\bmod~w)\}$, $\mathcal{B}_{(\gamma)}=\{m+1-\mathbf{d}_{v}(h,\mathbf{\overline{P}})\mid v\in[1,wm-k],~v\equiv\gamma~(\bmod~w)\}$. Then, we have $\mathcal{A}_{(\gamma)}\cap \mathcal{B}_{(\gamma)}=\emptyset$, $\mathcal{A}_{(\gamma)}\cup \mathcal{B}_{(\gamma)}=[1,m]$.
}
\end{theorem}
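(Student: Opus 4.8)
The plan is to obtain Theorem 4.2 as an immediate specialization of Theorem 4.1, taking $\mathcal{C}=\mathcal{I}(\mathbf{P})$. First I would invoke Proposition 4.2(1), which guarantees that $\emptyset,E\in\mathcal{I}(\mathbf{P})$ and that $\mathcal{I}(\mathbf{P})$ is an abundance; hence $\mathcal{C}:=\mathcal{I}(\mathbf{P})$ satisfies all the hypotheses imposed on $\mathcal{C}$ in Theorem 4.1.

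The one point requiring verification is that the set $\mathcal{D}:=\{E-A\mid A\in\mathcal{I}(\mathbf{P})\}$ appearing in Theorem 4.1 coincides with $\mathcal{I}(\mathbf{\overline{P}})$. Unwinding the definition of the dual poset, a subset $B\subseteq E$ lies in $\mathcal{I}(\mathbf{\overline{P}})$ precisely when, for all $v\in B$ and $u\in E$, $v\preccurlyeq_{\mathbf{P}}u$ implies $u\in B$; equivalently, $B$ is a filter (up-set) of $\mathbf{P}$, which in turn is equivalent to $E-B\in\mathcal{I}(\mathbf{P})$. Therefore $B\in\mathcal{I}(\mathbf{\overline{P}})\Longleftrightarrow E-B\in\mathcal{I}(\mathbf{P})\Longleftrightarrow B\in\mathcal{D}$, so $\mathcal{D}=\mathcal{I}(\mathbf{\overline{P}})$. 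This is just the standard fact that complementation in $E$ interchanges the ideals of $\mathbf{P}$ with those of $\mathbf{\overline{P}}$.

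With $\mathcal{C}=\mathcal{I}(\mathbf{P})$ and $\mathcal{D}=\mathcal{I}(\mathbf{\overline{P}})$ identified, Definition 4.3 gives, for the relevant ranges of indices, $\mathbf{d}_{a}(f,\mathbf{P})=\mathbf{d}_{a}(f,\mathcal{C})$, $\mathbf{K}_{b}(f,\mathbf{P})=\mathbf{K}_{b}(f,\mathcal{C})$, and likewise $\mathbf{d}_{v}(h,\mathbf{\overline{P}})=\mathbf{d}_{v}(h,\mathcal{D})$, $\mathbf{K}_{l}(h,\mathbf{\overline{P}})=\mathbf{K}_{l}(h,\mathcal{D})$. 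Substituting these equalities into parts (1) and (2) of Theorem 4.1 yields verbatim the two assertions of Theorem 4.2, so the proof is complete. The argument presents essentially no obstacle; the only step that is not a purely formal substitution is the identity $\mathcal{I}(\mathbf{\overline{P}})=\{E-A\mid A\in\mathcal{I}(\mathbf{P})\}$, and even that is routine.
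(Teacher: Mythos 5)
Your proposal is correct and follows essentially the same route as the paper: the paper's proof also specializes Theorem 4.1 to $\mathcal{C}=\mathcal{I}(\mathbf{P})$, using the identity $\mathcal{I}(\mathbf{\overline{P}})=\{E-B\mid B\in\mathcal{I}(\mathbf{P})\}$ (cited there from the literature, where you instead supply the short direct verification) together with Definition 4.3. No gaps.
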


\begin{proof}
Let $\mathbf{P}=(E,\preccurlyeq_{\mathbf{P}})$ be a poset. Then, by [24, Lemma 1.2], we deduce that $\mathcal{I}(\mathbf{\overline{P}})=\{E-B\mid B\in\mathcal{I}(\mathbf{P})\}$. Now the result immediately follows from Definition 4.3 and Theorem 4.1.
\end{proof}

\setlength{\parindent}{2em}
\begin{remark}
As in Theorem 4.2, with $w$ set to be $1$, we recover the Wei-type duality theorem for demi-matroids [6, Theorem 6]. If we let $\preccurlyeq_{\mathbf{P}}$ be the trivial (anti-chain) partial order on $E$, then Theorem 4.2 becomes an analogue of the Wei-type duality theorem for $(q,m)$-demi-polymatroids [7, Theorem 4].
\end{remark}

\section{Wei-type duality theorems for $w$-demi-polymatroids}
\setlength{\parindent}{2em}
As combinatorial notions arising naturally from rank metric codes, $(q,m)$-polymatroids have been independently introduced by Shiromoto in \cite{38} and by Gorla, Jurrius, l\'{o}pez and Ravagnani in \cite{21}, which have been further generalized to the notion of $(q,m)$-demi-polymatroids by Britz, Mammomiti and Shiromotoin \cite{7}, and by Ghorpade and Johnsen in \cite{18}. $(q,m)$-demi-polymatroids defined over vector spaces over a field have been studied extensively in \cite{7} and \cite{18}, and in both papers, Wei-type duality theorems for $(q,m)$-demi-polymatroids have been established. In this section, we propose and study $w$-demi-polymatroids defined over modules with a composition series, and prove the corresponding Wei-type duality theorems. And in Section 6, we will show that $w$-demi-polymatroids arise naturally from codes over modules.

Throughout this section, we let $R$ and $S$ be rings, which are always assumed to be associative with multiplicative identity. Recall that for any left $R$-module $M$, $M$ is said to be \textit{simple} if $M\neq\{0\}$ and $M$ has no left $R$-submodules other than $\{0\}$ and $M$. A \textit{composition series} of $M$ is a chain of left $R$-submodules $$\{0\}=L_{0}\subseteq L_{1}\subseteq\cdots\subseteq L_{t-1}\subseteq L_{t}=M,$$
where $t\in\mathds{N}$ and $L_{i}/L_{i-1}$ is a simple left $R$-module for all $i\in[1,t]$ (see, e.g., [1, Chapter 3], [11, Section 13], [39, Section 4]). For any left $R$-module $M$, $\mathcal{P}(M)$ will denote the set of all the left $R$-submodules of $M$. We note that all the above notions parallelly apply to right $S$-modules.

Now we present a notion that is a slight modified version of [39, Section 4, Paragraph 4]. Let $\Omega$ be a nonempty collection of simple left $R$-modules. For any left $R$-module $M$ with a composition series $\{0\}=L_{0}\subseteq L_{1}\subseteq\cdots\subseteq L_{t}=M$, we define $\rho_{\Omega}(M)\in\mathbb{N}$ as
\begin{equation}\hspace*{-4mm}\rho_{\Omega}(M)=|\{i\mid i\in[1,t], \exists~W\in\Omega~s.t.~L_{i}/L_{i-1}\cong W~\text{as left $R$-modules}\}|.\end{equation}
Similarly, let $\Delta$ be a nonempty collection of simple right $S$-modules. For any right $S$-module $N$ with a composition series $\{0\}=H_{0}\subseteq H_{1}\subseteq\cdots\subseteq H_{p-1}\subseteq H_{p}=N$, we define $\lambda_{\Delta}(N)\in\mathbb{N}$ as
\begin{equation}\hspace*{-5mm}\lambda_{\Delta}(N)=|\{i\mid i\in[1,p],\exists~Z\in\Delta~s.t.~H_{i}/H_{i-1}\cong Z~\text{as right $S$-modules}\}|.\end{equation}

By the Jordan-H\"{o}lder theorem (see [11, Theorem 13.7]), both $\rho_{\Omega}(M)$ and $\lambda_{\Delta}(N)$ are independent of the choice of the composition series, and hence well-defined. We will show in Section 5.2 that $\rho_{\Omega}$ and $\lambda_{\Delta}$ are closely related in suitable settings, which will lead to Wei-type duality theorems.

\begin{remark}
If $R$ is a field, which implies that the only simple left $R$-module up to isomorphism is $_{R}R$, then for any finite dimensional vector space $M$ over $R$, we have $\rho_{\Omega}(M)=\dim_{R}(M)$. Therefore the notion $\rho_{\Omega}$ naturally extends the dimension function of vector spaces over a field. More generally, let $R$ be an arbitrary ring, as long as $\Omega$ contains all the simple left $R$-modules up to isomorphism, $\rho_{\Omega}(M)$ then becomes the length of a composition series of $M$. We note that similar properties hold for $\lambda_{\Delta}$ in a parallel fashion.
\end{remark}

\subsection{Generalized weights and profiles of $w$-demi-polymatroids}

In this subsection, we will focus on left $R$-modules, and all the discussions remain valid for right $S$-modules parallelly. Throughout, we let $\Omega$ be a nonempty collection of simple left $R$-modules, and let $M$ be a left $R$-module with a composition series.

As detailed in the following Lemma, which is a consequence of [39, Proposition 4.1], $\rho_{\Omega}$ maintains some basic properties of the dimension function of vector spaces.

\setlength{\parindent}{0em}
\begin{lemma}
{{\bf{(1)}}\,\,Define $g:\mathcal{P}(M)\longrightarrow\mathds{N}$ as $g(V)=\rho_{\Omega}(V)$. Then, $((\mathcal{P}(M),\subseteq),g)$ is an abundance.

{\bf{(2)}}\,\,For any $U,V\in\mathcal{P}(M)$ with $U\subseteq V$, $\rho_{\Omega}(V)=\rho_{\Omega}(U)+\nolinebreak\rho_{\Omega}(V/U)$.

{\bf{(3)}}\,\,For any $X,Y\in\mathcal{P}(M)$, $\rho_{\Omega}(X+Y)+\rho_{\Omega}(X\cap Y)=\rho_{\Omega}(X)+\rho_{\Omega}(Y)$.

{\bf{(4)}}\,\,For $s\in\mathds{Z}^{+}$, $(L_{(1)},\dots,L_{(s)})\in\mathcal{P}(M)^{s}$ with $L_{(s)}\subseteq\cdots\subseteq L_{(1)}$, and $X,Y\in\mathcal{P}(M)$ with $X\subseteq Y$, it holds that
$$\hspace*{-6mm}\mbox{$0\leqslant\left(\sum_{i=1}^{s}(-1)^{i-1}\rho_{\Omega}(L_{(i)}\cap Y)\right)-\left(\sum_{i=1}^{s}(-1)^{i-1}\rho_{\Omega}(L_{(i)}\cap X)\right)\leqslant\rho_{\Omega}(Y)-\rho_{\Omega}(X)$}.$$
}
\end{lemma}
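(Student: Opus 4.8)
The plan is to deduce Lemma 5.1 directly from [39, Proposition 4.1], treating parts (1)--(3) as essentially a repackaging of the cited result and then bootstrapping (4) from (2) and (3). First I would recall that [39, Proposition 4.1] asserts precisely that $\rho_{\Omega}$ is a (normalized, unit-increment) rank-type function on $\mathcal{P}(M)$: it vanishes on $\{0\}$, is monotone, additive along short exact sequences, and submodular. Part (2) is the additivity-along-inclusions statement, which is exactly (or an immediate reformulation of) the short-exact-sequence additivity; part (3) is the submodular law stated with equality, which holds because for submodules $X,Y$ one has the canonical isomorphism $(X+Y)/Y\cong X/(X\cap Y)$, so applying (2) twice — once to $X\cap Y\subseteq X$ and once to $Y\subseteq X+Y$ — and comparing the two expressions for $\rho_{\Omega}(X/(X\cap Y))$ gives the identity. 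For part (1), monotonicity follows from (2) since $\rho_{\Omega}(V/U)\geqslant 0$, and the two ``abundance'' conditions in Definition 3.1 (existence of a submodule one step up, resp. one step down, in $\rho_{\Omega}$-value) follow from the existence of a composition series refining any given chain: between $U\subsetneqq V$ with $\rho_{\Omega}(U)<\rho_{\Omega}(V)$ one can insert a submodule $W$ with $\rho_{\Omega}(W)=\rho_{\Omega}(U)+1$, and similarly one step below, by taking an appropriate term of a composition series of $M$ passing through $U$ and $V$.

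The substantive part is (4), the alternating-sum inequality for a nested tuple $L_{(s)}\subseteq\cdots\subseteq L_{(1)}$. My approach here is to prove it by induction on $s$, reducing at each stage to a two-term submodular estimate. For the base case $s=1$ the claim is just monotonicity of $V\mapsto\rho_{\Omega}(L_{(1)}\cap V)$ together with the bound $\rho_{\Omega}(L_{(1)}\cap Y)-\rho_{\Omega}(L_{(1)}\cap X)\leqslant\rho_{\Omega}(Y)-\rho_{\Omega}(X)$; the latter comes from (2) applied to $L_{(1)}\cap X\subseteq L_{(1)}\cap Y$ combined with the fact that $(L_{(1)}\cap Y)/(L_{(1)}\cap X)$ embeds into $Y/X$ (indeed $(L_{(1)}\cap Y)\cap X=L_{(1)}\cap X$, so $(L_{(1)}\cap Y)/(L_{(1)}\cap X)\cong((L_{(1)}\cap Y)+X)/X\subseteq Y/X$), whence by (2) and monotonicity its $\rho_{\Omega}$-value is at most $\rho_{\Omega}(Y/X)=\rho_{\Omega}(Y)-\rho_{\Omega}(X)$. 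For the inductive step I would group the alternating sum as $\rho_{\Omega}(L_{(1)}\cap Y)-\big(\rho_{\Omega}(L_{(2)}\cap Y)-\rho_{\Omega}(L_{(3)}\cap Y)+\cdots\big)$ and treat the parenthesized tail as a fresh alternating sum of length $s-1$ over the nested tuple $L_{(s)}\subseteq\cdots\subseteq L_{(2)}$; the care needed is that subtracting the tail reverses the direction of the inequality, so I would instead peel off the \emph{last} two terms, or argue via the telescoping identity $\sum_{i}(-1)^{i-1}\rho_{\Omega}(L_{(i)}\cap V)=\sum_{j}\big(\rho_{\Omega}(L_{(2j-1)}\cap V)-\rho_{\Omega}(L_{(2j)}\cap V)\big)$ (with a possible leftover term), and then show each bracket $\rho_{\Omega}(L_{(2j-1)}\cap V)-\rho_{\Omega}(L_{(2j)}\cap V)$ is monotone and $\leqslant$-controlled in $V$ using submodularity (3): concretely $\rho_{\Omega}(A\cap V)-\rho_{\Omega}(B\cap V)$ for $B\subseteq A$ equals $\rho_{\Omega}((A\cap V)/(B\cap V))$ by (2), and this quantity is nondecreasing in $V$ and its increment from $X$ to $Y$ is at most that of $\rho_{\Omega}(A\cap V)$, hence summable against the master bound.

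The main obstacle I anticipate is organizing the bookkeeping in (4) so that the signs never force an inequality in the wrong direction: the clean way is to reduce everything to the single assertion that for submodules $B\subseteq A$ of $M$, the function $V\mapsto\rho_{\Omega}(A\cap V)-\rho_{\Omega}(B\cap V)=\rho_{\Omega}\big((A\cap V)/(B\cap V)\big)$ is both nonnegative-increment (i.e. $\geqslant 0$ when passing $X\subseteq Y$) and has increment bounded by $\rho_{\Omega}(Y)-\rho_{\Omega}(X)$ — the first because $(A\cap X)/(B\cap X)\hookrightarrow (A\cap Y)/(B\cap Y)$, the second because the cokernel of that embedding embeds into $(A\cap Y)/(A\cap X)\hookrightarrow Y/X$. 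Pairing the $2j-1$ and $2j$ terms and summing these two facts over $j$ yields exactly the double inequality in (4) (a leftover $L_{(s)}$ term when $s$ is odd is handled by the $s=1$ argument). Everything else reduces to the standard isomorphism theorems and the properties of $\rho_{\Omega}$ already guaranteed by [39, Proposition 4.1], so no genuinely new ideas are required beyond this packaging.
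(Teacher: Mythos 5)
First, a remark on the comparison itself: the paper does not actually prove this lemma --- it is presented as ``a consequence of [39, Proposition 4.1]'' with no argument given --- so you are supplying details where the authors supply only a citation. Your treatment of (1)--(3) is correct and is surely what is intended: (2) is Jordan--H\"{o}lder additivity along an inclusion, (3) follows from (2) via the isomorphism $(X+Y)/Y\cong X/(X\cap Y)$, and the two abundance conditions in (1) come from splicing composition series through a given submodule and locating the first (resp.\ last) composition factor isomorphic to a member of $\Omega$.

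There is, however, a genuine gap in your upper bound for (4). Write $\Delta_i=\rho_{\Omega}(L_{(i)}\cap Y)-\rho_{\Omega}(L_{(i)}\cap X)$, so the quantity to be bounded is $\sum_{i=1}^{s}(-1)^{i-1}\Delta_i$. Your per-pair estimates give $0\leqslant\Delta_{2j-1}-\Delta_{2j}$ and $\Delta_{2j-1}-\Delta_{2j}\leqslant\Delta_{2j-1}\leqslant\rho_{\Omega}(Y)-\rho_{\Omega}(X)$; summing the latter over $j$ yields only $(\text{number of pairs})\cdot(\rho_{\Omega}(Y)-\rho_{\Omega}(X))$, not $\rho_{\Omega}(Y)-\rho_{\Omega}(X)$, so ``summing these two facts over $j$'' does not produce the right-hand inequality once $s\geqslant 3$. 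The missing ingredient is the chaining inequality $\Delta_{i+1}\leqslant\Delta_i$ for \emph{every} $i$, across pairs and not just within them. This follows from exactly the second-isomorphism-theorem embedding you already use: since $L_{(i+1)}\subseteq L_{(i)}$, the module $(L_{(i+1)}\cap Y)/(L_{(i+1)}\cap X)$ embeds into $(L_{(i)}\cap Y)/(L_{(i)}\cap X)$; equivalently, $\Delta_i=\rho_{\Omega}\bigl(((L_{(i)}\cap Y)+X)/X\bigr)$ and these submodules of $Y/X$ form a decreasing chain. Once you have $\rho_{\Omega}(Y)-\rho_{\Omega}(X)\geqslant\Delta_1\geqslant\Delta_2\geqslant\cdots\geqslant\Delta_s\geqslant0$, the elementary fact that an alternating sum of a nonincreasing nonnegative sequence lies in $[0,\Delta_1]$ gives both inequalities of (4) at once (and handles the unpaired term when $s$ is odd). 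This is a repair rather than a change of strategy --- all the isomorphism-theorem ingredients are already present in your write-up --- but the final summation step as stated does not go through.
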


\setlength{\parindent}{2em}
Now we define $w$-demi-polymatroids over modules.

\setlength{\parindent}{0em}
\begin{definition}
{For any $f:\mathcal{P}(M)\longrightarrow\mathds{Z}$ and $w\in\mathds{Z}^{+}$, $(M,f,\Omega)$ is said to be a $w$-demi-polymatroid if the following two conditions hold:

{\bf{(1)}}\,\,$f(\{0\})=0$;

{\bf{(2)}}\,\,For any $X,Y\in\mathcal{P}(M)$ with $X\subseteq Y$, $0\leqslant f(Y)-f(X)\leqslant w(\rho_{\Omega}(Y)-\nolinebreak\rho_{\Omega}(X))$.
}
\end{definition}

\setlength{\parindent}{2em}
We note that if $R$ is field in Definition 5.1, then $\rho_{\Omega}=\dim_{R}$ by Remark 5.1, and we recover the definition of $(q,m)$-demi-polymatroids defined over vector spaces (see [18, Definition 1], [7, Section 3]).

Now we define generalized weights and profiles for $w$-demi-polymatroids.

\setlength{\parindent}{0em}
\begin{definition}
{Let $f:\mathcal{P}(M)\longrightarrow\mathds{Z}$ and $w\in\mathds{Z}^{+}$ such that $(M,f,\Omega)$ is a $w$-demi-polymatroid with $f(M)=k$, and let $\Phi\subseteq\mathcal{P}(M)$ such that $\{0\},M\in\Phi$ and $((\Phi,\subseteq),(\rho_{\Omega})\mid_{\Phi})$ is an abundance. For any $a\in[0,k]$, the $a$-th generalized weight of $((M,f,\Omega),\Phi)$, denoted by $\mathbf{d}_{a}((M,f,\Omega),\Phi)$, is defined as $$\mathbf{d}_{a}((M,f,\Omega),\Phi)=\min\{\rho_{\Omega}(W)\mid W\in \Phi,~a\leqslant f(W)\},$$
and for any $b\in[0,\rho_{\Omega}(M)]$, the $b$-th profile of $((M,f,\Omega),\Phi)$, denoted by $\mathbf{K}_{b}((M,f,\Omega),\Phi)$, is defined as
$$\mathbf{K}_{b}((M,f,\Omega),\Phi)=\max\{f(W)\mid W\in \Phi,~\rho_{\Omega}(W)=b\}.$$
}
\end{definition}

\setlength{\parindent}{2em}
Adopting the notations in Definition 5.2, an application of Proposition 3.1 yields the following corollary.

\setlength{\parindent}{0em}
\begin{corollary}
{\bf{(1)}}\,\,Define $\varphi:[0,k]\longrightarrow[0,\rho_{\Omega}(M)]$ as
$$\varphi(a)=\mathbf{d}_{a}((M,f,\Omega),\Phi),$$
and define $\psi:[0,\rho_{\Omega}(M)]\longrightarrow[0,k]$ as
$$\psi(b)=\mathbf{K}_{b}((M,f,\Omega),\Phi).$$
Then, $(\varphi,\psi)$ is a Galois connection between $[0,k]$ and $[0,\rho_{\Omega}(M)]$.

{\bf{(2)}}\,\,$\mathbf{K}_{0}((M,f,\Omega),\Phi)=0$.

{\bf{(3)}}\,\,$\mathbf{K}_{l}((M,f,\Omega),\Phi)-\mathbf{K}_{l-1}((M,f,\Omega),\Phi)\leqslant w$ for all $l\in[1,\rho_{\Omega}(M)]$.

{\bf{(4)}}\,\,$\mathbf{d}_{r}((M,f,\Omega),\Phi)+1\leqslant\mathbf{d}_{r+w}((M,f,\Omega),\Phi)$ for all $r\in[0,k-w]$.
\end{corollary}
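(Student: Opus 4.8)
The plan is to show that Corollary 5.1 follows directly from an application of Proposition 3.1 to a suitably chosen tuple $(Y,m,g,w,k,f,X,\sigma)$ built from the $w$-demi-polymatroid data. Concretely, I would set $Y=\Phi$ with $m=\rho_{\Omega}(M)$, take $g:\Phi\longrightarrow[0,m]$ to be $g=(\rho_{\Omega})\!\mid_{\Phi}$, take the map $f$ in the tuple to be $f\!\mid_{\Phi}:\Phi\longrightarrow[0,k]$ (renaming the $w$-demi-polymatroid function), and take $X=\Phi$ with $\sigma=\mathrm{id}_{\Phi}$. The main preliminary task is to verify that this tuple satisfies all the hypotheses imposed on $(Y,m,g,w,k,f,X,\sigma)$ in Section 3.1: that $g$ is surjective onto $[0,m]$, that $f\!\mid_{\Phi}$ really lands in $[0,k]$, and that conditions (3.1)--(3.4) hold. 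Surjectivity of $g$ and the abundance-type conditions (3.3) and (3.4) are exactly the content of ``$((\Phi,\subseteq),(\rho_{\Omega})\!\mid_{\Phi})$ is an abundance'' (Definition 3.1, rewritten via $g(\pi_{_Y})=m$ and the two element-lifting clauses), while (3.1) is the statement $g(W)=0\Rightarrow f(W)=0$, which follows because $\rho_{\Omega}(W)=0$ forces $W=\{0\}$ and then $f(\{0\})=0$ by Definition 5.2(1). For (3.2), namely $wg(W)-f(W)\leqslant wm-k$, apply Definition 5.1(2) to the pair $W\subseteq M$ to get $k-f(W)=f(M)-f(W)\leqslant w(\rho_{\Omega}(M)-\rho_{\Omega}(W))=w(m-g(W))$, which rearranges to the desired inequality; one also checks $0\leqslant f(W)\leqslant wg(W)\leqslant wm$ so that $f\!\mid_{\Phi}$ indeed maps into $[0,k]$ and $wm-k\geqslant 0$.

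Next I would unwind the definitions of $\varphi$ and $\psi$ from (3.7) and (3.8) with the above substitutions: $\varphi(a)=\min\{\rho_{\Omega}(W)\mid W\in\Phi,\ a\leqslant f(W)\}$ and $\psi(b)=\max\{f(W)\mid W\in\Phi,\ \rho_{\Omega}(W)\leqslant b\}$. Comparing with Definition 5.2, the formula for $\varphi$ is literally $\mathbf{d}_{a}((M,f,\Omega),\Phi)$, so $\varphi$ coincides with the map in the statement of part (1). The formula for $\psi$ is not syntactically identical to $\mathbf{K}_{b}((M,f,\Omega),\Phi)$ (the latter uses $\rho_{\Omega}(W)=b$ rather than $\rho_{\Omega}(W)\leqslant b$), but part (2) of Proposition 3.1 states precisely that $\psi(b)=\max\{f(W)\mid W\in\Phi,\ \rho_{\Omega}(W)=b\}=\mathbf{K}_{b}((M,f,\Omega),\Phi)$, so the two agree; here one uses the abundance property to know that the set $\{W\in\Phi\mid\rho_{\Omega}(W)=b\}$ is nonempty for every $b\in[0,m]$. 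With these identifications, part (1) of Corollary 5.1 is exactly part (1) of Proposition 3.1, part (2) is the special case $\psi(0)=\mathbf{K}_0=0$ of part (3) of Proposition 3.1, and part (3) of Corollary 5.1 is the remaining content of part (3) of Proposition 3.1.

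Finally, part (4) of Corollary 5.1 is part (4) of Proposition 3.1, which asserts $\varphi(a)\in[1,m]$ for all $a\in[1,k]$ and $\varphi(r)+1\leqslant\varphi(r+w)$ for all $r\in[0,k-w]$; the second clause is exactly the claimed inequality $\mathbf{d}_{r}((M,f,\Omega),\Phi)+1\leqslant\mathbf{d}_{r+w}((M,f,\Omega),\Phi)$. So the proof is essentially a bookkeeping exercise: set up the tuple, check (3.1)--(3.4), invoke Proposition 3.1, and translate back through Definition 5.2. I do not anticipate any genuine obstacle; the only mildly delicate point is making sure the hypotheses of Section 3.1 are matched exactly — in particular that the abundance condition on $\Phi$ supplies precisely clauses (3.3) and (3.4), and that the inequality $f(M)-f(W)\leqslant w(\rho_{\Omega}(M)-\rho_{\Omega}(W))$ from the $w$-demi-polymatroid axiom is what yields (3.2). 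Lemma 5.1 (the basic properties of $\rho_{\Omega}$, in particular that $((\mathcal{P}(M),\subseteq),\rho_{\Omega})$ is an abundance, so that submodules of all intermediate ranks exist) is the background fact making the abundance hypothesis on $\Phi$ natural and ensuring the profile sets are nonempty.
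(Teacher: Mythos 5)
Your proposal matches the paper's proof, which is precisely the one-line observation that Proposition 3.1 applies to the tuple $(Y,m,g,w,k,f,X,\sigma)=(\Phi,\rho_{\Omega}(M),(\rho_{\Omega})\!\mid_{\Phi},w,k,f\!\mid_{\Phi},\Phi,\mathrm{id}_{\Phi})$; your verification of (3.1)--(3.4) and the translation through Definition 5.2 is exactly the bookkeeping the paper leaves implicit. One small correction to your justification of (3.1): $\rho_{\Omega}(W)=0$ does \emph{not} force $W=\{0\}$ when $\Omega$ omits some simple left $R$-modules (a nonzero $W$ all of whose composition factors lie outside $\Omega$ has $\rho_{\Omega}(W)=0$); condition (3.1) nevertheless holds, since Definition 5.1(2) applied to the pair $\{0\}\subseteq W$ gives $0\leqslant f(W)\leqslant w\rho_{\Omega}(W)=0$.
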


\subsection{Wei-type duality theorems}
\setlength{\parindent}{2em}
A necessary step for establishing Wei-type duality theorems for $w$-demi-polymatroids is to extend the notion of dual space of vector space over a field. To this end, we will use non-degenerated bilinear maps defined for modules (see [1, Theorem 30.1] and [11, Theorem 58.8]). We begin with the following definition, where we follow [1, Chapter 6, Section 24].

\setlength{\parindent}{0em}
\begin{definition}
Let $M$ be a left $R$-module, $N$ be a right $S$-module, and $U$ be an $R$-$S$ bimodule. Fix $\varpi:M\times N\longrightarrow U$. Then, $\varpi$ is said to be a bilinear map if for any $a,b\in M$, $c,d\in N$, $r\in R$, $s\in S$, it holds that

{\bf{(1)}}\,\,$\varpi(a+b,c)=\varpi(a,c)+\varpi(b,c)$;

{\bf{(2)}}\,\,$\varpi(a,c+d)=\varpi(a,c)+\varpi(a,d)$;

{\bf{(3)}}\,\,$\varpi(ra,cs)=r\varpi(a,c)s$.

Assume in addition that $\varpi$ is a bilinear map. Then, for any $C\subseteq M$, the \textit{right annihilator of $C$ with respect to $\varpi$}, denoted by $C^{\bot}$, is defined as
\begin{equation}C^{\bot}=\left\{y\mid y\in N,~\varpi(x,y)=0~\text{for all}~x\in C\right\},\end{equation}
and for any $D\subseteq N$, the \textit{left annihilator of $D$ with respect to $\varpi$}, denoted by $^{\bot}D$, is defined as
\begin{equation}^{\bot}D=\left\{x\mid x\in M,~\varpi(x,y)=0~\text{for all}~y\in D\right\}.\end{equation}
Finally, $\varpi$ is said to be non-degenerated if $M^{\bot}=\{0\}$, $^{\bot}N=\{0\}$.
\end{definition}

\setlength{\parindent}{2em}
Throughout the rest of this subsection, we fix the following notations:

\setlength{\parindent}{0em}
\begin{itemize}
\item $M$ is a left $R$-module with a composition series, and $N$ is a right $S$-module.
\item $U$ is an $R$-$S$ bimodule satisfying the following two conditions:
\begin{equation}\hspace*{-1mm}\mbox{\begin{footnotesize}For any simple left $R$-module $X$, ${\Hom_{R}}(X,U)$ is a simple right $S$-module;\end{footnotesize}}\end{equation}
\begin{equation}\hspace*{-1mm}\mbox{\begin{footnotesize}For any simple right $S$-module $Y$, ${\Hom_{S}}(Y,U)$ is a simple left $R$-module.\end{footnotesize}}\end{equation}
\item $\varpi:M\times N\longrightarrow U$ is a non-degenerated bilinear map.
\end{itemize}

\setlength{\parindent}{2em}
As a consequence of [1, Theorem 30.1] or [39, Equations (45) and (47)], the right $S$-module $N$ has a composition series. Furthermore, for any $C\in\mathcal{P}(M)$, $D\in\mathcal{P}(N)$, it holds that \begin{equation}^{\bot}(C^{\bot})=C,~(^{\bot}D)^{\bot}=D.\end{equation}

From now on, the following two notations are also fixed:

\setlength{\parindent}{0em}
\begin{itemize}
\item $\Omega$ is a nonempty collection of simple left $R$-modules.
\item $\Delta=\{{\Hom_{R}}(X,U)\mid X\in \Omega\}$.
\end{itemize}

\setlength{\parindent}{2em}
By (5.5), $\Delta$ is a collection of simple right $S$-modules. Furthermore, as a consequence of [1, Theorem 30.1], $\rho_{\Omega}$ and $\lambda_{\Delta}$ is connected by the following equation:
\begin{equation}\mbox{$\lambda_{\Delta}(D)=\rho_{\Omega}(M)-\rho_{\Omega}(^{\bot}D)$ for all $D\in\mathcal{P}(N)$}.\end{equation}

\setlength{\parindent}{2em}
Now we are ready to state and prove Wei-type duality theorems for $w$-demi-polymatroids. Fix $f:\mathcal{P}(M)\longrightarrow\mathds{Z}$ and $w\in\mathds{Z}^{+}$ such that $(M,f,\Omega)$ is a $w$-demi-polymatroid with $f(M)=k$, and define $h:\mathcal{P}(N)\longrightarrow\mathds{Z}$ as
$$h(D)=f(^{\bot}D)+w\cdot\lambda_{\Delta}(D)-k.$$
By (5.7) and (5.8), $(N,h,\Delta)$ is indeed a $w$-demi-polymatroid with $h(N)=w\cdot\rho_{\Omega}(M)-k$. Furthermore, for any $C\in\mathcal{P}(M)$, it holds that
$$f(C)=h(C^{\bot})+w\cdot \rho_{\Omega}(C)-h(N).$$
Similar to the $w$-demi-matroid case Proposition 4.1, $(N,h,\Delta)$ can be regarded as the dual $w$-demi-polymatroid of $(M,f,\Omega)$.

We also fix $\Phi\subseteq\mathcal{P}(M)$ such that $\{0\},M\in\Phi$, $((\Phi,\subseteq),(\rho_{\Omega})\mid_{\Phi})$ is an abundance, and define $\Theta\subseteq\mathcal{P}(N)$ as
$$\Theta=\{X^{\bot}\mid X\in\Phi\}.$$
By (5.7) and (5.8), we deduce that $\{0\},N\in\Theta$, and $((\Theta,\subseteq),(\lambda_{\Delta})\mid_{\Theta})$ is an abundance. Furthermore, we have the following Wei-type duality theorem for $((M,f,\Omega),\Phi)$ and $((N,h,\Delta),\Theta)$.

\setlength{\parindent}{0em}
\begin{theorem}
{{\bf{(1)}}\,\,For any $l\in [0,\rho_{\Omega}(M)]$,
$$\mathbf{K}_{l}((N,h,\Delta),\Theta)=\mathbf{K}_{\rho_{\Omega}(M)-l}((M,f,\Omega),\Phi)+wl-k.$$
{\bf{(2)}}\,\,For any $\gamma\in\mathds{Z}$, define the sets $\mathcal{A}_{(\gamma)}$ and $\mathcal{B}_{(\gamma)}$ as
$$\mathcal{A}_{(\gamma)}=\{\mathbf{d}_{a}((M,f,\Omega),\Phi)\mid a\in[1,k],a\equiv\gamma+k~(\bmod~w)\},$$
$$\mathcal{B}_{(\gamma)}=\{\rho_{\Omega}(M)+1-\mathbf{d}_{c}((N,h,\Delta),\Theta)\mid c\in[1,w\cdot\rho_{\Omega}(M)-k],c\equiv\gamma~(\bmod~w)\}.$$
Then, for any $\gamma\in\mathds{Z}$, we have $\mathcal{A}_{(\gamma)}\cap \mathcal{B}_{(\gamma)}=\emptyset$, $\mathcal{A}_{(\gamma)}\cup \mathcal{B}_{(\gamma)}=[1,\rho_{\Omega}(M)]$.
}
\end{theorem}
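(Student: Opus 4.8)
The plan is to reduce Theorem 5.1 to the already-established $w$-demi-matroid duality result (Theorem 4.1), following the same template used in the proof of Theorem 4.1 itself, where everything was funnelled through the first bridging theorem. The key observation is that all the hypotheses of Theorem 4.1 are "abstract" in the sense that they only involve an abundance $\Phi$ inside a lattice, a rank-like function $\rho_{\Omega}$, a sub-additive function $f$, and a dual pairing; and the present setup provides exactly these ingredients for modules with a composition series. Concretely, I would set $Y=\Phi$, equip it with the partial order $\subseteq$, and take $g=(\rho_{\Omega})\mid_{\Phi}$, which by hypothesis makes $((\Phi,\subseteq),g)$ an abundance with $g(M)=\rho_{\Omega}(M)=:m$; take $f\mid_{\Phi}:\Phi\longrightarrow\mathds{Z}$, which satisfies $f\mid_{\Phi}(\{0\})=0$, $f\mid_{\Phi}(M)=k$, and the two-sided inequality $0\leqslant f(Y)-f(X)\leqslant w(\rho_{\Omega}(Y)-\rho_{\Omega}(X))$ for $X\subseteq Y$ in $\Phi$, by Definition 5.1 restricted to $\Phi$. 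For the dual side, let $X=\Theta=\{X^{\bot}\mid X\in\Phi\}$ and let $\sigma:\Theta\longrightarrow\Phi$ be the map $\sigma(D)={}^{\bot}D$; by (5.7) this is a well-defined bijection, hence surjective.

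Next I would match up the dual data. Using (5.8), $\lambda_{\Delta}(D)=\rho_{\Omega}(M)-\rho_{\Omega}({}^{\bot}D)=m-g(\sigma(D))$, so the function $\mu$ of the bridging theorem, $\mu(t)=m-g(\sigma(t))$, is precisely $(\lambda_{\Delta})\mid_{\Theta}$; and the function $h$ of the bridging theorem, $h(t)=f(\sigma(t))+w\mu(t)-k$, becomes $h\mid_{\Theta}(D)=f({}^{\bot}D)+w\lambda_{\Delta}(D)-k$, which is exactly the definition of the dual $w$-demi-polymatroid function $h$ restricted to $\Theta$. Therefore, the Galois connections $(\varphi,\psi)$ and $(\tau,\eta)$ produced by (3.7)--(3.10) from $(g,f\mid_{\Phi})$ and $(\mu,h\mid_{\Theta})$ coincide with the generalized-weight and profile functions of $((M,f,\Omega),\Phi)$ and $((N,h,\Delta),\Theta)$ respectively: by Definition 5.2, $\varphi(a)=\mathbf{d}_{a}((M,f,\Omega),\Phi)$, $\psi(b)=\mathbf{K}_{b}((M,f,\Omega),\Phi)$, and similarly $\tau(a)=\mathbf{d}_{a}((N,h,\Delta),\Theta)$, $\eta(b)=\mathbf{K}_{b}((N,h,\Delta),\Theta)$ once one checks $h\mid_{\Theta}(N)=w\cdot m-k$ (which is the stated fact $h(N)=w\cdot\rho_{\Omega}(M)-k$) and that $((\Theta,\subseteq),(\lambda_{\Delta})\mid_{\Theta})$ is an abundance (already noted before the theorem). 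With all identifications in place, Theorem 3.2 (or equivalently Theorem 4.1 applied to the abundance $\Phi$) gives statement (1), $\eta(l)=\psi(m-l)+wl-k$, and statement (2), that $\mathcal{A}_{(\gamma)}\cap\mathcal{B}_{(\gamma)}=\emptyset$ and $\mathcal{A}_{(\gamma)}\cup\mathcal{B}_{(\gamma)}=[1,m]=[1,\rho_{\Omega}(M)]$.

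The steps in order: (i) verify $((\Phi,\subseteq),(\rho_{\Omega})\mid_{\Phi})$ is an abundance with top value $m=\rho_{\Omega}(M)$ and that $f\mid_{\Phi}$ satisfies the monotonicity/Lipschitz conditions with $f\mid_{\Phi}(\{0\})=0$, $f\mid_{\Phi}(M)=k$; (ii) verify $\sigma:\Theta\to\Phi$, $D\mapsto{}^{\bot}D$, is a bijection via (5.7), and that $\{0\},N\in\Theta$; (iii) invoke (5.8) to identify $\mu=(\lambda_{\Delta})\mid_{\Theta}$ and then $h\mid_{\Theta}$ with the dual $w$-demi-polymatroid rank function restricted to $\Theta$, checking $h(N)=w\rho_{\Omega}(M)-k$; (iv) match $(\varphi,\psi,\tau,\eta)$ from (3.7)--(3.10) with the generalized weights and profiles via Definition 5.2; (v) apply Theorem 3.2 and translate back. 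I expect the only genuinely non-routine point to be step (iii) — specifically, making airtight the bookkeeping that $^{\bot}(\cdot)$ is an inclusion-reversing bijection $\Theta\leftrightarrow\Phi$ carrying $\lambda_{\Delta}$ to $m-\rho_{\Omega}$, which rests on the module-theoretic input (5.7) and the length identity (5.8); everything else is a direct substitution into Theorem 3.2. A small additional care is needed because Definition 5.2 defines $\mathbf{K}_{b}$ via $\rho_{\Omega}(W)=b$ exactly, whereas (3.9) uses $g(u)\leqslant b$, but these agree by Proposition 3.1(2) (and its analogue (6)), so no real difficulty arises there.
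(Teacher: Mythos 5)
Your proposal is correct and follows essentially the same route as the paper's own proof: restrict $f$ and $\rho_{\Omega}$ to the abundance $\Phi$, use (5.7)--(5.8) to identify the annihilator map $\sigma(D)={}^{\bot}D$ as an inclusion-reversing bijection $\Theta\to\Phi$ carrying $\lambda_{\Delta}$ to $\rho_{\Omega}(M)-\rho_{\Omega}\circ\sigma$, match the resulting $(\varphi,\psi,\tau,\eta)$ of (3.7)--(3.10) with the generalized weights and profiles via Definition 5.2, and conclude by Theorem 3.2. The only small imprecision is the parenthetical suggestion that one could ``equivalently'' invoke Theorem 4.1, which is stated for subsets of a finite set rather than for submodule lattices; but since your actual argument goes through Theorem 3.2, exactly as the paper does, this does not affect correctness.
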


\begin{proof}
Let $g=(\rho_{\Omega})\mid_{\Phi}$. Then, $((\Phi,\subseteq),g)$ is an abundance with $g(M)=\rho_{\Omega}(M)$. Consider $f\mid_{\Phi}:\Phi\longrightarrow\mathbb{Z}$. Then, we have $f\mid_{\Phi}(M)=f(M)=k$. Furthermore, since $(M,f,\Omega)$ is a $w$-demi-polymatroid, by Definition 5.1, $f\mid_{\Phi}$ satisfies the following two conditions:

$(i)$\,\,$f\mid_{\Phi}(\{0\})=0$;

$(ii)$\,\,For any $X,Y\in\Phi$ with $X\subseteq Y$, $0\leqslant f\mid_{\Phi}(Y)-f\mid_{\Phi}(X)\leqslant w(g(Y)-\nolinebreak g(X))$.
Moreover, by (5.7), there uniquely exists a bijective map $\sigma:\Theta\longrightarrow\Phi$ such that $\sigma(D)={^{\bot}D}$ for any $D\in\Theta$. Now, define $\mu:\Theta\longrightarrow[0,\rho_{\Omega}(M)]$ as $\mu(D)=\rho_{\Omega}(M)-g(\sigma(D))$, and consider $h\mid_{\Theta}:\Theta\longrightarrow\mathbb{Z}$. Then, by (5.8), for any $D\in\Theta$, it can be readily verified that
$$\text{$\mu(D)=\lambda_{\Delta}(D)$ and $h\mid_{\Theta}(D)=f\mid_{\Phi}(\sigma(D))+w\cdot\mu(D)-k$}.$$
Now, with respect to $(g,f\mid_{\Phi})$ and $(\mu,h\mid_{\Theta})$, we define $\varphi$, $\psi$, $\tau$, $\eta$ exactly in the way as in (3.7)--(3.10). Then, by Definition 5.2, for any $a\in[0,k]$, $b\in[0,\rho_{\Omega}(M)]$, we have $\varphi(a)=\mathbf{d}_{a}((M,f,\Omega),\Phi)$, $\psi(b)=\mathbf{K}_{b}((M,f,\Omega),\Phi)$. Similarly, for any $c\in[0,w\cdot\rho_{\Omega}(M)-k]$, $b\in[0,\rho_{\Omega}(M)]$, it holds true that $\tau(c)=\mathbf{d}_{c}((N,h,\Delta),\Theta)$, $\eta(b)=\mathbf{K}_{b}((N,h,\Delta),\Theta)$. Therefore, an application of Theorem 3.2 immediately leads to the desired result.
\end{proof}

\setlength{\parindent}{2em}
We end this section with the following remark, where we use Theorem 5.1 to recover the Wei-type duality theorems for $(q,m)$-demi-polymatroids defined over vector spaces.

\setlength{\parindent}{0em}
\begin{remark}
Let $\mathds{F}$ be a field, $E$ a nonempty finite set. Set $R=S=U=\mathds{F}$, $M=N=\mathds{F}^{E}$. Furthermore, let $\varpi=\langle~,~\rangle$ denote the standard inner product on $\mathds{F}^{E}$, i.e.,
$$\mbox{$\varpi(\alpha,\beta)=\langle\alpha,\beta\rangle=\sum_{e\in E}\alpha_{(e)}\cdot\beta_{(e)}$ for all $\alpha,\beta\in\mathds{F}^{E}$},$$
and let $\Phi=\mathcal{P}(M)$. Then, Theorem 5.1 coincides with the Wei-type duality theorems for $(q,m)$-demi-polymatroids established in [7, Theorem 4] and [18, Theorem 17].
\end{remark}

\section{Generalized weights and profiles of codes with various metrics}

\setlength{\parindent}{2em}
In this section, we will consider codes defined over modules \cite{39,41} and treat their generalized weights and profiles with respect to poset metric, Gabidulin-Roth rank metric and Delsarte rank metric under the same ambient space. It turns out that given a code, each of the aforementioned metrics gives rise to an associated $w$-demi-matroid or $w$-demi-polymatroid, and the generalized weights/profiles of the code can be redefined as those of the associated $w$-demi-matroid or $w$-demi-polymatroid. The aforementioned approach was first introduced in \cite{5,6} for linear codes over division rings, and was then extended to Gabidulin-Roth and Delsate rank metric codes in \cite{7,18,21,26,38}. Our presentation will be in terms of left modules, which can be readily translated to one in terms of right modules. Throughout this section, we fix the following notations:

\setlength{\parindent}{0em}
\begin{itemize}
\item $R$ is a ring, and $M$ is a left $R$-module with a composition series.
\item $E$ is nonempty finite set with $|E|=m$.
\item $\mathcal{P}(M)$ and $\mathcal{P}(M^{E})$ are the sets of all the left $R$-submodules of $M$ and of $M^{E}$, respectively. Any left $R$-submodule of $M^{E}$ will be referred to as a \textit{code}.
\item For any $J\subseteq E$, the code $\delta(J)$ is defined as
\begin{equation}\delta(J)=\{\alpha\mid \alpha\in M^{E}~s.t.~\forall~i\in E-J,~\alpha_{(i)}=0\}.\end{equation}
\end{itemize}

\setlength{\parindent}{2em}
Rather than a single code, We will treat a family of codes, and hence the following notations are also introduced:

\setlength{\parindent}{0em}
\begin{itemize}
\item $I$ is a nonempty set, and $(u{(l)}\mid l\in I)$ is a family of positive integers.
\item $\mathbf{C}=\left(\mathbf{C}(l,j)\mid l\in I,~j\in[1,u(l)]\right)$ is a family of codes such that
\begin{equation}\text{$\mathbf{C}(l,u(l))\subseteq \mathbf{C}(l,u(l)-1)\subseteq\cdots\subseteq \mathbf{C}(l,1)$ for all $l\in I$}.\end{equation}
\item $\Omega$ is a nonempty collection of simple left $R$-modules with $w\triangleq\rho_{\Omega}(M)\geqslant\nolinebreak1$.
\item $k\in\mathbb{N}$ is defined as
$$\mbox{$k=\max\left\{\sum_{i=1}^{u(l)}(-1)^{i-1}\rho_{\Omega}(\mathbf{C}(l,i))\mid l\in I\right\}$}.$$
\end{itemize}

\setlength{\parindent}{2em}
We note that in terms of \textit{code flags} (\cite{5,18}), (6.2) says that for any $l\in I$, $(\mathbf{C}(l,1),\dots,\mathbf{C}(l,u(l)))$ is a code flag. By Lemma 5.1, we also have $$\mbox{$\rho_{\Omega}(M^{E})=wm$ and $k\in[0,wm]$.}$$

\subsection{Demi-matroids and demi-polymatroids arising from $\mathbf{C}$}

First of all, we define $f_0:\mathcal{P}(M^{E})\longrightarrow \mathds{Z}$ as
\begin{equation}\mbox{$f_0(L)=\max\left\{\sum_{i=1}^{u(l)}(-1)^{i-1}\rho_{\Omega}(\mathbf{C}(l,i)\cap L)\mid l\in I\right\}$.}\end{equation}
Next, we define $f_1:2^{E}\longrightarrow \mathds{Z}$ as
\begin{equation}\mbox{$f_1(J)=f_0(\delta(J))=\max\left\{\sum_{i=1}^{u(l)}(-1)^{i-1}\rho_{\Omega}(\mathbf{C}(l,i)\cap \delta(J))\mid l\in I\right\}$.}\end{equation}
Finally, we define $f_2:\mathcal{P}(M)\longrightarrow \mathds{Z}$ as
\begin{equation}\mbox{$f_2(W)=f_0(W^{E})=\max\left\{\sum_{i=1}^{u(l)}(-1)^{i-1}\rho_{\Omega}(\mathbf{C}(l,i)\cap W^{E})\mid l\in I\right\}$.}\end{equation}

With the help of Lemma 5.1, we derive the following proposition via some straightforward computation.

\setlength{\parindent}{0em}
\begin{proposition}
{\bf{(1)}}\,\,$(M^{E},f_0,\Omega)$ is a $1$-demi-polymatroid with $f_0(M^{E})=\nolinebreak k$.

{\bf{(2)}}\,\,$(E,f_1)$ is a $w$-demi-matroid with $f_1(E)=k$.

{\bf{(3)}}\,\,$(M,f_2,\Omega)$ is an $m$-demi-polymatroid with $f_2(M)=k$.
\end{proposition}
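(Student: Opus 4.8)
The plan is to verify the three claims by checking the two defining conditions of a $w$-demi-matroid (Definition~4.1) and a $w$-demi-polymatroid (Definition~5.1) directly for $f_0$, $f_1$, $f_2$, using Lemma~5.1 as the workhorse. The crucial observation underpinning all three is that for fixed $l\in I$, the chain $\mathbf{C}(l,u(l))\subseteq\cdots\subseteq\mathbf{C}(l,1)$ together with an inclusion $X\subseteq Y$ of submodules is exactly the setup of Lemma~5.1(4) applied inside $M^{E}$: thus for each $l$,
$$0\leqslant\left(\sum_{i=1}^{u(l)}(-1)^{i-1}\rho_{\Omega}(\mathbf{C}(l,i)\cap Y)\right)-\left(\sum_{i=1}^{u(l)}(-1)^{i-1}\rho_{\Omega}(\mathbf{C}(l,i)\cap X)\right)\leqslant\rho_{\Omega}(Y)-\rho_{\Omega}(X).$$

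For part (1), I would first note $f_0(\{0\})=\max_l\sum_i(-1)^{i-1}\rho_{\Omega}(\{0\})=0$ since $\rho_{\Omega}(\{0\})=0$, and $f_0(M^{E})=k$ by definition of $k$ (taking $L=M^{E}$ so that $\mathbf{C}(l,i)\cap M^{E}=\mathbf{C}(l,i)$). For the monotonicity/boundedness condition with $X\subseteq Y$ in $\mathcal{P}(M^E)$: the lower bound $f_0(X)\leqslant f_0(Y)$ follows by picking an $l$ achieving the max in $f_0(X)$ and applying the left inequality above; the upper bound $f_0(Y)-f_0(X)\leqslant \rho_{\Omega}(Y)-\rho_{\Omega}(X)=1\cdot(\rho_{\Omega}(Y)-\rho_{\Omega}(X))$ follows by picking an $l$ achieving the max in $f_0(Y)$ and applying the right inequality. (The standard care with maxima: for the difference of two maxima, bound $f_0(Y)-f_0(X)$ by choosing the optimal index for the \emph{larger-argument} term and a possibly-suboptimal same index for the other term, and symmetrically for the reverse direction.) This gives that $(M^E,f_0,\Omega)$ is a $1$-demi-polymatroid.

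For parts (2) and (3) I would reduce to part (1) by composition. Observe $f_1(J)=f_0(\delta(J))$ and $f_2(W)=f_0(W^{E})$, so I need only check that $J\mapsto\delta(J)$ and $W\mapsto W^{E}$ are order-preserving, send the bottom element to $\{0\}$ and the top element to $M^{E}$, and have controlled effect on $\rho_{\Omega}$. For $\delta$: $\delta(\emptyset)=\{0\}$, $\delta(E)=M^{E}$, $J\subseteq J'\Rightarrow\delta(J)\subseteq\delta(J')$, and $\rho_{\Omega}(\delta(J))=|J|\cdot\rho_{\Omega}(M)=w|J|$ (since $\delta(J)\cong M^{J}$ and $\rho_{\Omega}$ is additive over direct sums by Lemma~5.1(2)); hence $\rho_{\Omega}(\delta(J'))-\rho_{\Omega}(\delta(J))=w(|J'|-|J|)$, and feeding $X=\delta(J)\subseteq Y=\delta(J')$ into the $1$-demi-polymatroid inequality for $f_0$ yields $0\leqslant f_1(J')-f_1(J)\leqslant w(|J'|-|J|)$, so $(E,f_1)$ is a $w$-demi-matroid with $f_1(E)=f_0(M^E)=k$. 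For $W\mapsto W^{E}$: $\{0\}^{E}=\{0\}$, $M^{E}=M^{E}$, monotone in $W$, and $\rho_{\Omega}(W^{E})=m\cdot\rho_{\Omega}(W)$, so the same substitution with coefficient $m$ gives that $(M,f_2,\Omega)$ is an $m$-demi-polymatroid with $f_2(M)=k$.

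The main obstacle is purely bookkeeping rather than conceptual: correctly handling the interchange of the $\max$ over $l\in I$ with the inequalities — that is, making the ``difference of maxima'' estimates rigorous in both directions — and confirming the $\rho_{\Omega}$-values $\rho_{\Omega}(\delta(J))=w|J|$ and $\rho_{\Omega}(W^E)=m\rho_{\Omega}(W)$ from additivity of $\rho_{\Omega}$ along a composition series. Everything else is direct substitution into Definitions~4.1 and~5.1 using Lemma~5.1, which is exactly why the authors describe the computation as straightforward.
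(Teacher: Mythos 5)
Your proof is correct and follows exactly the route the paper intends: the paper omits the computation, saying only that the proposition follows from Lemma~5.1 by straightforward computation, and your argument supplies precisely that — Lemma~5.1(4) applied to each chain $\mathbf{C}(l,u(l))\subseteq\cdots\subseteq\mathbf{C}(l,1)$, the two-sided ``difference of maxima'' estimate over $l\in I$, and the reductions $f_1=f_0\circ\delta$, $f_2(W)=f_0(W^E)$ with $\rho_{\Omega}(\delta(J))=w|J|$ and $\rho_{\Omega}(W^E)=m\rho_{\Omega}(W)$. No gaps worth noting.
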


\setlength{\parindent}{2em}
Each demi-matroid/demi-polymatroid in Proposition 6.1 leads to a corresponding generalized weight/profile of $\mathbf{C}$, as detailed in the following definition.

\setlength{\parindent}{0em}
\begin{definition}
{\bf{(1)}}\,\,Let $\Phi\subseteq\mathcal{P}(M^{E})$ such that $\{0\}\in\Phi$, $M^{E}\in\Phi$ and $((\Phi,\subseteq),(\rho_{\Omega})\mid_{\Phi})$ is an abundance. For any $a\in [0,k]$, the $a$-th Gabidulin-Roth generalized rank weight of $(\mathbf{C},\Omega,\Phi)$, denoted by ${\mathbf{d}^{\mathbf{G}}}_{a}(\mathbf{C},\Omega,\Phi)$, is defined as

$${\mathbf{d}^{\mathbf{G}}}_{a}(\mathbf{C},\Omega,\Phi)=\mathbf{d}_{a}((M^{E},f_0,\Omega),\Phi),$$
and for any $b\in [0,wm]$, the $b$-th Gabidulin-Roth profile of $(\mathbf{C},\Omega,\Phi)$, denoted by ${\mathbf{K}^{\mathbf{G}}}_{b}(\mathbf{C},\Omega,\Phi)$, is defined as
$${\mathbf{K}^{\mathbf{G}}}_{b}(\mathbf{C},\Omega,\Phi)=\mathbf{K}_{b}((M^{E},f_0,\Omega),\Phi).$$

{\bf{(2)}}\,\,$(M^{E},f_0,\Omega)$ is said to be \textit{the $1$-demi-polymatroid associated to $(\mathbf{C},\Omega)$ with respect to Gabidulin-Roth rank metric}.

{\bf{(3)}}\,\,Consider a poset $\mathbf{P}=(E,\preccurlyeq_{\mathbf{P}})$. For any $a\in [0,k]$, the $a$-th generalized weight of $(\mathbf{C},\Omega,\mathbf{P})$, denoted by $\mathbf{d}_{a}(\mathbf{C},\Omega,\mathbf{P})$, is defined as
$$\mathbf{d}_{a}(\mathbf{C},\Omega,\mathbf{P})=\mathbf{d}_{a}(f_1,\mathbf{P}),$$
and for any $b\in [0,m]$, the $b$-th profile of $(\mathbf{C},\Omega,\mathbf{P})$, denoted by $\mathbf{K}_{b}(\mathbf{C},\Omega,\mathbf{P})$, is defined as
$$\mathbf{K}_{b}(\mathbf{C},\Omega,\mathbf{P})=\mathbf{K}_{b}(f_1,\mathbf{P}).$$

{\bf{(4)}}\,\,$(E,f_1)$ is said to be \textit{the $w$-demi-matroid associated to $(\mathbf{C},\Omega)$ with respect to poset metric}.

{\bf{(5)}}\,\,Let $\Phi\subseteq\mathcal{P}(M)$ such that $\{0\}\in\Phi$, $M\in\Phi$ and $((\Phi,\subseteq),(\rho_{\Omega})\mid_{\Phi})$ is an abundance. For any $a\in [0,k]$, the $a$-th Delsarte generalized rank weight of $(\mathbf{C},\Omega,\Phi)$, denoted by ${\mathbf{d}^{\mathbf{R}}}_{a}(\mathbf{C},\Omega,\Phi)$, is defined as
$${\mathbf{d}^{\mathbf{R}}}_{a}(\mathbf{C},\Omega,\Phi)=\mathbf{d}_{a}((M,f_2,\Omega),\Phi),$$
and for any $b\in [0,w]$, the $b$-th Delsarte profile of $(\mathbf{C},\Omega,\Phi)$, denoted by ${\mathbf{K}^{\mathbf{R}}}_{b}(\mathbf{C},\Omega,\Phi)$, is defined as
$${\mathbf{K}^{\mathbf{R}}}_{b}(\mathbf{C},\Omega,\Phi)=\mathbf{K}_{b}((M,f_2,\Omega),\Phi).$$

{\bf{(6)}}\,\,$(M,f_2,\Omega)$ is said to be \textit{the $m$-demi-polymatroid associated to $(\mathbf{C},\Omega)$ with respect to Delsarte rank metric}.
\end{definition}

\setlength{\parindent}{2em}
Adopting the notations in Definition 6.1, in the following remark, we show how Definition 6.1 recovers some known generalized weights/profiles of codes endowed with Gabidulin-Roth rank metric, poset metric and Delsarte rank metric.

\setlength{\parindent}{0em}
\begin{remark}
{\bf{(1)}}\,\,Following (1) of Definition 6.1, we let $\mathds{F}$ be a finite field. Suppose that $R$ is a finite field extension of $\mathds{F}$ and $M=R$. Then, any code $V\subseteq R^{E}$ is a de facto Gabidulin-Roth rank metric code (see \cite{7,15}). We now let $\Phi\subseteq\mathcal{P}(R^{E})$ denote the following collection of codes:
\begin{equation}\Phi=\{V\mid \text{$V$ is spanned by $V\cap\mathds{F}^{E}$ as an $R$-vector space}\}.\end{equation}
Assume that $\mathbf{C}$ is a code flag consists of two codes $C_1$ and $C_2$ with $C_2\subseteq C_1$. Then, it is straightforward to verify that for any $a\in [0,k]$, ${\mathbf{d}^{\mathbf{G}}}_{a}(\mathbf{C},\Omega,\Phi)$ is equal to
$$\min\left\{\dim_{R}(L)\mid L\in \Phi,~\dim_{R}(C_1\cap L)-\dim_{R}(C_2\cap L)=a\right\},$$
and for any $b\in [0,m]$, ${\mathbf{K}^{\mathbf{G}}}_{b}(\mathbf{C},\Omega,\Phi)$ is equal to
$$\max\{\dim_{R}(C_1\cap L)-\dim_{R}(C_2\cap L)\mid L\in\Phi,~\dim_{R}(L)=b\},$$
and so ${\mathbf{d}^{\mathbf{G}}}_{a}(\mathbf{C},\Omega,\Phi)$ and ${\mathbf{K}^{\mathbf{G}}}_{b}(\mathbf{C},\Omega,\Phi)$ coincide with the notions of relative generalized rank weights and relative dimension/intersection profiles of Gabidulin-Roth rank metric codes, respectively (see [27, Definitions 4 and 5] and \cite{15}).

{\bf{(2)}}\,\,Consider a poset $\mathbf{P}=(E,\preccurlyeq_{\mathbf{P}})$. First, we assume that $\mathbf{C}$ consists of one code, say, $C\subseteq M^{E}$. Then, it is straightforward to verify that for any $r\in[0,k]$, we have
$$\mathbf{d}_{r}(C,\Omega,\mathbf{P})=\min\{|\langle \chi(D)\rangle_{\mathbf{P}}|\mid D\in\mathcal{P}(M^{E}),~D\subseteq C,~\rho_{\Omega}(D)=r\},$$
where for any $D\subseteq M^{E}$, $\chi(D)\subseteq E$ is defined as in (1.2) (with $\mathbb{F}$ replaced by $M$). Therefore, for the case that $R$ is a field and $M=R$, Definition 6.1 coincides with the definition of generalized weights for a code with a poset metric (\cite{9}, \cite{32}). Furthermore, with $\mathbf{P}$ set to be an anti-chain, Definition 6.1 coincides with the GHWs of $C$ as in (1.1).

\hspace*{6mm}Now consider the case that $\mathbf{C}$ is a code flag (i.e., $|I|=1$), $R$ is a division ring, $M=R$ and $\mathbf{P}$ is an anti-chain. Then, Definition 6.1 coincides with the generalized weights and profiles for a code flag ([5, Section III]); if we further assume that $\mathbf{C}$ is a code flag consists of two codes $C_1$ and $C_2$ with $C_2\subseteq C_1$, Definition 6.1 then boils down to the relative generalized Hamming weight and relative dimension/length profile defined in \cite{28}.

{\bf{(3)}}\,\,Assume that $E=[1,m]$, $R$ is a field and $M=R^{w}$. With each vector of $M$ in its column form, we identify $M^{E}$ with the set of all matrices over $R$ with $w$ rows and $m$ columns. Then, a code $C\subseteq M^{E}$ becomes a Delsarte rank metric code (see [20, Definition 1.1]). Moreover, we set $\Phi$ in Definition 6.2 as $\mathcal{P}(M)$.

\hspace*{6mm}First, we let $w=m$, $I=\{1,2\}$, $u(1)=u(2)=1$, and for a fixed code $C$, define $\mathbf{C}(1,1)=C$, $\mathbf{C}(2,1)=\{\theta^{T}\mid \theta\in C\}$. Then, Definition 6.1 coincides with Ravagnani's definition for generalized rank weights of $C$, which is originally proposed by an optimal anticodes approach (see [36, Definition 23], [20, Remark 5.8]).

\hspace*{6mm}Second, we consider the case that $\mathbf{C}$ is a code flag. Then, Definition 6.1 becomes generalized rank weights of Delsarte rank metric code flags [18, Definition 46]. If the code flag consists of two codes, then Definition 6.1 becomes the relative generalized matrix weights and relative dimension/rank support profiles proposed in [31, Definitions 10 and 11]. If $\mathbf{C}$ consists of one code, say, $C\subseteq M^{E}$, then Definition 6.1 becomes the generalized matrix weight of $C$ [31, Definitions 10], and moreover, as long as $k\geqslant1$, ${\mathbf{d}^{\mathbf{R}}}_{1}(C,\Omega,\mathcal{P}(M))=\min\{\rank(\theta)\mid \theta\in C,~\theta\neq0\}$ is exactly the minimal rank distance of $C$ (see [20, Definition 3.1]), where $\rank$ denotes the rank of a matrix.
\end{remark}

\section{Wei-type duality theorems for codes}

\setlength{\parindent}{2em}
In this section, we prove Wei-type duality theorems for codes with Gabidulin-Roth metric, poset metric, Delsarte rank metric and Generalized Hamming weight with respect to rank (see \cite{23}). Throughout this section, we fix the following notations:

\setlength{\parindent}{0em}
\begin{itemize}
\item $R$ and $S$ are rings, and $U$ is an $R$-$S$ bimodule satisfying (5.5) and \nolinebreak(5.6).
\item $M$ is a left $R$-module with a composition series, and $N$ is a right $S$-module.
\item $E$ is a nonempty finite set with $|E|=m$. Furthermore, any left $R$-submodule of $M^{E}$ or right $S$-submodule of $N^{E}$ will be referred to as a \textit{left linear code} or a \textit{right linear code}, respectively.
\item $\varpi:M\times N\longrightarrow U$ is a non-degenerated bilinear map.
\item $\langle~,~\rangle:M^{E}\times N^{E}\longrightarrow U$ is the non-degenerated bilinear map defined \nolinebreak as
$$\mbox{$\langle\alpha,\beta\rangle=\sum_{e\in E}\varpi(\alpha_{(e)},\beta_{(e)})$}.$$
\item $\Omega$ is a nonempty collection of simple left $R$-modules with $w\triangleq\rho_{\Omega}(M)\geqslant\nolinebreak1$.
\item $\Delta$ is the collection of simple right $S$-modules defined as
$$\Delta=\{{\Hom_{R}}(X,U)\mid X\in \Omega\}.$$
\item $I$ is a nonempty set, and $(u(l)\mid l\in I)$ is a family of {\textbf{odd}} positive integers.
\item $\mathbf{C}=\left(\mathbf{C}(l,j)\mid l\in I,~j\in[1,u(l)]\right)$ such that $(\mathbf{C}(l,1),\dots,\mathbf{C}(l,u(l)))$ is a left linear code flag for all $l\in I$.
\item $\mathbf{D}=\left(\mathbf{D}(l,j)\mid l\in I,~j\in[1,u(l)]\right)$ is defined as
$$\mbox{$\mathbf{D}(l,j)=\mathbf{C}(l,u(l)+1-j)^{\bot}$ for all $l\in I$, $j\in[1,u(l)]$}.$$
\end{itemize}

\setlength{\parindent}{2em}
We further assume that there uniquely exists $k\in[0,wm]$ such that
\begin{equation}\mbox{$\sum_{i=1}^{u(l)}(-1)^{i-1}\rho_{\Omega}(\mathbf{C}(l,i))=k$ for all $l\in I$}.\end{equation}

\setlength{\parindent}{2em}
Naturally, $\mathbf{D}$ can be regarded as the dual of $\mathbf{C}$. Furthermore, with the help of Lemma 5.1, (5.7) and (5.8), we derive the following lemma via some straightforward computation.

\setlength{\parindent}{0em}
\begin{lemma}
For any $l\in I$, $(\mathbf{D}(l,1),\dots,\mathbf{D}(l,u(l)))$ is a right linear code flag with
$$\mbox{$\sum_{i=1}^{u(l)}(-1)^{i-1}\lambda_{\Delta}(\mathbf{D}(l,i))=wm-k$}.$$
Furthermore, for any $l\in I$ and $V\in\mathcal{P}(N^{E})$, it holds that
$$\hspace*{-6mm}\mbox{$\left(\sum_{i=1}^{u(l)}(-1)^{i-1}\lambda_{\Delta}(\mathbf{D}(l,i)\cap V)\right)-\left(\sum_{i=1}^{u(l)}(-1)^{i-1}\rho_{\Omega}(\mathbf{C}(l,i)\cap (^{\bot}V))\right)=\lambda_{\Delta}(V)-k$}.$$
\end{lemma}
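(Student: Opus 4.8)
The plan is to reduce everything to submodule-level identities about $\rho_\Omega$ and $\lambda_\Delta$ under the annihilator correspondence, and then to an alternating-sum telescoping argument over the flag. First I would fix $l\in I$ and abbreviate $u=u(l)$, $C_i=\mathbf{C}(l,i)$, $D_i=\mathbf{D}(l,i)=C_{u+1-i}^{\bot}$. Since $\varpi$ is non-degenerated and $U$ satisfies (5.5)--(5.6), the right $S$-module $N^{E}$ has a composition series and the annihilator maps $C\mapsto C^{\bot}$, $D\mapsto{}^{\bot}D$ are mutually inverse inclusion-reversing bijections between $\mathcal P(M^{E})$ and $\mathcal P(N^{E})$ by (5.7); moreover (5.8) gives $\lambda_{\Delta}(D)=\rho_{\Omega}(M^{E})-\rho_{\Omega}({}^{\bot}D)$ for all $D\in\mathcal P(N^{E})$, and by Lemma 5.1 we have $\rho_{\Omega}(M^{E})=wm$. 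So the first reduction is: replace every $\lambda_{\Delta}(D_i)$ by $wm-\rho_{\Omega}({}^{\bot}D_i)$. Because $C_u\subseteq\cdots\subseteq C_1$, the flag $(C_i)$ is decreasing, hence $(C_i^{\bot})$ is increasing, i.e. $D_1\subseteq\cdots\subseteq D_u$; that this is a right linear code flag is immediate. This settles the ``flag'' part of the statement.

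For the first displayed identity, note ${}^{\bot}D_i={}^{\bot}(C_{u+1-i}^{\bot})=C_{u+1-i}$ by (5.7), so $\lambda_{\Delta}(D_i)=wm-\rho_{\Omega}(C_{u+1-i})$. Substituting into $\sum_{i=1}^{u}(-1)^{i-1}\lambda_{\Delta}(D_i)$ and using that $u$ is odd, the $wm$-terms contribute $\sum_{i=1}^{u}(-1)^{i-1}wm=wm$; the remaining sum is $-\sum_{i=1}^{u}(-1)^{i-1}\rho_{\Omega}(C_{u+1-i})$, and re-indexing $i\mapsto u+1-i$ together with $u$ odd (so $(-1)^{u+1-i-1}=(-1)^{-i}=(-1)^{i-1}\cdot(-1)^{u-1}=(-1)^{i-1}$) turns this into $-\sum_{i=1}^{u}(-1)^{i-1}\rho_{\Omega}(C_i)=-k$ by hypothesis (7.2). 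Hence the alternating sum equals $wm-k$, as claimed.

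For the second displayed identity, fix $V\in\mathcal P(N^{E})$ and apply (5.8) to the submodule $D_i\cap V$: this needs that $\lambda_\Delta$ is taken with respect to the same non-degenerated pairing, and that $D_i\cap V\in\mathcal P(N^{E})$, giving $\lambda_{\Delta}(D_i\cap V)=wm-\rho_{\Omega}({}^{\bot}(D_i\cap V))$. The key algebraic step is ${}^{\bot}(D_i\cap V)={}^{\bot}D_i+{}^{\bot}V=C_{u+1-i}+{}^{\bot}V$; combined with the modularity-type identity (3) of Lemma 5.1, namely $\rho_{\Omega}(X+Y)+\rho_{\Omega}(X\cap Y)=\rho_{\Omega}(X)+\rho_{\Omega}(Y)$, applied to $X=C_{u+1-i}$, $Y={}^{\bot}V$, this yields
$$\rho_{\Omega}\bigl({}^{\bot}(D_i\cap V)\bigr)=\rho_{\Omega}(C_{u+1-i})+\rho_{\Omega}({}^{\bot}V)-\rho_{\Omega}\bigl(C_{u+1-i}\cap{}^{\bot}V\bigr).$$
Now form $\sum_{i=1}^{u}(-1)^{i-1}\lambda_{\Delta}(D_i\cap V)$: the $wm$-terms give $wm$ (as above, $u$ odd); the $\rho_{\Omega}(C_{u+1-i})$-terms give $-k$ after the same re-indexing and (7.2); the $\rho_{\Omega}({}^{\bot}V)$-terms telescope to $-\rho_{\Omega}({}^{\bot}V)$ (again $u$ odd), which by (5.8) is $-(wm-\lambda_{\Delta}(V))=\lambda_{\Delta}(V)-wm$; and the $\rho_{\Omega}(C_{u+1-i}\cap{}^{\bot}V)$-terms, after re-indexing, give $+\sum_{i=1}^{u}(-1)^{i-1}\rho_{\Omega}(C_i\cap{}^{\bot}V)$. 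Summing: $wm-k+(\lambda_{\Delta}(V)-wm)+\sum_{i=1}^{u}(-1)^{i-1}\rho_{\Omega}(\mathbf{C}(l,i)\cap{}^{\bot}V)$, so moving the last alternating sum to the left gives exactly $\lambda_{\Delta}(V)-k$. The main obstacle I anticipate is justifying ${}^{\bot}(D_i\cap V)={}^{\bot}D_i+{}^{\bot}V$: one inclusion (${}^{\bot}D_i+{}^{\bot}V\subseteq{}^{\bot}(D_i\cap V)$) is trivial, but the reverse requires the perfect-pairing property, and the clean way is to apply the inclusion-reversing bijection of (5.7) to the obvious identity $C_{u+1-i}^{\bot}\cap V={}^{\bot}\bigl(({}^{\bot}C_{u+1-i}^{\bot})+({}^{\bot}V)\bigr)$, or equivalently to note that ${}^{\bot}(-)$ turns sums into intersections and intersections into sums because $(^{\bot}D)^{\bot}=D$ and $^{\bot}(C^{\bot})=C$ hold for all submodules; this is a routine consequence of (5.7) once phrased correctly, and everything else is bookkeeping with the sign $(-1)^{i-1}$ and the parity of $u$.
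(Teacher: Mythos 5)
Your proof is correct and follows essentially the route the paper intends (the paper presents Lemma 7.1 as a ``straightforward computation'' from Lemma 5.1, (5.7) and (5.8)): convert $\lambda_{\Delta}$ to $\rho_{\Omega}$ via (5.8), use the inclusion-reversing bijection of (5.7) to get ${}^{\bot}(\mathbf{D}(l,i)\cap V)={}^{\bot}\mathbf{D}(l,i)+{}^{\bot}V$, apply the modular identity of Lemma 5.1(3), and carry out the alternating-sum bookkeeping using that $u(l)$ is odd. One small slip worth fixing: since $\mathbf{D}(l,i)=\mathbf{C}(l,u(l)+1-i)^{\bot}$, the index reversal $i\mapsto u(l)+1-i$ undoes the inclusion reversal of ${}^{\bot}$, so the chain runs $\mathbf{D}(l,u(l))\subseteq\cdots\subseteq\mathbf{D}(l,1)$ --- which is exactly the flag convention (6.2), not $D_1\subseteq\cdots\subseteq D_u$ as you wrote --- and the defining equation for $k$ is (7.1), not (7.2).
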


\setlength{\parindent}{2em}
\begin{remark}
The assumption ``$u(l)$ is odd for all $l\in I$'' is essential and cannot be removed, since otherwise Lemma 7.1 (which we heavily rely on) would fail to hold true (see [5, Theorem 10], [18, Proposition 32]).

We note that Ghorpade and Johnsen proposed an alternative approach to establish modified Wei-type duality theorems for Delsarte rank metric code flags of even length in [18, Theorem 33] (in particular, for relative generalized matrix weights; see \cite{31}). Their approach, which is based on Wei-type duality theorems for code flags of odd length, can also be adapted to poset metric and Gabidulin-Roth rank metric (in particular, to relative generalized Hamming weights; see \cite{28}).
\end{remark}

\subsection{Wei-type duality theorems for Gabidulin-Roth rank metric, poset metric and Delsarte rank metric}

\setlength{\parindent}{2em}
Throughout this subsection, we let $(M^{E},f_0,\Omega)$ and $(N^{E},h_0,\Delta)$ denote the associated $1$-demi-polymatroids of $(\mathbf{C},\Omega)$ and of $(\mathbf{D},\Delta)$, respectively, both with respect to Gabidulin-Roth rank metric; and let $(E,f_1)$ and $(E,h_1)$ denote the associated $w$-demi-matroids of $(\mathbf{C},\Omega)$ and of $(\mathbf{D},\Delta)$, respectively, both with respect to poset metric; and let $(M,f_2,\Omega)$ and $(N,h_2,\Delta)$ denote the associated $m$-demi-polymatroids of $(\mathbf{C},\Omega)$ and of $(\mathbf{D},\Delta)$, respectively, both with respect to Delsarte rank metric.

\setlength{\parindent}{0em}
\begin{proposition}
{\bf{(1)}}\,\,For any $V\in\mathcal{P}(N^{E})$, $h_0(V)=f_0(^{\bot}V)+\lambda_{\Delta}(V)-k$.

{\bf{(2)}}\,\,For any $J\subseteq E$, $h_1(J)=f_1(E-J)+w|J|-k$.

{\bf{(3)}}\,\,For any $L\in\mathcal{P}(N)$, $h_2(L)=f_2(^{\bot}L)+m\cdot\lambda_{\Delta}(L)-k$.
\end{proposition}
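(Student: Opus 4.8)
The plan is to prove the three identities by unwinding the definitions of the functions $f_0,f_1,f_2$ and their counterparts $h_0,h_1,h_2$, and then applying Lemma 7.1 together with the annihilator identities (5.7), (5.8). The three parts are entirely parallel, so I would set up the computation once in a way that covers all three by choosing the right ambient module, the right rank function, and the right ``weight'' of the demi-polymatroid/demi-matroid.

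First, for part (1), recall that by (6.3) applied to $\mathbf{D}$ and $\Delta$, for any $V\in\mathcal{P}(N^{E})$ we have $h_0(V)=\max\left\{\sum_{i=1}^{u(l)}(-1)^{i-1}\lambda_{\Delta}(\mathbf{D}(l,i)\cap V)\mid l\in I\right\}$. The key input is the second displayed identity in Lemma 7.1, which says that for each fixed $l\in I$ and each $V$, $$\left(\sum_{i=1}^{u(l)}(-1)^{i-1}\lambda_{\Delta}(\mathbf{D}(l,i)\cap V)\right)=\left(\sum_{i=1}^{u(l)}(-1)^{i-1}\rho_{\Omega}(\mathbf{C}(l,i)\cap {}^{\bot}V)\right)+\lambda_{\Delta}(V)-k.$$ Since the term $\lambda_{\Delta}(V)-k$ does not depend on $l$, taking the maximum over $l\in I$ on both sides and recalling the definition (6.3) of $f_0({}^{\bot}V)$ yields $h_0(V)=f_0({}^{\bot}V)+\lambda_{\Delta}(V)-k$, which is exactly (1). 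I would note here that $^{\bot}V\in\mathcal{P}(M^E)$ by (5.7), so $f_0(^{\bot}V)$ is defined.

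For part (2), I would specialize: $h_1(J)=h_0(\delta(J))$ by the analogue of (6.4) for $\mathbf{D}$, where now $\delta(J)\subseteq N^E$ is the right-module version of (6.1). The crucial observation is that $^{\bot}\delta(J)=\delta(E-J)$ (as a submodule of $M^E$): this holds because $\langle\alpha,\beta\rangle=\sum_{e\in E}\varpi(\alpha_{(e)},\beta_{(e)})$ and $\varpi$ is non-degenerated, so $\alpha$ annihilates every $\beta$ supported on $J$ iff $\alpha$ is supported on $E-J$. One also needs $\lambda_{\Delta}(\delta(J))=w|J|$, which follows from Lemma 5.1 (the rank $\lambda_{\Delta}$ is additive along the direct sum decomposition $\delta(J)\cong N^J$, and $\lambda_{\Delta}(N)=w$ by the right-module analogue of $\rho_{\Omega}(M)=w$). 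Plugging these two facts and part (1) into $h_1(J)=h_0(\delta(J))=f_0(^{\bot}\delta(J))+\lambda_{\Delta}(\delta(J))-k=f_0(\delta(E-J))+w|J|-k=f_1(E-J)+w|J|-k$ gives (2).

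For part (3), I would again specialize part (1) by taking $V=L^E$ for $L\in\mathcal{P}(N)$, using the analogue of (6.5): $h_2(L)=h_0(L^E)$. Here the needed facts are $^{\bot}(L^E)=(^{\bot}L)^E$ (again from non-degeneracy of $\varpi$, coordinatewise) and $\lambda_{\Delta}(L^E)=m\cdot\lambda_{\Delta}(L)$ (additivity of $\lambda_{\Delta}$ over the $m$-fold direct sum, from Lemma 5.1). Substituting into part (1) yields $h_2(L)=f_0((^{\bot}L)^E)+m\cdot\lambda_{\Delta}(L)-k=f_2(^{\bot}L)+m\cdot\lambda_{\Delta}(L)-k$, which is (3). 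The main obstacle, such as it is, is purely bookkeeping: verifying the two auxiliary identities $^{\bot}\delta(J)=\delta(E-J)$ and $^{\bot}(L^E)=(^{\bot}L)^E$, and the additivity of $\lambda_{\Delta}$ over direct sums; none of these is deep, but they must be stated carefully since everything rests on the non-degeneracy of $\varpi$ and on Lemma 5.1, and the whole reduction to Lemma 7.1 hinges on the $l$-independence of the correction term $\lambda_{\Delta}(V)-k$.
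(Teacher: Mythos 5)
Your proposal is correct and follows essentially the same route as the paper: part (1) is obtained by applying the second identity of Lemma 7.1 termwise in the definition (6.3) and using that the correction $\lambda_{\Delta}(V)-k$ is independent of $l$, and parts (2) and (3) are then deduced from (1) via the specializations $V=\varepsilon(J)$ (the right-module analogue of $\delta(J)$) and $V=L^{E}$, together with the identities $^{\bot}\varepsilon(J)=\delta(E-J)$, $^{\bot}(L^{E})=({}^{\bot}L)^{E}$ and the additivity of $\lambda_{\Delta}$ from Lemma 5.1. No gaps.
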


\begin{proof}
(1)\,\,Let $V\in\mathcal{P}(N^{E})$. Then, by (6.3), we have
$$\mbox{$f_0({^{\bot}V})=\max\left\{\sum_{i=1}^{u(l)}(-1)^{i-1}\rho_{\Omega}(\mathbf{C}(l,i)\cap(^{\bot}V))\mid l\in I\right\}$},$$
$$\mbox{$h_0(V)=\max\left\{\sum_{i=1}^{u(l)}(-1)^{i-1}\lambda_{\Delta}(\mathbf{D}(l,i)\cap V)\mid l\in I\right\}$},$$
which, together with Lemma 7.1, immediately implies (1).

(2)\,\,For any $B\subseteq E$, define $\delta(B)\subseteq M^{E}$ as in (6.1), and let $\varepsilon(B)\subseteq N^{E}$ denote the set of all the codewords of $N^{E}$ whose positions outside of $B$ are zeros. For any $J\subseteq E$, by (1) and (6.4), we have
$$h_1(J)=h_0(\varepsilon(J))=f_0(^{\bot}\varepsilon(J))+\lambda_{\Delta}(\varepsilon(J))-k.$$
Noticing that $^{\bot}\varepsilon(J)=\delta(E-J)$, $\lambda_{\Delta}(N)=\rho_{\Omega}(M)=w$ (by $^{\bot}N=\{0\}$ and (5.8)) and $\lambda_{\Delta}(\varepsilon(J))=\lambda_{\Delta}(N)\cdot|J|=w|J|$ (Lemma 5.1), we conclude that
$$h_1(J)=f_0(\delta(E-J))+w|J|-k=f_1(E-J)+w|J|-k,$$
proving (2).

(3)\,\,For any $L\in\mathcal{P}(N)$, by (1) and (6.5), we have
$$h_2(L)=h_0(L^{E})=f_0(^{\bot}(L^{E}))+\lambda_{\Delta}(L^{E})-k.$$
Since $^{\bot}(L^{E})=(^{\bot}L)^{E}$, we have $f_0(^{\bot}(L^{E}))=f_0((^{\bot}L)^{E})=f_2(^{\bot}L)$, which, together with $\lambda_{\Delta}(L^{E})=m\cdot\lambda_{\Delta}(L)$ (Lemma 5.1), immediately implies (3).
\end{proof}

\setlength{\parindent}{2em}
Combining Theorem 5.1, (1) of Definition 6.1 and (1) of Proposition 7.1, we have the following Wei-type duality theorem for Gabidulin-Roth rank metric.

\setlength{\parindent}{0em}
\begin{theorem}
{Fix $\Phi\subseteq\mathcal{P}(M^{E})$ such that $\{0\},M^{E}\in\Phi$, $((\Phi,\subseteq),(\rho_{\Omega})\mid_{\Phi})$ is an abundance, and let $\Theta=\{X^{\bot}\mid X\in\Phi\}$. Then, we have

{\bf{(1)}}\,\,For any $b\in[0,wm]$, ${\mathbf{K}^{\mathbf{G}}}_{b}(\mathbf{D},\Delta,\Theta)={\mathbf{K}^{\mathbf{G}}}_{wm-b}(\mathbf{C},\Omega,\Phi)+b-k$;

{\bf{(2)}}\,\,$\{{\mathbf{d}^{\mathbf{G}}}_{a}(\mathbf{C},\Omega,\Phi)\mid a\in[1,k]\}$ and $\{wm+1-{\mathbf{d}^{\mathbf{G}}}_{c}(\mathbf{D},\Delta,\Theta)\mid c\in[1,wm-k]\}$ form a partition of $[1,wm]$.
}
\end{theorem}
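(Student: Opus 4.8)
The plan is to derive Theorem 7.1 as a direct specialization of Theorem 5.1. The essential observation is that the Gabidulin-Roth constructions of Section 6 take place over the ambient module $M^{E}$ (resp.\ $N^{E}$) with the \emph{same} collections $\Omega$ (resp.\ $\Delta$) of simple modules, and that by Proposition 6.1(1) the triple $(M^{E},f_0,\Omega)$ is a $1$-demi-polymatroid, i.e.\ its demi-polymatroid parameter is $1$. So I would apply Theorem 5.1 with $M$, $N$, $\varpi$, $f$ replaced by $M^{E}$, $N^{E}$, $\langle\,,\,\rangle$, $f_0$, respectively, with the demi-polymatroid parameter ``$w$'' of Theorem 5.1 set equal to $1$; then the module-length parameter ``$\rho_{\Omega}(M)$'' of Theorem 5.1 is played by $\rho_{\Omega}(M^{E})=wm$ (Lemma 5.1). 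The subsets $\Phi$ and $\Theta$ are taken exactly as in the statement of Theorem 7.1.

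First I would verify the hypotheses of Theorem 5.1 under this replacement. The module $M^{E}$ has a composition series because $M$ does; $U$ satisfies (5.5) and (5.6) by assumption and $\langle\,,\,\rangle$ is a non-degenerated bilinear map by the Section 7 notation block, so $N^{E}$ has a composition series and (5.7)--(5.8) hold for the pair $(M^{E},N^{E})$. By Proposition 6.1(1), $(M^{E},f_0,\Omega)$ is a $1$-demi-polymatroid with $f_0(M^{E})=k$. It then remains to identify the dual object: Theorem 5.1 attaches to this data the function $V\mapsto f_0({}^{\bot}V)+1\cdot\lambda_{\Delta}(V)-k$ on $\mathcal{P}(N^{E})$, and by Proposition 7.1(1) this is precisely $h_0$, the function defining the $1$-demi-polymatroid associated to $(\mathbf{D},\Delta)$ with respect to Gabidulin-Roth rank metric. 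Finally, $\Phi$ is an abundance containing $\{0\}$ and $M^{E}$ by hypothesis, $\Theta=\{X^{\bot}\mid X\in\Phi\}$, and the fact that $\Theta$ is itself an abundance over $N^{E}$ is part of the conclusion of Theorem 5.1.

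With the hypotheses checked, Theorem 5.1(1) yields, for every $l\in[0,wm]$,
$$\mathbf{K}_{l}((N^{E},h_0,\Delta),\Theta)=\mathbf{K}_{wm-l}((M^{E},f_0,\Omega),\Phi)+l-k .$$
By Definition 6.1(1) (and its right-module counterpart applied to $(\mathbf{D},\Delta)$), the left-hand side equals ${\mathbf{K}^{\mathbf{G}}}_{l}(\mathbf{D},\Delta,\Theta)$ and the first term on the right equals ${\mathbf{K}^{\mathbf{G}}}_{wm-l}(\mathbf{C},\Omega,\Phi)$, which is assertion (1). For (2), note that since the relevant parameter ``$w$'' equals $1$, the congruence conditions $a\equiv\gamma+k~(\bmod~1)$ and $c\equiv\gamma~(\bmod~1)$ appearing in Theorem 5.1 impose nothing; hence, for every $\gamma\in\mathds{Z}$, $\mathcal{A}_{(\gamma)}=\{{\mathbf{d}^{\mathbf{G}}}_{a}(\mathbf{C},\Omega,\Phi)\mid a\in[1,k]\}$ and $\mathcal{B}_{(\gamma)}=\{wm+1-{\mathbf{d}^{\mathbf{G}}}_{c}(\mathbf{D},\Delta,\Theta)\mid c\in[1,wm-k]\}$, again by Definition 6.1(1). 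Theorem 5.1(2) states that $\mathcal{A}_{(\gamma)}\cap\mathcal{B}_{(\gamma)}=\emptyset$ and $\mathcal{A}_{(\gamma)}\cup\mathcal{B}_{(\gamma)}=[1,wm]$; i.e.\ these two sets partition $[1,wm]$, which is assertion (2).

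There is no genuinely hard step here: the argument is a substitution into Theorem 5.1. The one point that needs care is the double role of the letter ``$w$'': the demi-polymatroid parameter of $(M^{E},f_0,\Omega)$ is $1$ (Proposition 6.1(1)), whereas the $w=\rho_{\Omega}(M)$ of the Section 7 notation enters only through $\rho_{\Omega}(M^{E})=wm$, the module-length parameter of Theorem 5.1. Keeping these two apart, and identifying $h_0$ with the dual function produced by Theorem 5.1 via Proposition 7.1(1), is essentially the whole verification.
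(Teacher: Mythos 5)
Your proposal is correct and follows exactly the paper's route: the paper obtains Theorem 7.1 precisely by "combining Theorem 5.1, (1) of Definition 6.1 and (1) of Proposition 7.1," i.e., by specializing Theorem 5.1 to the $1$-demi-polymatroid $(M^{E},f_0,\Omega)$ with ambient length $\rho_{\Omega}(M^{E})=wm$ and identifying the dual function with $h_0$ via Proposition 7.1(1). Your remarks on the vacuous mod-$1$ congruences and on the two distinct roles of the letter $w$ match the intended reading.
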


\setlength{\parindent}{2em}
Combining Theorem 4.2, (3) of Definition 6.1 and (2) of Proposition 7.1, we have the following Wei-type duality theorem for poset metric.

\setlength{\parindent}{0em}
\begin{theorem}
{Let $\mathbf{P}=(E,\preccurlyeq_{\mathbf{P}})$ be a poset. Then, we have

{\bf{(1)}}\,\,For any $b\in [0,m]$, $\mathbf{K}_{b}(\mathbf{D},\Delta,\mathbf{\overline{P}})=\mathbf{K}_{m-b}(\mathbf{C},\Omega,\mathbf{P})+wb-k$;

{\bf{(2)}}\,\,For any $\gamma\in\mathds{Z}$, let $\mathcal{A}_{(\gamma)}=\{\mathbf{d}_{a}(\mathbf{C},\Omega,\mathbf{P})\mid a\in[1,k],a\equiv\gamma+k~(\bmod~w)\}$, $\mathcal{B}_{(\gamma)}=\{m+1-\mathbf{d}_{c}(\mathbf{D},\Delta,\mathbf{\overline{P}})\mid c\in[1,wm-k],~c\equiv\gamma~(\bmod~w)\}$. Then, we have $\mathcal{A}_{(\gamma)}\cap \mathcal{B}_{(\gamma)}=\emptyset$, $\mathcal{A}_{(\gamma)}\cup \mathcal{B}_{(\gamma)}=[1,m]$.
}
\end{theorem}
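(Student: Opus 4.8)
The plan is to reduce Theorem 7.3 to the already-proven Wei-type duality theorem for $w$-demi-matroids (Theorem 4.2) via the dictionary set up in Section 6. First I would recall from Definition 6.1(3)--(4) that for the poset $\mathbf{P}$, the quantities $\mathbf{d}_{a}(\mathbf{C},\Omega,\mathbf{P})$ and $\mathbf{K}_{b}(\mathbf{C},\Omega,\mathbf{P})$ are by definition $\mathbf{d}_{a}(f_1,\mathbf{P})$ and $\mathbf{K}_{b}(f_1,\mathbf{P})$, where $(E,f_1)$ is the $w$-demi-matroid associated to $(\mathbf{C},\Omega)$ with respect to poset metric (Proposition 6.1(2) guarantees this is genuinely a $w$-demi-matroid with $f_1(E)=k$). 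Similarly, $\mathbf{d}_{c}(\mathbf{D},\Delta,\overline{\mathbf{P}})=\mathbf{d}_{c}(h_1,\overline{\mathbf{P}})$ and $\mathbf{K}_{b}(\mathbf{D},\Delta,\overline{\mathbf{P}})=\mathbf{K}_{b}(h_1,\overline{\mathbf{P}})$, where $(E,h_1)$ is the $w$-demi-matroid associated to $(\mathbf{D},\Delta)$ with respect to poset metric.

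The key step is to identify $(E,h_1)$ as the \emph{dual} $w$-demi-matroid of $(E,f_1)$ in the sense of Proposition 4.1. This is exactly the content of Proposition 7.1(2): for any $J\subseteq E$, $h_1(J)=f_1(E-J)+w|J|-k$, which is precisely the formula $h(B)=f(E-B)+w|B|-k$ appearing in Theorem 4.2 (with $k=f_1(E)$ by Proposition 6.1(2)). So the two $w$-demi-matroids in Theorem 7.3 stand in exactly the dual relationship required by Theorem 4.2.

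With these identifications in hand, I would simply invoke Theorem 4.2 applied to the $w$-demi-matroid $(E,f)=(E,f_1)$, its dual $h=h_1$, and the poset $\mathbf{P}$. Part (1) of Theorem 4.2 gives $\mathbf{K}_{l}(h_1,\overline{\mathbf{P}})=\mathbf{K}_{m-l}(f_1,\mathbf{P})+wl-k$ for all $l\in[0,m]$, which translates verbatim into part (1) of Theorem 7.3 after renaming $l$ as $b$. Part (2) of Theorem 4.2 gives, for each $\gamma\in\mathds{Z}$, that $\mathcal{A}_{(\gamma)}=\{\mathbf{d}_{u}(f_1,\mathbf{P})\mid u\in[1,k],\,u\equiv\gamma+k\,(\bmod\,w)\}$ and $\mathcal{B}_{(\gamma)}=\{m+1-\mathbf{d}_{v}(h_1,\overline{\mathbf{P}})\mid v\in[1,wm-k],\,v\equiv\gamma\,(\bmod\,w)\}$ satisfy $\mathcal{A}_{(\gamma)}\cap\mathcal{B}_{(\gamma)}=\emptyset$ and $\mathcal{A}_{(\gamma)}\cup\mathcal{B}_{(\gamma)}=[1,m]$, which is exactly part (2) of Theorem 7.3 under the dictionary above.

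I do not expect a genuine obstacle here: the theorem is essentially a packaging result, and all the real work has already been done in Proposition 7.1(2) (establishing duality of the associated $w$-demi-matroids, which in turn rests on Lemma 7.1 and the oddness hypothesis on the $u(l)$) and in Theorem 4.2 itself. The only point requiring a line of care is to double-check that the definitional identifications $\mathbf{d}_{a}(\mathbf{C},\Omega,\mathbf{P})=\mathbf{d}_{a}(f_1,\mathbf{P})$ and $\mathbf{d}_{c}(\mathbf{D},\Delta,\overline{\mathbf{P}})=\mathbf{d}_{c}(h_1,\overline{\mathbf{P}})$ hold with the correct index ranges $[1,k]$ and $[1,wm-k]$ respectively, which follows since $f_1(E)=h_1(E)=k$ and $wm-k$ by Proposition 6.1(2) and the dual formula; then Theorem 4.2 applies with nothing left to verify.
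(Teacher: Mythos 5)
Your proposal is correct and follows exactly the paper's own route: the paper proves this theorem by combining Theorem 4.2, Definition 6.1(3), and Proposition 7.1(2), which is precisely the reduction you describe. (Minor note: in the paper's numbering this is Theorem 7.2, not 7.3, but that does not affect the argument.)
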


\setlength{\parindent}{2em}
Combining Theorem 5.1, (5) of Definition 6.1 and (3) of Proposition 7.1, we have the following Wei-type duality theorem for Delsarte rank metric.

\setlength{\parindent}{0em}
\begin{theorem}
Let $\Phi\subseteq\mathcal{P}(M)$ such that $\{0\},M\in\Phi$, $((\Phi,\subseteq),(\rho_{\Omega})\mid_{\Phi})$ is an abundance, and let $\Theta=\{X^{\bot}\mid X\in\Phi\}$. Then, we have

{\bf{(1)}}\,\,For any $b\in[0,w]$, ${\mathbf{K}^{\mathbf{R}}}_{b}(\mathbf{D},\Delta,\Theta)={\mathbf{K}^{\mathbf{R}}}_{w-b}(\mathbf{C},\Omega,\Phi)+mb-k$;

{\bf{(2)}}\,\,For any $\gamma\in\mathds{Z}$, let $\mathcal{A}_{(\gamma)}=\{{\mathbf{d}^{\mathbf{R}}}_{a}(\mathbf{C},\Omega,\Phi)\mid a\in[1,k],a\equiv\gamma+k~(\bmod~m)\}$, $\mathcal{B}_{(\gamma)}=\{w+1-{\mathbf{d}^{\mathbf{R}}}_{c}(\mathbf{D},\Delta,\Theta)\mid c\in[1,wm-k],~c\equiv\gamma~(\bmod~m)\}$. Then, we have $\mathcal{A}_{(\gamma)}\cap \mathcal{B}_{(\gamma)}=\emptyset$, $\mathcal{A}_{(\gamma)}\cup \mathcal{B}_{(\gamma)}=[1,w]$.
\end{theorem}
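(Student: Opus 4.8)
The plan is to obtain Theorem 7.4 as a direct instance of Theorem 5.1, applied to the $m$-demi-polymatroid $(M,f_2,\Omega)$ rather than to a ``$w$-demi-polymatroid'' in the literal sense; the one point to keep track of throughout is that the parameter called ``$w$'' in Theorem 5.1 is played here by $m$, while the quantity ``$\rho_{\Omega}(M)$'' there is played here by $w$ (recall $w\triangleq\rho_{\Omega}(M)$).

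First I would invoke (3) of Proposition 6.1, which gives that $(M,f_2,\Omega)$ is an $m$-demi-polymatroid with $f_2(M)=k$. The construction of the dual $m$-demi-polymatroid over $N$ described in Section 5.2 then produces $(N,h,\Delta)$ with $h(L)=f_2(^{\bot}L)+m\cdot\lambda_{\Delta}(L)-k$ for $L\in\mathcal{P}(N)$; by (3) of Proposition 7.1 this $h$ is exactly $h_2$, the $m$-demi-polymatroid associated to $(\mathbf{D},\Delta)$ with respect to Delsarte rank metric. In particular $h_2(N)=m\cdot\rho_{\Omega}(M)-k=wm-k$, which matches the index range of the Delsarte generalized rank weights of $\mathbf{D}$.

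Next I would check the hypotheses on the second coordinate: with $\Phi$ as in the statement and $\Theta=\{X^{\bot}\mid X\in\Phi\}$, the paragraph preceding Theorem 5.1 already records that $\{0\},N\in\Theta$ and that $((\Theta,\subseteq),(\lambda_{\Delta})\mid_{\Theta})$ is an abundance. Hence Theorem 5.1 applies to the pair $((M,f_2,\Omega),\Phi)$ and $((N,h_2,\Delta),\Theta)$. Performing in Theorem 5.1 the substitution ``$w$'' $\to m$ and ``$\rho_{\Omega}(M)$'' $\to w$, then rewriting $\mathbf{d}_{a}((M,f_2,\Omega),\Phi)$ and $\mathbf{K}_{b}((M,f_2,\Omega),\Phi)$ as ${\mathbf{d}^{\mathbf{R}}}_{a}(\mathbf{C},\Omega,\Phi)$ and ${\mathbf{K}^{\mathbf{R}}}_{b}(\mathbf{C},\Omega,\Phi)$ via (5) of Definition 6.1, and likewise rewriting the dual quantities $\mathbf{d}_{c}((N,h_2,\Delta),\Theta)$ and $\mathbf{K}_{b}((N,h_2,\Delta),\Theta)$ as ${\mathbf{d}^{\mathbf{R}}}_{c}(\mathbf{D},\Delta,\Theta)$ and ${\mathbf{K}^{\mathbf{R}}}_{b}(\mathbf{D},\Delta,\Theta)$ (legitimate since $h_2$ is precisely the $m$-demi-polymatroid associated to $(\mathbf{D},\Delta)$ and $\Theta$ plays the role of $\Phi$ for the dual), statement (1) of Theorem 5.1 becomes statement (1) here and statement (2) of Theorem 5.1 becomes statement (2) here.

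Since all the required pieces are already in place, I do not anticipate a genuine obstacle; the work is essentially notational bookkeeping. The only delicate part is verifying that, after the substitution above, the congruence conditions ``$a\equiv\gamma+k~(\bmod~m)$'' and ``$c\equiv\gamma~(\bmod~m)$'' in the definitions of $\mathcal{A}_{(\gamma)}$ and $\mathcal{B}_{(\gamma)}$ here correspond to the ``$\bmod~w$'' conditions in Theorem 5.1, and that the index ranges $a\in[1,k]$, $c\in[1,wm-k]$ together with the ambient interval $[1,w]$ all agree with those of Theorem 5.1 under the same substitution.
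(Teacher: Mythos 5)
Your proposal is correct and follows exactly the paper's route: the paper derives this theorem by combining Theorem 5.1 (applied to the $m$-demi-polymatroid $(M,f_2,\Omega)$, with the roles of ``$w$'' and ``$\rho_{\Omega}(M)$'' swapped to $m$ and $w$ respectively), (5) of Definition 6.1 to translate weights and profiles, and (3) of Proposition 7.1 to identify the Section~5.2 dual with $h_2$. Your bookkeeping of the index ranges and the $\bmod~m$ congruences under this substitution is accurate, so nothing is missing.
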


\setlength{\parindent}{2em}
In the following remark, we show how Theorems 7.1-7.3 recover some known Wei-type duality theorems for codes.

\setlength{\parindent}{0em}
\begin{remark}
{\bf{(1)}}\,\,Let $\mathds{F}$ be a finite field. Suppose that $R$ is a finite field extension of $\mathds{F}$, and set $U=M=N=R=S$. Let $\varpi$ denote the multiplication within $R$. Then, $\langle~,~\rangle$ becomes the standard inner product of $R^{E}$. Now define $\Phi\subseteq\mathcal{P}(R^{E})$ as in (6.6). By [15, Lemma III.1], we have $\{V^{\bot}\mid V\in\Phi\}=\Phi$. Hence, Theorem 7.1 recover [15, Theorem I.3] for Gabidulin-Roth rank metric codes.

{\bf{(2)}}\,\,Let $R$ be a quasi-Frobenius ring (see \cite{1,11,39,41}), $U=M=N=S=R$, and $\varpi$ be multiplication within $R$. Then, $\langle~,~\rangle$ becomes the standard inner product of $R^{E}$. We note that by [11, Theorem 58.6], $U$ satisfies (5.5) and (5.6). Then, Theorem 7.2 recovers [3, Lemma 2.2], [6, Theorem 11] and [32, Theorem 2] which are stated for codes over fields or division rings. Since a Galois ring is quasi-Frobenius, Theorem 7.2 can also be regarded as a poset metric generalization of [2, Theorem 2] for Galois ring linear codes.

{\bf{(3)}}\,\,As in (3) of Remark 6.1, we let $E=[1,m]$, $R$ be a field, $U=R=S$, $M=N=R^{w}$, and $\varpi$ be the standard inner product of $R^{w}$. Then, for any two matrices $\gamma,\theta\in M^{E}$, we have $\langle\gamma,\theta\rangle=\tr(\gamma\cdot\theta^{T})$ ([20, Definition 4.1]). Set $\Phi=\mathcal{P}(M)$. Then, for the case that $\mathbf{C}$ consists of one code, Theorem 7.3 becomes [31, Proposition 65]. If $|I|=1$, then Theorem 7.3 becomes the Wei-type duality theorem for a code flag, which alternatively follows from [18, Theorem 39]. If $m=w$, $I=\{1,2\}$, $u(1)=u(2)=1$, and $\mathbf{C}(2,1)=\{\theta^{T}\mid \theta\in \mathbf{C}(1,1)\}$, then Theorem 7.3 recovers [36, Corollary 38].
\end{remark}

\subsection{Wei-type duality theorem for generalized weights with respect to rank of modules over finite chain rings}

\setlength{\parindent}{2em}
In \cite{23}, Horimoto and Shiromoto prove a Wei-type duality theorem for Generalized Hamming weights with respect to rank (GHWR) (see [23, Theorem 3.12]). In this subsection, we study GHWR via a demi-matroid approach, and give a poset metric generalization of their result. This subsection is different from Section 7.1 in that the corresponding demi-matroid will be defined by rank of modules, instead of $\rho_{\Omega}$ or $\lambda_{\Delta}$.

Assume that $A$ is a finite chain ring, i.e., $A$ is a finite local ring, and the Jacobson radical of $A$ is a principle left ideal. For any finitely generated left (right) $A$-module $X$, we let $\rank(X)$ denote the minimum number of generators of $X$.

Throughout this subsection, we set $R=S=U=M=N=A$ and let $\varpi$ be multiplication within $A$. Then, $\langle~,~\rangle$ becomes the standard inner product of $A^{E}$. For any $A$-$A$-submodule $L\subseteq A^{E}$, thanks to [23, Proposition 2.1], the minimum number of generators of $L$ as a left $A$-module is equal to the minimum number of generators of $L$ as a right $A$-module. For any $J\subseteq E$, we define $\delta(J)\subseteq A^{E}$ as in (6.1). Then, $\delta(J)$ is an $A$-$A$-submodule with $\rank(\delta(J))=|J|$.

\setlength{\parindent}{2em}
With respect to $\rank$, every linear code gives rise to a demi-matroid, as detailed in the following proposition.

\setlength{\parindent}{0em}
\begin{proposition}
Let $C\subseteq A^{E}$ be a left (or right) linear code with $t=\rank(C)$, and define $f:2^{E}\longrightarrow \mathds{Z}$ as $f(J)=\rank(C\cap\delta(J))$. Then, $(E,f)$ is a demi-matroid with $f(E)=t$.
\end{proposition}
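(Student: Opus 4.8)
The plan is to verify directly the two conditions of Definition 4.1 with $w=1$ — recall that a demi-matroid is exactly a $1$-demi-matroid (Remark 4.1) — together with the normalisation $f(E)=t$. Two of these are immediate from the definition of $\delta$: since $\delta(\emptyset)=\{0\}$ we get $f(\emptyset)=\rank(\{0\})=0$, and since $\delta(E)=A^{E}$ we get $f(E)=\rank(C\cap A^{E})=\rank(C)=t$. Hence the entire content of the proposition is the estimate
$$0\leqslant f(B)-f(A)\leqslant|B|-|A|\qquad\text{whenever }A\subseteq B\subseteq E,$$
i.e. that $J\mapsto\rank(C\cap\delta(J))$ is non-decreasing with all consecutive differences at most $1$; this is the finite-chain-ring counterpart of the familiar fact that the analogous dimension function is monotone with bounded increments for linear codes over a field.

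I would first isolate the three properties of $\rank$ over the finite chain ring $A$ on which the estimate rests: (a) monotonicity under submodule inclusion, $U\subseteq V\Rightarrow\rank(U)\leqslant\rank(V)$; (b) subadditivity along quotients, $\rank(V)\leqslant\rank(U)+\rank(V/U)$ for $U\subseteq V$; and (c) $\rank(A^{n})=n$. Property (b) is immediate from the description of $\rank$ as the least number of generators (generators of $U$ together with lifts of generators of $V/U$ generate $V$), and (c) follows by reducing modulo the maximal ideal $\mathfrak{m}$ of $A$ and applying Nakayama, so that $\rank(A^{n})=\dim_{A/\mathfrak{m}}\big((A/\mathfrak{m})^{n}\big)=n$. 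For (a) I would use the socle characterisation $\rank(V)=\dim_{A/\mathfrak{m}}\big(\{v\in V:\mathfrak{m}v=0\}\big)$, valid for finite modules over a chain ring (each such module being a direct sum of cyclic modules $A/\mathfrak{m}^{e}$): since $U\subseteq V$ forces $\{u\in U:\mathfrak{m}u=0\}\subseteq\{v\in V:\mathfrak{m}v=0\}$, monotonicity follows. Alternatively one may simply cite the basic properties of $\rank$ recorded in [23, Section 2].

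Granting (a)--(c), the estimate is bookkeeping. From $A\subseteq B$ we get $\delta(A)\subseteq\delta(B)$, hence $C\cap\delta(A)\subseteq C\cap\delta(B)$, so (a) yields $f(A)\leqslant f(B)$. For the upper bound, note that $(C\cap\delta(B))\cap\delta(A)=C\cap\delta(A)$, so the second isomorphism theorem gives
$$\frac{C\cap\delta(B)}{C\cap\delta(A)}\;\cong\;\frac{(C\cap\delta(B))+\delta(A)}{\delta(A)}\;\subseteq\;\frac{\delta(B)}{\delta(A)}\;\cong\;A^{B\setminus A},$$
the last isomorphism being projection onto the coordinates in $B\setminus A$; hence $\rank\big((C\cap\delta(B))/(C\cap\delta(A))\big)\leqslant\rank(A^{B\setminus A})=|B|-|A|$ by (a) and (c). Feeding this into (b) applied to $C\cap\delta(A)\subseteq C\cap\delta(B)$ gives $f(B)\leqslant f(A)+(|B|-|A|)$, completing the estimate and hence the proof.

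The only ingredient that is not purely formal is (a): monotonicity of $\rank$ under inclusion genuinely uses the module structure of finite chain rings (equivalently, the socle characterisation of $\rank$), whereas (b), (c) and the isomorphism-theorem manipulation are valid over an arbitrary ring. The parenthetical ``left (or right)'' causes no difficulty, since the argument is carried out on a single side and $\delta(J)$ is a two-sided submodule, so the intersections $C\cap\delta(J)$ remain one-sided in the same sense as $C$.
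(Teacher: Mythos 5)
Your proof is correct and follows essentially the same route as the paper: the paper's proof also reduces everything to the single inequality $0\leqslant\rank(W)-\rank(V)\leqslant\rank(W/V)$ for submodules over a finite chain ring (your (a) and (b) combined) and then bounds $\rank\bigl((C\cap\delta(B))/(C\cap\delta(A))\bigr)$ by $\rank\bigl(\delta(B)/\delta(A)\bigr)=|B|-|A|$ exactly as you do. The only difference is that you spell out the socle/Nakayama justifications and the second-isomorphism-theorem embedding, which the paper leaves implicit.
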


\begin{proof}
Since $A$ is a finite chain ring, for any finitely generated left (right) $A$-module $W$ and $A$-submodule $V\subseteq W$, we have
\begin{equation}0\leqslant\rank(W)-\rank(V)\leqslant \rank(W/V).\end{equation}
Now without loss of generality, we assume $C$ is a right linear code. Since $\delta(\emptyset)=\{0\}$, $\delta(E)=A^{E}$, we have $f(\emptyset)=0$, $f(E)=\rank(C)=t$. For any $B,D$ with $B\subseteq D\subseteq E$, we have $\delta(B)\subseteq\delta(D)$, which, together with (7.2), yields that
$$\hspace*{-4mm}0\leqslant f(D)-f(B)\leqslant \rank((C\cap\delta(D))/(C\cap\delta(B)))\leqslant\rank(\delta(D)/\delta(B))=|D|-|B|.$$
Therefore $(E,f)$ is a demi-matroid, completing the proof.
\end{proof}

\setlength{\parindent}{2em}
Based on Proposition 7.4, we give the following definition.

\setlength{\parindent}{0em}

\begin{definition}
Let $C\subseteq A^{E}$ be a left (or right) linear code with $t=\rank(C)$, and define $f:2^{E}\longrightarrow \mathds{Z}$ as $f(J)=\rank(C\cap\delta(J))$. Then, $(E,f)$ is called the associated demi-matroid of $C$ with respect to rank. Moreover, given a poset $\mathbf{P}=(E,\preccurlyeq_{\mathbf{P}})$, for any $a\in [0,t]$, the $a$-th generalized weight of $(C,\mathbf{P})$ with respect to rank, denoted by $\mathbf{\widetilde{d}}_{a}(C,\mathbf{P})$, is defined as
$$\mathbf{\widetilde{d}}_{a}(C,\mathbf{P})=\mathbf{d}_{a}(f,\mathbf{P})=\min\{|J|\mid J\in\mathcal{I}(\mathbf{P}),~a\leqslant \rank(C\cap\delta(J))\},$$
and for any $b\in [0,m]$, the $b$-th profile of $(C,\mathbf{P})$ with respect to rank, denoted by $\mathbf{\widetilde{K}}_{b}(C,\mathbf{P})$, is defined as
$$\mathbf{\widetilde{K}}_{b}(C,\mathbf{P})=\mathbf{K}_{b}(f,\mathbf{P})=\max\{\rank(C\cap\delta(J))\mid J\in\mathcal{I}(\mathbf{P}),~|J|=b\}.$$
\end{definition}

\setlength{\parindent}{2em}
The following Lemma shows that Definition 7.1 naturally extends the notion of GHWR (see [23, Definition 3.1]).

\begin{lemma}
Let $C\subseteq A^{E}$ be a left (or right) linear code with $t=\rank(C)$, and fix a poset $\mathbf{P}=(E,\preccurlyeq_{\mathbf{P}})$. Then, for any $a\in[0,t]$, we have
$$\mathbf{\widetilde{d}}_{a}(C,\mathbf{P})=\min\{|\langle \chi(D)\rangle_{\mathbf{P}}|\mid D~\mbox{ is a left (right) subcode of }~C,\rank(D)=a\},$$
where $\chi(D)$ is defined as in (1.2) (with $\mathbb{F}$ replaced by $A$). In particular, if $\mathbf{P}$ is set to be the anti-chain, then $\mathbf{\widetilde{d}}_{a}(C,\mathbf{P})$ is equal to the $a$-th GHWR of $C$.
\end{lemma}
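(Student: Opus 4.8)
The plan is to prove the identity by establishing both inequalities between $\mathbf{\widetilde{d}}_{a}(C,\mathbf{P})$ and $q_{a}:=\min\{|\langle\chi(D)\rangle_{\mathbf{P}}|\mid D\ \text{a subcode of}\ C,\ \rank(D)=a\}$, working throughout with right linear codes (the left-module case being verbatim the same). First I would record two purely formal facts. (i) For any subcode $D$ one has $D\subseteq\delta(\chi(D))$ straight from the definitions of $\chi$ and $\delta$; since $\delta$ is inclusion-preserving this gives $D\subseteq\delta(J)$ for every $J\supseteq\chi(D)$, and conversely $D\subseteq\delta(J)$ forces $\chi(D)\subseteq J$. (ii) $\langle\chi(D)\rangle_{\mathbf{P}}$ is the smallest ideal of $\mathbf{P}$ containing $\chi(D)$, so whenever $J\in\mathcal{I}(\mathbf{P})$ contains $\chi(D)$ we have $\langle\chi(D)\rangle_{\mathbf{P}}\subseteq J$, hence $|\langle\chi(D)\rangle_{\mathbf{P}}|\leqslant|J|$.

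For $\mathbf{\widetilde{d}}_{a}(C,\mathbf{P})\leqslant q_{a}$: take any subcode $D\subseteq C$ with $\rank(D)=a$ and set $J=\langle\chi(D)\rangle_{\mathbf{P}}\in\mathcal{I}(\mathbf{P})$. By (i), $D\subseteq C\cap\delta(J)$, whence $\rank(C\cap\delta(J))\geqslant\rank(D)=a$, so $J$ is admissible in the defining minimum of $\mathbf{\widetilde{d}}_{a}(C,\mathbf{P})$ and therefore $\mathbf{\widetilde{d}}_{a}(C,\mathbf{P})\leqslant|J|$; minimising over $D$ yields the inequality. For the reverse inequality, choose an ideal $J$ attaining $\mathbf{\widetilde{d}}_{a}(C,\mathbf{P})$; the defining set is finite and nonempty since $J=E$ always qualifies ($\rank(C\cap\delta(E))=\rank(C)=t\geqslant a$). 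Because $\rank(C\cap\delta(J))\geqslant a$, I would produce a subcode $D\subseteq C\cap\delta(J)$ with $\rank(D)=a$; then $D\subseteq\delta(J)$ gives $\chi(D)\subseteq J$, so by (ii) $|\langle\chi(D)\rangle_{\mathbf{P}}|\leqslant|J|=\mathbf{\widetilde{d}}_{a}(C,\mathbf{P})$, and since $D$ is a subcode of $C$ of rank exactly $a$ this shows $q_{a}\leqslant\mathbf{\widetilde{d}}_{a}(C,\mathbf{P})$. The two inequalities give the claimed equality. For the final assertion, note that when $\preccurlyeq_{\mathbf{P}}$ is the anti-chain order every subset of $E$ is an ideal and $\langle B\rangle_{\mathbf{P}}=B$ for all $B\subseteq E$, so $q_{a}$ reduces to $\min\{|\chi(D)|\mid D\subseteq C,\ \rank(D)=a\}$, which is exactly the $a$-th GHWR of [23, Definition 3.1].

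The only step that is not a matter of unwinding definitions — and the main thing to get right — is producing a subcode of rank exactly $a$ inside $C\cap\delta(J)$ once we know its rank is at least $a$. I would derive this from the structure theory over a finite chain ring $A$: any finitely generated (right) $A$-module decomposes as a finite direct sum of cyclic modules of the form $A/\mathfrak{m}^{e}$ with $e\geqslant1$, where $\mathfrak{m}$ is the maximal ideal of $A$, the number of summands being its rank; the internal direct sum of any $a$ of these summands is a submodule of rank exactly $a$. Since $C\cap\delta(J)$ is a finitely generated right $A$-submodule of $A^{E}$, this applies and completes the argument. No part of the proof is expected to pose an obstacle beyond invoking this decomposition correctly.
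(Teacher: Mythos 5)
Your proof is correct. The paper states this lemma without proof (it is presented as a routine unwinding of Definition 7.1 against [23, Definition 3.1]), so there is no argument of the authors' to compare against; your write-up simply supplies the missing details. Both inequalities are handled properly: the formal facts (i)--(ii) relating $\chi$, $\delta$ and ideal generation are exactly what is needed, and the monotonicity $\rank(D)\leqslant\rank(C\cap\delta(J))$ for $D\subseteq C\cap\delta(J)$ that you use implicitly is guaranteed by the paper's inequality (7.2) for modules over a finite chain ring (it is not automatic over arbitrary rings, so it is worth citing). You correctly isolate the one genuinely nontrivial step --- extracting a submodule of rank exactly $a$ from a module of rank at least $a$ --- and your appeal to the decomposition of a finitely generated module over a finite chain ring into cyclic summands $A/\mathfrak{m}^{e}$, with rank counted by the number of summands via Nakayama, settles it. The anti-chain specialization is also handled correctly.
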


\setlength{\parindent}{2em}
Now we consider the corresponding Wei-type duality theorem. The free code, i.e., the free $A$-submodule of $A^{E}$, will play a particularly important role in the discussion.

\setlength{\parindent}{2em}
Throughout the rest of this subsection, we let $C\subseteq A^{E}$ be a right linear code with $t=\rank(C)$, and let $M\subseteq A^{E}$ be a free right linear code such that $C\subseteq M$ and $t=\rank(M)$. We note that such $M$ always exists, and moreover, $^{\bot}M$ is a free left linear code with $\rank(^{\bot}M)=m-t$.

Let $(E,f)$ and $(E,h)$ be the associated demi-matroids of $M$ and $^{\bot}M$ with respect to rank, respectively. It turns out that $(E,f)$ and $(E,h)$ are dual demi-matroids in the sense of Proposition 4.1, which, albeit without mentioning demi-matroids, has already been observed in the proof of [23, Theorem 3.12].

\setlength{\parindent}{0em}
\begin{proposition}
For any $J\subseteq E$, $h(J)=f(E-J)+|J|-t$.
\end{proposition}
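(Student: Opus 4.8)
The plan is to reduce the identity to a Smith‑normal‑form computation over the finite chain ring $A$. Since $M$ is a free right $A$‑module of rank $t$, I would fix a basis $g_1,\dots,g_t\in A^{E}$ of $M$ and let $G$ be the $m\times t$ matrix over $A$ whose $i$‑th column is $g_i$, so that $M=\{Ga\mid a\in A^{t}\}$ and $a\mapsto Ga$ is a bijection $A^{t}\to M$. For $J\subseteq E$ write $G_J$ for the $|J|\times t$ submatrix of $G$ consisting of the rows indexed by $J$, and let $\pi_J\colon A^{E}\to A^{J}$ be the coordinate projection.

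First I would rewrite each side of the claimed identity as the rank of a kernel of $G_J$. For the right‑hand side: $M\cap\delta(E-J)=\{Ga\mid a\in A^{t},\ G_Ja=0\}$, and since $a\mapsto Ga$ is injective this yields $M\cap\delta(E-J)\cong\{a\in A^{t}\mid G_Ja=0\}$, the right kernel of $G_J$ viewed as a map $A^{t}\to A^{J}$; hence $f(E-J)=\rank(\mathrm{rker}\,G_J)$. For the left‑hand side: since $\varpi$ is the multiplication of $A$, for $x\in\delta(J)$ one has $\langle x,Ga\rangle=(\pi_J(x)\,G_J)\,a$ for every $a\in A^{t}$, so $x\in{}^{\bot}M$ iff $\pi_J(x)\,G_J=0$; therefore ${}^{\bot}M\cap\delta(J)\cong\{b\in A^{J}\mid b\,G_J=0\}$, the left kernel of $G_J$ viewed as a map $A^{J}\to A^{t}$, and $h(J)=\rank(\mathrm{lker}\,G_J)$. (If $A$ is allowed to be noncommutative one passes to $A^{\mathrm{op}}$ when forming the ``transpose''; keeping track of the left/right module structures here is the one genuinely delicate point.)

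Next I would put $G_J$ in Smith normal form: over a finite chain ring there are invertible $P,Q$ and a rectangular diagonal $D$ with diagonal entries $\theta^{e_1},\dots,\theta^{e_r}$, where $r=\min(|J|,t)$, $\theta$ generates the maximal ideal, $s$ is its nilpotency index and $e_i\in\{0,1,\dots,s\}$, such that $G_J=PDQ$. Right‑multiplying by $Q$ and the appropriate invertible factor does not change the ranks of the left and right kernels, so it suffices to treat $D$, for which one reads off
$$\mathrm{rker}\,D\cong\Big(\bigoplus_{i=1}^{r}\mathrm{Ann}_A(\theta^{e_i})\Big)\oplus A^{\max(t-|J|,0)},\qquad \mathrm{lker}\,D\cong\Big(\bigoplus_{i=1}^{r}\mathrm{Ann}_A(\theta^{e_i})\Big)\oplus A^{\max(|J|-t,0)}.$$
Each $\mathrm{Ann}_A(\theta^{e_i})$ is cyclic, of rank $1$ when $e_i\geqslant1$ and $0$ when $e_i=0$, so with $N=\#\{i\mid e_i\geqslant1\}$ we get $f(E-J)=N+\max(t-|J|,0)$ and $h(J)=N+\max(|J|-t,0)$. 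Subtracting, $h(J)-f(E-J)=\max(|J|-t,0)-\max(t-|J|,0)=|J|-t$, which is the asserted identity.

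I expect the second step — identifying ${}^{\bot}M\cap\delta(J)$ with the left kernel of $G_J$ — to be the crux, because this is exactly where the non‑additivity of $\rank$ over a chain ring is circumvented: the common ``rank part'' $N$ of the two kernels cancels, and only the free padding, governed by $|J|$ versus $t$, survives. The remaining ingredients (existence of the Smith normal form over finite chain rings, and the module‑theoretic rank count) are routine, and this computation is in essence the one already carried out, without demi‑matroid language, in the proof of [23, Theorem 3.12]. An alternative route, parallel to the proof of Proposition 7.1, is to combine the annihilator identity ${}^{\bot}M\cap\delta(J)={}^{\bot}\!\big(M+\delta(E-J)\big)$ with $\rank({}^{\bot}W)=\rank(A^{E}/W)$ and $A^{E}/\big(M+\delta(E-J)\big)\cong A^{J}/\pi_J(M)$, together with the ``Euler characteristic'' identity $\rank(\ker\phi)-\rank(\mathrm{coker}\,\phi)=\rank(F_1)-\rank(F_2)$ for a homomorphism $\phi\colon F_1\to F_2$ of free $A$‑modules; this again rests on the Smith normal form.
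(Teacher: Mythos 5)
Your argument is correct, but it is organized differently from the paper's. The paper's entire proof is a one-line citation: since $M$ is free, [23, Lemma 3.11] gives $\rank(M\cap\delta(E-J))=\rank(M)+\rank((^{\bot}M)\cap\delta(J))-\rank(\delta(J))$, and the proposition follows by substituting $\rank(\delta(J))=|J|$, $\rank(M)=t$, $f(E-J)=\rank(M\cap\delta(E-J))$ and $h(J)=\rank((^{\bot}M)\cap\delta(J))$. What you have done is essentially reprove that lemma from scratch: your identification of $M\cap\delta(E-J)$ with the right kernel of $G_J$ and of $^{\bot}M\cap\delta(J)$ with its left kernel is exactly where freeness of $M$ is used (for a non-free code the generator matrix would not give an injective parametrization and the identity fails), and the Smith-normal-form computation correctly isolates why the non-additive ``torsion'' contribution $N$ cancels between the two sides, leaving only $|J|-t$. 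The rank count via $\rank(X)=\dim_{A/\mathfrak{m}}(X/\mathfrak{m}X)$ is additive over direct sums and insensitive to the invertible factors $P,Q$, so the computation is sound; your caveat about left/right module structures over a possibly noncommutative chain ring is the right thing to worry about, and it works out because all ideals of a finite chain ring are two-sided powers of the maximal ideal. In short, you buy self-containedness and an explanation of \emph{why} the identity holds, at the cost of redoing the diagonal-reduction argument that the paper (and [23]) treats as a black box; either route is acceptable.
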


\begin{proof}
Let $J\subseteq E$. Since $M$ is free, by [23, Lemma 3.11], we have
$$\rank(M\cap\delta(E-J))=\rank(M)+\rank((^{\bot}M)\cap\delta(J))-\rank(\delta(J)).$$
Note that $\rank(\delta(J))=|J|$, $t=\rank(M)$, $\rank(M\cap\delta(E-J))=f(E-J)$ and $\rank((^{\bot}M)\cap\delta(J))=h(J)$, the proposition immediately follows.
\end{proof}

\setlength{\parindent}{2em}
We are now ready to state and prove the Wei-type duality theorem.

\setlength{\parindent}{0em}
\begin{theorem}
For any given poset $\mathbf{P}=(E,\preccurlyeq_{\mathbf{P}})$, it holds true that:

{\bf{(1)}}\,\,$\mathbf{\widetilde{d}}_{a}(C,\mathbf{P})=\mathbf{\widetilde{d}}_{a}(M,\mathbf{P})$ for any $a\in[0,t]$;

{\bf{(2)}}\,\,$\mathbf{\widetilde{K}}_{b}(C,\mathbf{P})=\mathbf{\widetilde{K}}_{b}(M,\mathbf{P})$ for any $b\in[0,m]$;

{\bf{(3)}}\,\,$\mathbf{\widetilde{K}}_{l}(^{\bot}M,\mathbf{\overline{P}})=\mathbf{\widetilde{K}}_{m-l}(C,\mathbf{P})+l-t$ for any $l\in[0,m]$;

{\bf{(4)}}\,\,$\{\mathbf{\widetilde{d}}_{u}(C,\mathbf{P})\mid u\in[1,t]\}$ and $\{m+1-\mathbf{\widetilde{d}}_{v}(^{\bot}M,\mathbf{\overline{P}})\mid v\in[1,m-t]\}$ form a partition of $[1,m]$.
\end{theorem}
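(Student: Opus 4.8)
The plan is to establish parts (1) and (2) first --- that replacing $C$ by a free supercode $M$ of the same rank changes neither the generalized weights nor the profiles with respect to rank --- and then to obtain (3) and (4) by feeding the duality between the associated demi-matroids of $M$ and $^{\bot}M$ (Propositions 7.4 and 7.5) into Theorem 4.2 specialized to $w=1$.

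For (1) and (2), set $f_{C}(J)=\rank(C\cap\delta(J))$ for $J\subseteq E$, so that $(E,f_{C})$ is the associated demi-matroid of $C$ while $(E,f)$ is that of $M$, i.e. $f(J)=\rank(M\cap\delta(J))$. By Definitions 7.1 and 4.3 the generalized weights and profiles of $C$ (resp. of $M$) depend on the code only through $f_{C}$ (resp. $f$) restricted to $\mathcal{I}(\mathbf{P})$, so it suffices to prove $f_{C}=f$ on all of $2^{E}$. From $C\subseteq M$ we get $C\cap\delta(J)\subseteq M\cap\delta(J)$, hence $f_{C}\leqslant f$, so only the reverse inequality needs work, and I would obtain it by comparing socles. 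For a finitely generated $A$-module $X$ let $\mathrm{soc}(X)$ be its socle; over the finite chain ring $A$ one has $\rank(X)=\length(\mathrm{soc}(X))$, and $\mathrm{soc}(X)$ consists of exactly the elements of $X$ annihilated by the Jacobson radical $\mathfrak{m}$ of $A$. Since $\delta(J)$ is a submodule, $\mathrm{soc}(C\cap\delta(J))=\mathrm{soc}(C)\cap\delta(J)$ and $\mathrm{soc}(M\cap\delta(J))=\mathrm{soc}(M)\cap\delta(J)$, so it is enough to prove $\mathrm{soc}(C)=\mathrm{soc}(M)$. One inclusion is trivial; for the other, use the structure theory of finitely generated modules over a finite chain ring to decompose $M=\bigoplus_{i=1}^{t}F_{i}$ with each $F_{i}$ free of rank one and $C=\bigoplus_{i=1}^{t}C_{i}$ with $C_{i}\subseteq F_{i}$; all $C_{i}$ are nonzero because $\rank C=\rank M=t$. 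Every nonzero submodule of a free rank-one module over a chain ring contains that module's socle (the submodule lattice being a chain with smallest nonzero member $\mathrm{soc}(F_{i})$), so $C\supseteq\bigoplus_{i=1}^{t}\mathrm{soc}(F_{i})=\mathrm{soc}(M)$; as each such element is killed by $\mathfrak{m}$, this gives $\mathrm{soc}(M)\subseteq\mathrm{soc}(C)$. Hence $f_{C}=f$, and (1), (2) follow.

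Granting (1) and (2), parts (3) and (4) are bookkeeping. By Proposition 7.4, $(E,f)$ is a demi-matroid with $f(E)=t$, so Theorem 4.2 applies to it with $w=1$ and $k=t$; the dual demi-matroid $B\mapsto f(E-B)+|B|-t$ occurring there is, by Proposition 7.5, precisely the associated demi-matroid function of $^{\bot}M$. Reading Theorem 4.2(1) through Definitions 7.1 and 4.3 gives $\mathbf{\widetilde{K}}_{l}(^{\bot}M,\mathbf{\overline{P}})=\mathbf{\widetilde{K}}_{m-l}(M,\mathbf{P})+l-t$, and replacing $\mathbf{\widetilde{K}}_{m-l}(M,\mathbf{P})$ by $\mathbf{\widetilde{K}}_{m-l}(C,\mathbf{P})$ via (2) yields (3). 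For (4), note that with $w=1$ the congruences modulo $w$ in Theorem 4.2(2) are vacuous, so for every $\gamma$ one has $\mathcal{A}_{(\gamma)}=\{\mathbf{\widetilde{d}}_{u}(M,\mathbf{P})\mid u\in[1,t]\}$ and $\mathcal{B}_{(\gamma)}=\{m+1-\mathbf{\widetilde{d}}_{v}(^{\bot}M,\mathbf{\overline{P}})\mid v\in[1,m-t]\}$; Theorem 4.2(2) says these partition $[1,m]$, and substituting $\mathbf{\widetilde{d}}_{u}(C,\mathbf{P})$ for $\mathbf{\widetilde{d}}_{u}(M,\mathbf{P})$ via (1) gives (4). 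All index sets are legitimate since $m\geqslant t$, $M$ being a free submodule of $A^{E}$ of rank $t$.

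The main obstacle is the identity $\mathrm{soc}(C)=\mathrm{soc}(M)$ in the proof of (1)--(2): it is the only step that genuinely uses that $A$ is a finite chain ring rather than an arbitrary ring, through the Smith-normal-form decomposition of a submodule of a free module. Everything afterwards --- recognizing the dual demi-matroid through Proposition 7.5 and specializing Theorem 4.2 to $w=1$ --- is routine substitution.
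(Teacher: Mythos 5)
Your argument is correct, and for parts (3) and (4) it coincides with the paper's proof (apply Theorem 4.2 with $w=1$, $k=t$ to the associated demi-matroid of $M$, identify the dual demi-matroid with that of $^{\bot}M$ via the duality proposition, then transport back to $C$ using (1) and (2)); note only that the two results you cite are Propositions 7.2 and 7.3 in the paper's numbering, not 7.4 and 7.5. Where you genuinely diverge is in (1)--(2): the paper proves (1) by importing [23, Lemma 3.3 and Theorem 3.4] (adapted from the anti-chain to a general poset via Lemma 7.2) and then deduces (2) from (1) through the Galois-connection machinery of Lemma 2.1, whereas you prove the stronger pointwise identity $\rank(C\cap\delta(J))=\rank(M\cap\delta(J))$ for every $J\subseteq E$, from which both (1) and (2) drop out simultaneously. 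Your socle argument is sound --- over the finite chain ring $A$ one has $\rank(X)=\length(\mathrm{soc}(X))$, and $\mathrm{soc}(X)\cap Z=\mathrm{soc}(X\cap Z)$ for a submodule $Z$, so everything reduces to $\mathrm{soc}(C)=\mathrm{soc}(M)$ --- but the Smith-normal-form detour is unnecessary: since $C\subseteq M$ gives $\mathrm{soc}(C)\subseteq\mathrm{soc}(M)$ and $\length(\mathrm{soc}(C))=\rank(C)=t=\rank(M)=\length(\mathrm{soc}(M))$, the two socles coincide by a length count alone. What your route buys is a self-contained, slightly stronger statement (equality of the two demi-matroid rank functions on all of $2^{E}$, not just of their weights and profiles over $\mathcal{I}(\mathbf{P})$) at the cost of invoking the structure theory of modules over finite chain rings; the paper's route outsources exactly that structure theory to [23].
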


\begin{proof}
(1)\,\,This follows from [23, Lemma 3.3 and Theorem 3.4]. We note that although these results were originally stated for GHWR, with the help of Lemma 7.2, one only needs to slightly modify the proofs in \cite{23} for poset metric.

(2)\,\,This follows from (1) and Lemma 2.1, since generalized weights and profiles with respect to rank for any linear code form a Galois connection.

(3) and (4)\,\,These two statements follow from Proposition 7.3 and Theorem 4.2, together with the proven parts (1) and (2).
\end{proof}

\setlength{\parindent}{2em}
\begin{remark}
Theorem 7.4 can be alternatively proved by using Proposition 7.3 and the Wei-type duality theorem for demi-matroids [6, Theorem 6], and can be regarded as a poset metric generalization of [23, Theorem 3.12]. We also note that Theorems 7.2 and 7.4 enable us to answer a question raised in [6, Section 3.4], namely, whether it is possible to find a poset metric generalization of Wei-type duality theorems established for codes over Galois rings and chain rings.
\end{remark}

\end{document}